\theoremstyle{plain} \numberwithin{equation}{section}
\newtheorem{theorem}{Theorem}[section]
\newtheorem{corollary}[theorem]{Corollary}
\newtheorem{lemma}[theorem]{Lemma}
\theoremstyle{plain}
\newtheorem{invariant}[theorem]{Invariant}
\newtheorem{property}[theorem]{Property}
\newtheorem{claim}{Claim}[section]
\newcommand{\C}{\mathcal{C}}
\renewcommand{\P}{\mathcal{P}}
\newcommand{\E}{\mathcal{E}}
\renewcommand{\paragraph}[1]{\medskip\noindent{\bf #1}\xspace}
\newtheorem{observation}[theorem]{Observation}
\newtheorem{algo}[theorem]{Algorithm}
\definecolor{ForestGreen}{rgb}{0.1333,0.5451,0.1333}
\definecolor{DarkRed}{rgb}{0.8,0,0}
\definecolor{Red}{rgb}{1,0,0}
\renewcommand*\backref[1]{\ifx#1\relax \else (cit. on p. #1) \fi} 
\def\danupon#1{\marginpar{$\leftarrow$\fbox{D}}\footnote{$\Rightarrow$~{\sf\textcolor{blue}{#1 --Danupon}}}}
\def\danupon#1{}
\begin{document}

\title{A New Deterministic  Algorithm for Dynamic Set Cover}

\author{Sayan Bhattacharya\thanks{University of Warwick, UK. Email: {\tt S.Bhattacharya@warwick.ac.uk}} \and Monika Henzinger\thanks{University of Vienna, Austria. Email: {\tt monika.henzinger@univie.ac.at}} \and Danupon Nanongkai\thanks{KTH Royal Institute of Technology, Sweden. Email: {\tt danupon@kth.se }}}

\date{}

\renewcommand{\L}{\mathcal{L}}


\renewcommand{\S}{\mathcal{S}}
\renewcommand{\E}{\mathcal{E}}
\newcommand{\cI}{\mathcal{I}}
\newcommand{\garbage}{\mathcal{G}}

\newcommand{\poly}{\operatorname{poly}}

\begin{titlepage}
\maketitle

\begin{abstract}
We present a deterministic dynamic algorithm for maintaining a $(1+\epsilon)f$-approximate  minimum cost set cover with $O(f\log(Cn)/\epsilon^2)$ amortized update time, when the input set system is undergoing element insertions and deletions. Here, $n$ denotes the number of elements, each element appears in at most $f$ sets, and the cost of each set lies in the range $[1/C, 1]$.  Our result, together with that  of Gupta~et~al.~[STOC`17], implies that there is a deterministic algorithm for this problem with $O(f\log(Cn))$ amortized update time and $O(\min(\log n, f))$-approximation ratio, which nearly matches the polynomial-time hardness of approximation for minimum set cover in the static setting.
Our update time is only $O(\log (Cn))$ away from a trivial lower bound.

Prior to our work, the previous best approximation ratio guaranteed by deterministic algorithms was $O(f^2)$, which was due to Bhattacharya~et~al.~[ICALP`15].  In contrast, the only result that guaranteed $O(f)$-approximation was obtained very  recently by Abboud~et~al.~[STOC`19], who designed a dynamic algorithm with  $(1+\epsilon)f$-approximation ratio and   $O(f^2 \log n/\epsilon)$ amortized update time. Besides the extra $O(f)$ factor in the update time compared to our and Gupta~et~al.'s results, the Abboud~et~al.~algorithm is {\em randomized}, and works only when the adversary is {\em oblivious} and the sets are unweighted (each set has the same cost).

We achieve our result via the primal-dual approach, by maintaining a fractional packing solution as a dual certificate. This approach was pursued  previously by Bhattacharya~et~al.~and Gupta~et~al., but not in the recent paper by Abboud~et~al. Unlike previous primal-dual algorithms that try to satisfy some  {\em local} constraints for individual sets at all time, our algorithm basically waits until the dual solution changes significantly {\em globally}, and fixes the solution only where the fix is needed.
\end{abstract}

	\newpage 
	
	\setcounter{tocdepth}{2}
	\tableofcontents

\end{titlepage}

\pagenumbering{arabic}

\newpage

\section{Introduction}
\label{sec:intro}

In the (static) set cover problem, an algorithm is given a collection $\S$ of $m$ sets over a universe $\E$ of $n$ elements such that $\cup_{s \in \S} s = \E$. Each set $s \in \S$ has a positive {\em cost}  $c_s$. After scaling these costs  by some appropriate factor, we can always get a parameter $C > 1$ such that:
\begin{equation}
\label{eq:weight}
 1/C \leq c_s \leq 1 \text{ for every set } s \in \S.
\end{equation}  
For any $\S' \subseteq \S$, let $c(\S') = \sum_{s \in \S'} c_s$ denote the total cost of all the sets in $s \in \S'$.  We say that a set $s \in \S$ {\em covers} an element $e \in \E$ iff $e \in s$. Our goal is to pick a collection of sets $\cI \subseteq \S$ with minimum total cost $c(\cI)$ so as to cover all the elements in the universe $\E$.

%

Set cover is a fundamental optimization problem that has been extensively studied in the contexts of polynomial-time approximation algorithms and online algorithms. 
In recent years, it has received significant attention in the  {\em dynamic algorithms} community as well, where the goal is to 
%
 maintain a set cover $\cal I\subseteq \S$ of small cost efficiently under a sequence of element insertions/deletions in $\E$. 
In particular, a dynamic algorithm for set cover must support the following {\em update operations}.


\danupon{The operations below are different from what we discussed last time. I used this because they are closer to previous works.}

\smallskip
\noindent {\sc Preprocess}($m$): Create $m$ empty sets in $\S$. Return $\cI = \emptyset$, and identifiers (e.g., integers) to the sets in $\S$. 

\smallskip
\noindent {\sc Insert}(${\cal F}=\{s_1, s_2, \ldots\}$): Insert to $\E$ a new element $e$ which belongs to the sets $s_1, s_2, \ldots$ (their identifiers are given as parameters). Return an identifier to the new element $e$, and the identifiers of sets that get added to and removed from $\cal I$. 

\smallskip
\noindent {\sc Delete}($e$): Delete element $e$ from $\E$.  Return the identifiers of sets that get added to and removed from $\cal I$.


\smallskip
After each update, the algorithm must guarantee that $\cI$ is a set cover; i.e. every element $e\in \E$ is in some set in $\cI$. Let $f$ and $n$ be the maximum size of $\cal F$ and $\E$, respectively, over all updates. The parameter $f$ is known as the {\em maximum frequency}.
%
%
%
%
It is usually assumed that $f, n$ and $m$ are known and fixed in the beginning, but note that our algorithm does not really need this assumption.\footnote{We mention that our algorithm do not really need the preprocessing step. When {\sc Insert}(${\cal F}=\{s_1, s_2, \ldots\}$) is called with a new set $s_i$, it can simply create $s_i$ on the fly.}
Note that dynamic set cover as defined above is a generalization of the dynamic vertex cover problem which, together with the dynamic maximum matching problem, have been studied extensively in recent years (e.g. \cite{OnakR10,BaswanaGS18,GuptaP13,BhattacharyaHN17,BhattacharyaCH17,BhattacharyaHI18-SICOMP-Matching,BhattacharyaHN16,Solomon16,NeimanS16,PelegS16,GLSSS19,BernsteinFH19,BernsteinS15,BernsteinS16,Sankowski07,BrandNS-FOCS19,AbboudW14,HenzingerKNS15}).


The performance of dynamic algorithms is mainly measured by the {\em update time}, the time to handle each  {\sc Insert} and  {\sc Delete} operation. Previous works on set cover focus on the {\em amortized update time}, where an algorithm is said to have an amortized update time of $t$ if, for any $k$, the total time it spends to process the first $k$ updates is at most $kt$. 
We also consider only the amortized update time in this paper, and simply use ``update time'' to refer to ``amortized update time''.  
The time for the {\sc Preprocess} operation is called the {\em preprocessing time}. It is typically not a big concern as long as it is polynomial.\footnote{Our algorithm requires only linear preprocessing time.}

\smallskip
 \noindent {\bf Perspective:}
Since the static set cover problem is NP-complete, it is natural to consider approximation algorithms. An algorithm has an approximation ratio of  $\alpha$ if outputs a set cover $\cI$ with $c(\cI)\leq \alpha \cdot OPT$, where $OPT$ is  cost of the optimal set cover.
%
Since the tight approximation factors for polynomial-time static set cover algorithms are $\Theta(\log n)$ and $\Theta(f)$  (e.g. \cite{DinurS14,DinurGKR05,Chvatal79,Slavik97,KhotR08}), it is natural to ask if one can also obtain these same guarantees in the dynamic setting with small update time. 
The $O(\log n)$ approximation ratio was already achieved in 2017 via greedy-like techniques by Gupta~et~al.~\cite{GuptaK0P17}. Their algorithm is deterministic and has $O(f\log n)$ update time. This update time is only $O(\log n)$ away from the trivial $\Omega(f)$ lower bound --  the time needed for specifying the sets that contain a given element (which is currently being inserted).
A similar lower bound holds even in some settings where updates can be specified with less than $f$ bits \cite{AbboudAGPS-stoc19}, e.g., when elements and sets are fixed in advance, and the updates are activations and deactivations of elements.\footnote{Abboud~et~al. \cite{AbboudAGPS-stoc19} showed that, under SETH, there is no algorithm with polynomial preprocessing time and $f^{1-\delta}$ update time for any constant $\delta>0$ when elements and sets are fixed in advance, and the updates are activations and deactivations of elements.}
%
%
The $O(f)$ approximation ratio  was recently achieved by Abboud~et~al.~\cite{AbboudAGPS-stoc19} (improving upon the approximation factors of $O(f^2)$ and higher by \cite{BhattacharyaHI18,GuptaK0P17,BhattacharyaCH17}).  Abboud~et~al.~show how to maintain an $(1+\epsilon)f$-approximation in $O(f^2\log n/\epsilon)$ amortized update time. Their algorithm, however, is randomized and does not work for the weighted case (when different sets have different costs).\footnote{A fundamental difficulty to extend Abboud~et~al.'s algorithm to the weighted case is the static algorithm it is based on. This static algorithm repeatedly picks an element $e$ that is not yet covered, and adds all sets $e$ belong to in the set cover solution. It is easy to prove that this algorithm returns an $f$-approximation in the unweighted case. It is also easy to construct an example that shows that this algorithm cannot guarantee any reasonable approximation ratio for the weighted case. } 
Like most randomized dynamic algorithms  currently existing in the literature, it works only when the future updates do not depend on the algorithm's past output -- this is also known as the {\em oblivious adversary assumption}. Removing this assumption is a central question in this area, since it may in general make many dynamic algorithms useful as subroutines inside fast static algorithms~\cite{ChuzhoyK19,NanongkaiSY-stoc19,BernsteinC-soda17,BernsteinC16,NanongkaiS17,NanongkaiSW17,BhattacharyaHN16,HenzingerKN16}. Accordingly, prior to our work,  it was natural to ask if there is an efficient $O(f)$-approximation algorithm for dynamic set cover that  is {\em deterministic} and/or can handle the {\em weighted case}.  In this paper we answer this question positively.

%
%


\begin{table*}
	\centering
	\begin{centering}
		\begin{tabular}{|c|c|c|c|c|}
			\hline 
			\textbf{Reference} & \textbf{Approximation Ratio} & \textbf{Update Time} & \textbf{Deterministic?} & \textbf{Weighted?} \tabularnewline
			\hline 
			\cite{GuptaK0P17} & $O(\log n)$	&	$O(f \log n)$ & yes & yes\\
			\cite{GuptaK0P17,BhattacharyaCH17} & $O(f^3)$	&	$O(f^2)$ & yes & yes\\
			\cite{BhattacharyaHI18} & $O(f^2)$ & $O(f \log(m + n))$ & yes & yes\\
			\cite{AbboudAGPS-stoc19} & $(1+\epsilon)f$ & $O(f^2 \log n/\epsilon)$ & no & no\\ 
			\textbf{Our result} &  $(1+\epsilon)f$ & $O(f \log (Cn)/\epsilon^{2})$ & {\bf yes} & {\bf yes}\\
			\hline
		\end{tabular}
	\end{centering}
\caption{Summary of results on dynamic set cover}
	\label{tab:1}
\end{table*}


\begin{theorem}
\label{thm:intro:main}
We can maintain a $(1+\epsilon)f$-approximate minimum-cost set cover in the dynamic setting, deterministically, with $O(f \log (Cn)/\epsilon^2)$ amortized update time, where $C$ is the ratio between the maximum and minimum set costs. 
\end{theorem}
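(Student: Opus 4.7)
The plan is to work in the primal-dual framework. The LP relaxation of minimum set cover is $\min \sum_s c_s x_s$ subject to $\sum_{s\ni e} x_s \geq 1$ and $x_s\geq 0$, with dual $\max \sum_e y_e$ subject to $\sum_{e\in s} y_e \leq c_s$ and $y_e\geq 0$. I would maintain a feasible dual $\{y_e\}_{e\in\E}$ together with the primal solution $\cI := \{\, s\in\S : \sum_{e\in s} y_e \geq c_s/(1+\eps)\,\}$ of approximately tight sets. The core invariant to preserve after every update is that each element lies in at least one such tight set; granted this, the standard primal-dual bound $c(\cI) \leq (1+\eps)\sum_{s\in\cI}\sum_{e\in s} y_e \leq (1+\eps)f\sum_e y_e \leq (1+\eps)f\cdot \mathrm{OPT}$ yields the claimed approximation ratio. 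So the task reduces to dynamically maintaining $\{y_e\}$ and $\cI$ so that feasibility and the covering invariant survive every insertion and deletion.

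For an insertion of a new element $e$ with frequency list $\mathcal{F}$, I would compute the slack $c_s - \sum_{e'\in s}y_{e'}$ for each $s\in\mathcal{F}$ and raise $y_e$ to the minimum slack; this costs $O(f)$ time, preserves dual feasibility, and makes at least one set containing $e$ tight, covering $e$. The genuine difficulty is deletion: removing $e$ returns its $y_e$ worth of slack to up to $f$ sets, possibly dropping several of them from $\cI$ and uncovering many surviving elements that used to rely on those sets. Previous primal-dual dynamic algorithms (Bhattacharya et al., Gupta et al.) repair such damage set-by-set and consequently pay an extra factor of $f$ in the approximation; I need a repair strategy that defers and batches work without ever violating feasibility.

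The main idea is lazy, level-based repair. Discretize admissible $y$-values into $L = O(\eps^{-1}\log(Cn))$ geometric levels between $\Theta(1/(Cn))$ and $1$, assign every element a level, and for each set $s$ store its current mass $m_s = \sum_{e\in s} y_e$. For each tight set $s$, maintain a drift counter that accumulates the total $y_e$ released from $s$ by deletions since the last rebuild of $s$. While the drift of $s$ stays below $\eps\cdot c_s$, declare $s$ still tight and leave its elements alone, even if some bookkeeping looks out of date. As soon as $s$'s drift crosses $\eps\cdot c_s$, trigger a rebuild: scan the $\leq f$ elements of $s$, strip their current $y$-values, and greedily re-raise them through the levels until $s$ (and, opportunistically, some of its neighbors) is tight again and the affected elements are once more covered. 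Because the updates to $\{y_e\}$ only move values within the feasible region, the dual stays feasible throughout, and the covering invariant is restored after each rebuild, so the approximation guarantee persists at every moment.

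The main obstacle I expect is the amortized analysis. The plan is a potential function $\Phi$ that deposits $\Theta(f/\eps)$ credits at each of the $L$ possible levels of every element; an insertion creates at most $O(L) = O(\log(Cn)/\eps)$ levels worth of credits and so pays the target $O(f\log(Cn)/\eps^2)$, and the $O(f)$ work of a rebuild of a set $s$ is charged to the $\Omega(\eps\cdot c_s)$ worth of mass that had to be released from $s$ by deletions before the rebuild could fire. The delicate point is coupling the two views: the ``horizontal'' amount of slack accumulated on a single set $s$ before rebuild versus the ``vertical'' level drops of its elements, showing that those level drops together free up enough credit from $\Phi$ to pay for the $O(f)$ rescan. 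Making this coupling tight, for both unweighted and weighted sets and uniformly across all $L$ levels, is where I expect most of the technical effort to lie.
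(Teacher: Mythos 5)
The high‐level framework in your proposal (primal–dual, discretized geometric levels, lazy rebuild, potential-function amortization) matches the paper's. However, the central mechanism you choose — a \emph{per-set} drift counter that fires {\sc Rebuild} on set $s$ once the $y$-mass that $s$ has lost to deletions exceeds $\eps c_s$ — is precisely the kind of \emph{local} repair rule that the paper identifies as the obstruction to an $O(f)$-approximation with small update time, and which it deliberately abandons. The paper's key idea is to trigger a rebuild \emph{globally by level}: a counter $\C_{\leq i}$ tracks, for each level $i$, the number of elements deleted from levels $\le i$ since the last rebuild of that range, and a rebuild of all of levels $0,\ldots,i$ fires only when an $\eps$-fraction of the elements at those levels have disappeared. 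Because that rebuild rewrites every set and element at levels $\le i$ simultaneously (while never touching any set or element above level $i$), there is no cascade to analyze; the amortization is a clean charge of $O(f/\eps)$ per element per level. Your per-set trigger does not have this isolation property, and that is where the argument breaks.

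Concretely, three things go wrong. First, when you strip and re-raise $y_e$ for $e\in s$, you change $m_{s'}$ for every set $s'$ sharing an element with $s$; a previously tight $s'$ can lose tightness, which uncovers elements of $s'$ that are not in $s$ at all, so the covering invariant you assert ``persists at every moment'' is not in fact restored by rebuilding $s$ alone. You either have to re-cover those elements immediately (cascading rebuilds whose cost you have no handle on) or tolerate an uncovered element, which breaks correctness. The paper's level-based rebuild sidesteps this entirely by fixing all sets at levels $\le k$ in one pass and leaving levels $>k$ frozen; elements above level $k$ keep their weights, so no set there loses tightness. Second, you write ``scan the $\le f$ elements of $s$'' and charge ``the $O(f)$ work of a rebuild of a set $s$'' — but $f$ bounds the \emph{frequency} (sets per element), not the size of $s$; a set can have $\Theta(n)$ elements, so a per-set rebuild costs $\Theta(f|s|)$, not $O(f)$. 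With the correct cost, your charge of $\Omega(\eps c_s)$ mass per rebuild does not obviously pay for it, especially since a single deletion of $e$ simultaneously credits drift to as many as $f$ different sets and could plausibly trigger many expensive rebuilds at once. Third, you treat a deletion as immediately ``returning $y_e$ worth of slack'' to its sets, i.e. you remove $y_e$ from the dual on the spot; the paper instead keeps $e$ as a \emph{dead} element in a shadow instance $(\S,\E\cup D)$ with $y_e$ intact until the next rebuild, precisely so that nothing about the maintained packing or tight-set cover changes between rebuilds. Without that shadow-instance trick, feasibility after deletions is fine but you no longer have the stable configuration the amortization argument relies on, and the ``nested'' structure (elements at levels $\le i$ are covered even if all sets at levels $\le i$ are removed) that the paper's {\sc Rebuild} depends on does not hold for an arbitrary feasible dual. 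To repair your proposal you would essentially have to replace the per-set trigger with the per-level counter $\C_{\leq i}$, introduce the passive/dead classification and shadow input, and prove the nesting property of the static uniform-increment algorithm; at that point you have the paper's algorithm.
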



Thus, we simultaneously  (a) improve upon the update time of Abboud~et~al.~\cite{AbboudAGPS-stoc19}, (b) derandomize their result, and (c) extend their result to the weighted case. 
%
Our algorithm, together with the one in Gupta~et~al.~\cite{GuptaK0P17}, settles an important open question in a line of work on dynamic set cover \cite{BhattacharyaCH17,BhattacharyaHI18,GuptaK0P17,AbboudAGPS-stoc19}.  We can now get an $O(\min(\log n, f))$-approximation using a deterministic algorithm with $O(f\log (Cn))$ update time. The approximation ratio matches the one achievable by the best possible polynomial-time static algorithm, whereas the update time is only  $O(\log (Cn))$  away from a trivial lower bound of $\Omega(f)$.


\subsection{Technical Overview}

%

\noindent 
{\bf Previous Approaches:} 
The {\em primal-dual schema} is  a powerful tool for designing many static approximation algorithms. In recent years, it has also been the main driving force behind {\em deterministic} dynamic algorithms for set cover and maximum matching (e.g. \cite{BhattacharyaHI18-SICOMP-Matching,BhattacharyaHN16,BhattacharyaCH17,BhattacharyaHI18,GuptaK0P17}), including all $O(\poly(f))$-approximations for dynamic set cover {\em except } the one by Abboud~et~al.~\cite{AbboudAGPS-stoc19}. 
%
%
%

The {\em dual} of minimum set cover  happens to be a {\em fractional packing} problem, which is defined as follows. Given a set system $(\S, \E)$ as input, we have to assign a fractional weight $w(e) \geq 0$ to every element. We want to maximize  $\sum_{e \in \E} w(e)$, subject to the following constraint:
\begin{equation}
\label{eq:dual} 
\sum_{e \in s} w(e) \leq c_s \text{ for every set } s \in \S.
\end{equation} 
For the rest of this paper, we let $W(s) = \sum_{e \in s} w(e)$ denote the total weight received by a set $s \in \S$ from all its elements. Furthermore, define $w(S) = \sum_{e \in S} w(e)$ for every subset of elements $S \subseteq \E$. Thus, the goal is to maximize $w(\E)$ subject to the constraint that $W(s) \leq c_s$ for all sets $s \in \S$. 

%
Let $OPT(\S, \E)$ denote the total cost of the minimum set cover in $(\S, \E)$. We simply use $OPT$ when $(\S, \E)$ is clear from the context.
All the previous primal-dual algorithms for  dynamic set cover  try to maintain some invariants about {\em individual} sets all the time. In particular, they are based on the following lemma.  

\begin{lemma}\label{thm:intro:dual_lowerbound}
	Consider any fractional packing $w$ in $(\S, \E)$ that satisfy~\eqref{eq:dual}. Then we have $w(\E) \leq OPT(\S, \E)$. In addition, if there exist a set-cover $\cI\subseteq \S$ of $\E$ and an $\alpha\geq 1$ such that
	\begin{align}
 W(s)\geq c_s/\alpha \text{ for all sets } s \in \cI, \medskip\label{eq:intro:invariant}
	\end{align}
	then we have $c(\cI)\leq  \alpha f \cdot w(\E) \leq \alpha f \cdot OPT(\S, \E)$.
\end{lemma}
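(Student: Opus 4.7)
The plan is to prove both inequalities by a double-counting argument on the bipartite incidence structure between elements and sets, which is essentially weak LP duality made explicit.

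For the first inequality $w(\E) \leq OPT(\S,\E)$, I would fix any optimal set cover $\cI^* \subseteq \S$ achieving cost $OPT$. Because $\cI^*$ covers every element, I can write $w(\E) = \sum_{e \in \E} w(e) \leq \sum_{s \in \cI^*} \sum_{e \in s} w(e)$, where the inequality comes from the fact that each element $e$ is counted at least once on the right-hand side (it belongs to some set in $\cI^*$). The inner sum is exactly $W(s)$, and the packing constraint~\eqref{eq:dual} gives $W(s) \leq c_s$. Summing over $s \in \cI^*$ yields $w(\E) \leq \sum_{s \in \cI^*} c_s = c(\cI^*) = OPT(\S,\E)$.

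For the second inequality, I would apply the hypothesis $c_s \leq \alpha W(s)$ for every $s \in \cI$ and sum, obtaining $c(\cI) = \sum_{s \in \cI} c_s \leq \alpha \sum_{s \in \cI} W(s) = \alpha \sum_{s \in \cI} \sum_{e \in s} w(e)$. Now I would swap the order of summation to rewrite the right-hand side as $\alpha \sum_{e \in \E} w(e) \cdot |\{s \in \cI : e \in s\}|$. Here is where the frequency bound enters: since every element belongs to at most $f$ sets in the whole set system $\S$, and $\cI \subseteq \S$, we have $|\{s \in \cI : e \in s\}| \leq f$. Therefore $c(\cI) \leq \alpha f \cdot w(\E)$, and chaining with the first inequality gives $c(\cI) \leq \alpha f \cdot OPT(\S,\E)$.

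There is no real obstacle here; the only subtlety is to use the inequalities in the right direction. In particular, for the first part the covering property of $\cI^*$ is what lets us upper-bound $w(\E)$ by a sum over $\cI^*$, while for the second part we do not need $\cI$ to be a cover at all (though it is assumed to be)---we only need that each element is over-counted at most $f$ times on the right-hand side, which is exactly the definition of the maximum frequency.
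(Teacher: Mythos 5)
Your proof is correct and takes essentially the same double-counting route the paper uses (see the proofs of Lemma~\ref{lm:static:approx} and Lemma~\ref{lem:compare:shadow}, which instantiate the same argument). The first part is standard weak LP duality, which the paper just cites via Lemma~\ref{lm:duality} without spelling it out; for the second part, the paper writes the frequency bound as $\sum_{e} w(e) \geq f^{-1}\sum_{s} W(s)$ over all of $\S$ and then drops non-tight terms, whereas you restrict to $\cI$ and swap the double sum explicitly — cosmetically different but the same counting.
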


%

All the previous dynamic primal-dual algorithms maintain a set cover $\cI$  and a fractional packing $w$  that satisfy~\eqref{eq:intro:invariant}. Needless to say, the algorithms have to  change  $\cI$ and the weights $w(e)$ of some elements $e$ in a carefully chosen manner after each update. 
For example, the algorithm of Bhattacharya~et~al.~\cite{BhattacharyaHI18} satisfies~\eqref{eq:intro:invariant} with $\alpha=O(f)$, implying an $O(f^2)$-approximation factor. 
To obtain an $O(f)$ approximation factor we need to satisfy \eqref{eq:intro:invariant} with $\alpha=O(1)$. 
However, as pointed out by Abboud~et~al. \cite{AbboudAGPS-stoc19}, it is not clear how to maintain such a strict constraint efficiently for $\alpha = O(1)$. 
The trouble is that one update may violate \eqref{eq:intro:invariant} and may cause a sequence of weight changes for many elements. This creates difficulties for bounding the update time (which typically require intricate arguments via clever potential functions).  
%
%
%

Because of this difficulty, Abboud~et~al.~opted for a different approach that is based on the following static algorithm. (i) Pick any uncovered element $e$ uniformly at random, called {\em pivot}. (ii) Include all sets containing $e$ in the set cover solution. (iii) Repeat this process until all elements are covered. 
%
%
It is easy to see that this algorithm returns an $f$-approximation for the unweighted case (when every set has the same cost). 
%
Since this approximation ratio does not hold for the weighted case in the static setting, it seems difficult to extend the approach of Abboud~et~al.~to the weighted case. 
More importantly,  in the analysis of their algorithm in the dynamic setting, Abboud~et~al.~crucially rely on  randomness and the oblivious adversary assumption. This allows them to argue that before a pivot element gets deleted, many other non-pivot elements must  also get deleted in expectation. Thus they can charge the time their algorithm needs in handling the deletion of a pivot to the (large number of) non-pivot elements that got deleted in the past. 
This type of argument was also used for maintaining a maximal matching \cite{BaswanaGS18,Solomon16}. To the best of our knowledge, there was no technique to derandomize this type of argument. In fact, all the known deterministic dynamic algorithms for set cover and matching use the primal-dual schema. This leads to a basic question: {\em Is the primal-dual approach powerful enough to give an $O(f)$-approximation algorithm for dynamic set cover?}

We answer this question in the affirmative. Unlike  previous dynamic primal-dual algorithms, which try to satisfy conditions like \eqref{eq:intro:invariant} that are {\em local} to individual sets, our algorithm basically waits until the dual solution changes significantly {\em globally}, and then it fixes the solution only where the fix is needed.



%
%

\medskip
\noindent {\bf Our Approach and the Showcase (Batch Deletion):} 
To appreciate our main idea, consider the {\em batch deletion} setting, where we have to preprocess a set system $(\S, \E)$ and then there is an update $D\subseteq \E$ that changes the set system to $(\S, \E')$, where $\E'=\E\setminus D$. Our goal is to recompute an approximately minimum set cover in the new input $(\S, \E')$ in time proportional to the size of the update (i.e., $|D|$).

Suppose that originally we have a pair $(\cI, w)$ that satisfies \eqref{eq:intro:invariant} with $\alpha=1$ for $(\S, \E)$; thus, $c(\cI)\leq  f \cdot w(\E) \leq f \cdot OPT(\S, \E)$. Clearly,  $\cI$ remains a set cover of the new set system $(\S, \E')$. To simplify things even further, suppose that the element-weights are {\em uniform},\footnote{This means that $w(e)=\delta$ for every $e \in \E$ (for some $\delta\geq 0$).} and  $|D|\leq \epsilon \cdot |\E'|$ for some $\epsilon > 0$. This implies that  $w(D) \leq \epsilon \cdot w(\E')$. So Lemma~\ref{thm:intro:dual_lowerbound} gives us: 
\begin{eqnarray*}
c(\cI)\leq  f \cdot w(\E) & = & f \cdot \left(w(\E') + w(D) \right) \\
& \leq & f (1+\epsilon) \cdot w(\E') \\
& \leq &  f (1+\epsilon) \cdot OPT(\S, \E').
\end{eqnarray*}
\begin{observation}\label{obs:intro:uniform_count}
If  $|D|\leq \epsilon \cdot |\E'|$ and the element-weights are uniform, then   $c(\cI)\leq f (1+\epsilon) \cdot OPT(\S, \E').$
%
\end{observation}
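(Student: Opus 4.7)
The plan is to chain together two applications of Lemma~\ref{thm:intro:dual_lowerbound} with a simple counting argument that exploits the uniformity of the element weights. Concretely, I would start from the hypothesis that before the batch deletion we maintain a pair $(\cI, w)$ satisfying the invariant~\eqref{eq:intro:invariant} with $\alpha = 1$ on the set system $(\S, \E)$. Invoking Lemma~\ref{thm:intro:dual_lowerbound} on $(\S, \E)$ immediately yields the upper bound $c(\cI) \leq f \cdot w(\E)$, which is the left-hand anchor of the chain I want to establish.

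Next I would split the weight of the universe as $w(\E) = w(\E') + w(D)$, since $D$ and $\E'$ partition $\E$. Here the uniformity assumption does the real work: if $w(e) = \delta$ for every element $e$, then $w(D) = \delta |D|$ and $w(\E') = \delta |\E'|$, so the assumption $|D| \leq \epsilon |\E'|$ translates directly into $w(D) \leq \epsilon \cdot w(\E')$. Substituting this into the decomposition gives $w(\E) \leq (1+\epsilon) w(\E')$.

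To convert $w(\E')$ into a lower bound on $OPT(\S, \E')$, I would apply Lemma~\ref{thm:intro:dual_lowerbound} a second time, now on the system $(\S, \E')$. The only thing one needs to check is that $w$, restricted to $\E'$, is still a feasible fractional packing for $(\S, \E')$: this is automatic, because for each $s \in \S$ the restricted sum $\sum_{e \in s \cap \E'} w(e)$ is a subsum of the original $\sum_{e \in s} w(e) \leq c_s$. Hence $w(\E') \leq OPT(\S, \E')$. Chaining everything together gives
\[
c(\cI) \;\leq\; f \cdot w(\E) \;\leq\; f(1+\epsilon) \cdot w(\E') \;\leq\; f(1+\epsilon) \cdot OPT(\S, \E'),
\]
which is exactly the claim. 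I do not anticipate any serious obstacle: the uniformity hypothesis eliminates the need for any nontrivial potential argument, and the batch deletion model trivially preserves both the set-cover property of $\cI$ and the feasibility of the dual $w$, so the proof is essentially a two-line invocation of Lemma~\ref{thm:intro:dual_lowerbound} bridged by a counting identity.
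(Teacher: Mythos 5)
Your proof is correct and follows the same route as the paper: bound $c(\cI)\leq f\cdot w(\E)$ via the invariant, decompose $w(\E)=w(\E')+w(D)$, use uniformity to translate $|D|\leq\epsilon|\E'|$ into $w(D)\leq\epsilon\,w(\E')$, and finish by applying Lemma~\ref{thm:intro:dual_lowerbound} to $(\S,\E')$. You spell out the (easy) dual-feasibility check for $w$ restricted to $\E'$, which the paper leaves implicit; otherwise the two arguments are identical.
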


In words, $\cI$ remains a good approximation to $OPT(\S, \E')$ if $|D|$ is small. Thus,  intuitively we do not need  to do anything when $|D|\leq \epsilon \cdot |\E'|$. This is already different from previous primal-dual algorithms that might have to do a lot of work, since \eqref{eq:intro:invariant} might be violated.
In contrast, when $|D|$ becomes larger than $\epsilon \cdot |\E'|$, in $O(f \cdot |\E'|)$ time  we can just compute a new pair $(\cI', w')$  satisfying~\eqref{eq:intro:invariant} with $\alpha=1$ for the set system $(\S, \E')$. 
Since we do this only after $\epsilon \cdot |\E'|$ deletions, we get an amortized update time of $O(f/\epsilon)$. 

One lesson from the above argument is this: Instead of trying to satisfy \eqref{eq:intro:invariant}, we might benefit 
from dealing with $D$ only when it is large enough, for this might help us ensure that the amortized update time remains small. 
Of course this is easy to argue under the uniform weight assumption,  which often makes the situation too simple.
%
%
The following lemma is the key towards doing something similar in the general setting. 


\begin{lemma}\label{lem:intro:new_invariant}
	Define  $s_{> x}=\{e\in s: w(e)> x\}$ for every set $s \subseteq \E$ of elements. Suppose that:  
	\begin{align}
   |D_{> x}| \leq  \epsilon \cdot |\E'_{> x}|  \ \ \text{ for all } x \geq 0. \label{eq:intro:new_invariant}
	\end{align}
Then  we have  $w(D)\leq \epsilon \cdot w(\E')$, and thus $c(\cI)\leq f (1+\epsilon) \cdot OPT(\S, \E').$
\end{lemma}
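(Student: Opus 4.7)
The plan is to reduce the lemma to the ``layer-cake'' identity $w(S)=\int_{0}^{\infty}|S_{>x}|\,dx$ for $S\subseteq\E$, and then to re-run the short approximation chain from Observation~\ref{obs:intro:uniform_count} with this identity replacing the uniform-weight count. Conceptually, \eqref{eq:intro:new_invariant} says that $D$ is $\epsilon$-small compared to $\E'$ at every weight threshold simultaneously; layered counting converts this uniform-in-$x$ statement into the single weighted statement $w(D)\le\epsilon\cdot w(\E')$.

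First I would verify the identity: for any $e\in\E$, write $w(e)=\int_{0}^{\infty}\mathbf{1}[w(e)>x]\,dx$ and swap a finite sum with the integral to obtain $w(S)=\int_{0}^{\infty}|S_{>x}|\,dx$. Applied to $S=D$ and $S=\E'$, hypothesis~\eqref{eq:intro:new_invariant} integrates pointwise in $x$ to give $w(D)\le\epsilon\cdot w(\E')$. Second, since the pair $(\cI,w)$ produced at preprocessing satisfies~\eqref{eq:intro:invariant} with $\alpha=1$ for $(\S,\E)$, Lemma~\ref{thm:intro:dual_lowerbound} gives $c(\cI)\le f\cdot w(\E)=f\bigl(w(\E')+w(D)\bigr)\le f(1+\epsilon)\cdot w(\E')$. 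Finally, $w$ restricted to $\E'$ remains a feasible fractional packing for $(\S,\E')$ — passing to a subset only shrinks the left-hand side of each constraint in~\eqref{eq:dual} — so the first half of Lemma~\ref{thm:intro:dual_lowerbound} applied to $(\S,\E')$ yields $w(\E')\le OPT(\S,\E')$, closing the chain.

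I do not expect a real obstacle. The only subtlety worth flagging is bookkeeping: $w(D), w(\E'), w(\E)$ must all refer to the same fixed weight function (the one produced at preprocessing on $\E$) so that $\E=\E'\sqcup D$ telescopes cleanly to $w(\E)=w(\E')+w(D)$; hypothesis~\eqref{eq:intro:new_invariant} is stated with respect to exactly this $w$, so this is automatic. In effect the lemma is the cleanest possible generalization of Observation~\ref{obs:intro:uniform_count} from ``count'' to ``weighted count,'' and the layer-cake identity is precisely the tool that makes the generalization transparent.
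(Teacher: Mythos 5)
Your proof is correct and takes essentially the same route as the paper: the paper's proof sketch also uses the layer-cake identity $w(e)=\int_{0}^{\infty}\mathbbm{1}(x<w(e))\,dx$, sums over $D$, and integrates the hypothesis pointwise in $x$ to get $w(D)\le\epsilon\cdot w(\E')$. Your explicit check that restricting $w$ to $\E'$ preserves feasibility in~\eqref{eq:dual}, closing the chain to $c(\cI)\le f(1+\epsilon)\cdot OPT(\S,\E')$, is a minor step the paper's sketch leaves implicit.
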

\begin{proof}[Proof sketch]
Note that  $w(e) = \int_{0}^{\infty} \mathbbm{1}(x< w(e)) dx$, where $\mathbbm{1}(x< w(e))$ is one if $x< w(e)$, and zero otherwise. Thus, we have:
\begin{eqnarray*}
 \sum_{e \in D} w(e)  & = & \int_{0}^{\infty} \sum_{e \in D} \mathbbm{1}(x< w(e)) dx \\
 & = & \int_{0}^{\infty} |\{e\in D: w(e)> x\}| dx \\
 & = & \int_{0}^{\infty} |D_{> x}| dx  \\
& \leq & \epsilon \cdot \int_{0}^{\infty} |\E'_{> x}| dx \\
& = & \epsilon \cdot \int_{0}^{\infty} |\{e\in \E': w(e)> x\}| dx \\
& = & \epsilon \cdot \sum_{e \in \E'} w(e). 
\end{eqnarray*} 
%
\end{proof}


Lemma~\ref{lem:intro:new_invariant} tells us that if  $|D_{> x}|\leq \epsilon \cdot |\E'_{> x}|$ for all $x$, then we do not have to do anything. 
When $|D_{> x}|>\epsilon |\E'_{> x}|$ for some $x$, we need to ``fix'' $\cI$. We do so by running a static algorithm on some sets and elements. The static algorithm is described in Algorithm~\ref{alg:intro:static}.  We describe how to use it in Algorithm~\ref{alg:intro:batch_deletion}. 


\begin{algo}[Static uniform-increment]\label{alg:intro:static}
	Given an input $(\hat\S, \hat\E)$,  start with $\hat w(e) \leftarrow 0$ for all $e \in \hat \E$ and $\hat\cI \leftarrow \emptyset$. Repeat the following until $\hat \cI$ covers all elements:
	(i) Raise weights $\hat w(e)$ at the same rate for every $e\in  \hat \E$ not covered by $\hat \cI$, until some sets $s$ in $\hat \S\setminus \hat \cI$ are tight (i.e. $W(s)=c_s$). (ii) Add such sets to $\hat \cI$. 
\end{algo}

\begin{algo}[Batch deletion algorithm]\label{alg:intro:batch_deletion}
	Initially, compute $(\cI, w)$ by running Algorithm~\ref{alg:intro:static} on $(\S,\E)$. To handle $D$, let $x^*$ be the minimum $x$ such that \eqref{eq:intro:new_invariant} is violated and do the following. (Do nothing if \eqref{eq:intro:new_invariant} is not violated at all.) Run Algorithm~\ref{alg:intro:static} to compute $(\hat{\cI}, \hat{w})$ for  $(\S_{> x^*}, \E'_{> x^*})$, where $\S_{> x^*}=\{s\in \S : s\cap \E_{> x^*} \neq \emptyset\}$. Set $\cI\gets (\cI\setminus \S_{> x^*})\cup \hat{\cI}$, $D=D\setminus D_{> x^*}$, and $w(e)=\hat{w}(e)$ for all $e\in \E_{> x^*}$. 	
\end{algo}

In short, Algorithm~\ref{alg:intro:batch_deletion} fixes $\cI$ by running  the static Algorithm~\ref{alg:intro:static} on  input $(\S_{> x^*}, \E'_{> x^*})$. 
We can implement Algorithm~\ref{alg:intro:static} in $O(|\S_{> x^*}|+ |\E'_{> x^*}|)=O(f|\E_{> x^*}|)=O(f|D_{> x^*}|/\epsilon)$ time approximately.\footnote{We can implement an approximate version of Algorithm~\ref{alg:intro:static}, where element weights are in the form $(1+\epsilon)^i$ for some $i$, and we say that a set $s$ is tight if $W(s)\geq c_s/(1+\epsilon)$, increasing the approximation ratio by another $(1+\epsilon)$ multiplicative factor. It is not hard to see that this can be done in $O((|\S_{> x^*}|+ |\E_{> x^*}|)\log(n)/\epsilon)$ time. We further show that the $\log n$ term can be eliminated.}
This gives an amortized update time of $O(f/\epsilon)$, since we can {\em charge} the $O(f|D_{> x^*}|/\epsilon)$ time spent  in ``fixing" $\cI$ to the elements in $D_{> x^*}$ that got deleted.
%
The lemma below implies the correctness of this strategy. 

\begin{lemma}\label{lem:intro:correctness}
(i) Let $(\cI, w)$ denote the output of Algorithm~\ref{alg:intro:static} when given $(\S, \E)$ as input. For every $x\geq 0$,  the subset of elements $\E\setminus \E_{> x}$ is covered by $\cI\setminus \S_{> x}$.
%
(ii) After processing $D$, Algorithm~\ref{alg:intro:batch_deletion}  produces $D$, $\cI$, and $w$ that satisfy \eqref{eq:intro:new_invariant}; which implies that $c(\cI)\leq (1+\epsilon) f \cdot OPT(\E')$.\danupon{It will be great if somebody can confirm that this is true and if we can point to a lemma similar to this claim in the full body.}
\end{lemma}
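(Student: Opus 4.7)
For part (i), I plan to use the structural fact that in Algorithm~\ref{alg:intro:static} the weight $w(e)$ of each element equals the ``time'' at which $e$ is first covered, and that when a set $s$ becomes tight at time $t_s$ every element $e \in s$ satisfies $w(e) \leq t_s$ (either $e$ was covered earlier, or its weight is being raised to exactly $t_s$). So if $w(e) \leq x$ and $s$ is the first set in $\cI$ to cover $e$, then $t_s = w(e) \leq x$ and every $e' \in s$ has $w(e') \leq x$, whence $s \cap \E_{>x} = \emptyset$ and $s \in \cI \setminus \S_{>x}$.

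For part (ii), the case in which no fix is triggered is immediate: the hypothesis ensures \eqref{eq:intro:new_invariant} holds, and the conclusion follows from Lemma~\ref{lem:intro:new_invariant}. When the fix runs at $x^*$, I first verify \eqref{eq:intro:new_invariant} for the updated triple in two ranges. For $x \geq x^*$, the algorithm explicitly removes $D_{>x^*}$ from $D$ and every surviving element has weight at most $x^*$, so $|D_{\mathrm{new},>x}| = 0$. For $x < x^*$, minimality of $x^*$ gives the pre-fix bound $|D_{>x}| \leq \epsilon |\E'_{>x}|$, and the violation at $x^*$ gives $|D_{>x^*}| > \epsilon |\E'_{>x^*}|$; since the old weights are unchanged on the surviving part of $D$, $|D_{\mathrm{new},>x}| = |D_{>x}| - |D_{>x^*}|$, and subtracting yields $|D_{\mathrm{new},>x}| \leq \epsilon \cdot |\{e \in \E' : x < w(e) \leq x^*\}|$; these elements keep their weights under the fix, so they lie in $\E'_{\mathrm{new},>x}$, giving $|D_{\mathrm{new},>x}| \leq \epsilon |\E'_{\mathrm{new},>x}|$ as required.

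To invoke Lemma~\ref{lem:intro:new_invariant} on the new state, I also need $\cI_{\mathrm{new}}$ to be a cover of $\E'$ and $(\cI_{\mathrm{new}}, w_{\mathrm{new}})$ to satisfy the tightness and packing preconditions that underlie the lemma. Coverage follows by combining part~(i) on the old run, which shows that $\cI \setminus \S_{>x^*}$ covers $\E \setminus \E_{>x^*}$ (hence $\E' \setminus \E'_{>x^*}$), with the fact that $\hat{\cI}$ covers $\E'_{>x^*}$ by construction. The main obstacle, and the point I expect to require the most care, is the packing condition: a set $s \in \S_{>x^*}$ may contain both elements of $\E'_{>x^*}$ (re-weighted by $\hat{w}$) and elements of $\E \setminus \E_{>x^*}$ (still weighted by the old $w$), so a naive sub-run that starts $\hat{w}$ at zero and uses the full cost $c_s$ can push $W_{\mathrm{new}}(s)$ above $c_s$. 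The clean resolution is to interpret the sub-run as using residual costs $c'_s := c_s - \sum_{e \in s \setminus \E_{>x^*}} w(e) \geq 0$ (equivalently, to initialise $\hat{W}(s)$ at the external contribution $\sum_{e \in s \setminus \E_{>x^*}} w(e)$ before raising weights). Under this interpretation every $s \in \cI_{\mathrm{new}}$ remains tight and every $s \in \S$ continues to satisfy $W_{\mathrm{new}}(s) \leq c_s$, after which Lemma~\ref{lem:intro:new_invariant} yields $c(\cI_{\mathrm{new}}) \leq (1+\epsilon) f \cdot OPT(\E')$.
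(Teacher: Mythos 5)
Your proof of part~(i) is essentially the paper's: identify the set $s$ that first becomes tight while covering $e$, observe that every element of $s$ must have stopped by that time, hence $s \cap \E_{>x} = \emptyset$. Correct and matches the intended argument.

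For part~(ii) you go substantially further than the paper's own proof sketch, and in a way that is actually needed. The paper only notes that "the subtraction does not increase $D_{>x}$ for any $x$," which by itself does not close the argument: the inequality \eqref{eq:intro:new_invariant} compares $D_{>x}$ against $\E'_{>x}$, and rerunning Algorithm~\ref{alg:intro:static} on $(\S_{>x^*}, \E'_{>x^*})$ can \emph{shrink} $\E'_{>x}$ for some $x < x^*$ if an element's new weight $\hat w(e)$ drops below $x$. Your case split ($x \geq x^*$ trivially yields $|D_{\text{new},>x}|=0$; for $x < x^*$ you subtract the pre-fix inequality at $x$ from the violated one at $x^*$ to get $|D_{\text{new},>x}| \le \epsilon\,|\{e\in\E' : x < w(e) \le x^*\}|$, and then note these elements keep their weights) sidesteps this subtlety correctly, because the witness set $\{e\in\E' : x<w(e)\le x^*\}$ is insensitive to reweighting above $x^*$. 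You also correctly flag the residual-cost issue: the intro's Algorithm~\ref{alg:intro:batch_deletion} presents the recursive call as a fresh run of Algorithm~\ref{alg:intro:static} on the subinstance, which would not keep $W(s)\le c_s$ for sets $s\in\S_{>x^*}$ that also meet $\E\setminus\E_{>x^*}$; the fix of initializing $\hat W(s)$ at the external contribution (equivalently using residual costs $c'_s$) is exactly what the paper's actual {\sc Fix-Level} subroutine does in Appendix~\ref{app:rebuild}, though the intro leaves it implicit. In short, your proof is correct and is a more careful version of the paper's terse sketch, filling two genuine gaps that the sketch elides.
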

\begin{proof}[Proof Idea]
For (i), Algorithm~(\ref{alg:intro:static}) stops raising the weight of some element $e\in \E\setminus \E_{> x}$ only when some set $s$ containing $e$ is tight. At this point we also stop raising the weight of every element in $s$, making $s\in \cI\setminus \S_{> x}$.
For (ii), an intuition is that Algorithm~\ref{alg:intro:batch_deletion} has subtracted from $D$ all $D_{>x}$ that violate \eqref{eq:intro:new_invariant}. This subtraction does not increase $D_{>x}$ for any $x$, and, thus, does not create any new violation to  \eqref{eq:intro:new_invariant}. 
\end{proof}

Lemma~\ref{lem:intro:correctness}(i) implies that we can add/remove to/from $\cI$ sets in $\S_{> x^*}$ without worrying about the coverage of elements in $\E\setminus \E_{> x^*}$ (they will be covered even when we remove all sets in $\S_{> x^*}$ from $\cI$). Consequently, we can guarantee that after processing $D$, Algorithm~\ref{alg:intro:batch_deletion} produces a set $\cI$ that covers all elements. 
Lemma~\ref{lem:intro:correctness}(ii) immediately implies the claimed approximation guarantee. Note that it is crucial to apply our new Lemma~\ref{lem:intro:new_invariant} with appropriate $D$, $\cI$, and $w$, which are changed after Algorithm~\ref{alg:intro:batch_deletion} processes $D$.


Note that running Algorithm~\ref{alg:intro:static} at the preprocessing is crucial for the correctness of Algorithm~\ref{alg:intro:batch_deletion}, as otherwise $\cI$ might not cover all elements after Algorithm~\ref{alg:intro:batch_deletion} finishes. In other words, Lemma~\ref{lem:intro:correctness}(i) might not be true if we replace Algorithm~\ref{alg:intro:static}  by some other static algorithm.\footnote{Consider, e.g., when $\E=\{e_1, \ldots, e_{12}\}\mbox{, } \S=\{s_1=\{e_1, \ldots, e_{10}\}, s_2=\{e_{10}, e_{11},e_{12}\}\} \mbox{, } c_{s_1}=c_{s_2}=10 \mbox{, and } D=\{e_{12}\}.$ If we start with all element-weights being zero except $w(e_{1})=w(e_{12})=10$, a deletion of $e_{12}$ causes $D_{>0}=\{e_{12}\}$ and $\E'_{>0}=\{e_{1}\}$, making the new set system to violate \eqref{eq:intro:new_invariant} at $x=0$. But it is not enough to change only the weight of $e_{1}$ which is the only element in $\E'_{>0}$.  Lemma~\ref{lem:intro:correctness}(i) guarantees that this will not happen if  $(\cI, w)$ is computed in a certain way as in Algorithm~\ref{alg:intro:static}.
}
\danupon{To save space, we can remove everything after this in this paragraph.}
We believe that this is key that gives the running time improvement over Abboud et al.'s algorithm, since otherwise we may have to spend more time checking if other elements remain covered. (This is essentially what happened in Abboud~et~al.'s algorithm.)\danupon{Is this correct? This is my impression from talking to Monika.}
Note that Algorithm~\ref{alg:intro:static} was also used in previous dynamic algorithms \cite{BhattacharyaCH17,BhattacharyaHI18,GuptaK0P17}, but to our knowledge it does not play a role in the correctness of those algorithms like in our algorithm.\danupon{Correct for Gupta et al.?}

One detail to mention is how Algorithm~\ref{alg:intro:batch_deletion} finds $x^*$. One simple way is to round element weights to the form $(1+\epsilon)^i$ for integers $i$. (We refer to such $i$ as the {\em level} of an element in the rest of this paper.) With this rounding, we simply have to search for $O(\log(Cn))$ different choices of $x^*$, where $C$ is defined in Theorem~\ref{thm:intro:main}.
The total update time amortized over deletions in $D$ becomes $O(\log(Cn)+f/\epsilon)$. 

\danupon{If we have time, it might be nice to try to eliminate the log term for the decremental case.}


	
\medskip\noindent{\bf The Fully-Dynamic Algorithm (Sketched).} Extending the above algorithm to handle more deletions is rather straightforward: We include a newly deleted element to $D$, check for $x^*$, and then fix  $(\S_{>x^*}, \E'_{>x^*})$ as in Algorithm~\ref{alg:intro:batch_deletion} if such a $x^*$ exists. This gives a decremental (i.e. deletion-only) algorithm with  $O(\log(Cn)+f/\epsilon)$ update time. This bound is faster than the $O(f^2/\epsilon^5)$ bound from \cite{AbboudAGPS-stoc19} when $f=\omega(\sqrt{\log n})$ and might be of an independent interest given that decremental algorithms have been heavily studied and lead to some applications (e.g. \cite{ChuzhoyK19,BernsteinC16,BernsteinC-soda17,HenzingerKN18,HenzingerKN-FOCS14,HenzingerKN-STOC14}).

Handling insertions, on the other hand, is more intricate. When an element $e$ is inserted, we set its weight $w(e)$ to the maximum possible value to make some set $s$ containing $e$ tight, i.e. $\sum_{e\in s \cap(\E\cup D)} w(e)= c_s$ (note that elements in $D$ also contribute to the weights of sets). This means that if $e$ is already in a tight set, then $w(e)=0$. Otherwise, it is increased until a new tight set $s$ is created, which will be added to $\cI$. We keep the newly inserted elements in a separate set (call it $\E'$ for now) because they do not get weights in the uniform way (like when we run Algorithm~\ref{alg:intro:static}).\footnote{In particular, we can show that weights of all elements in the set system $(\S, \E\cup D)$ are as if we run the static algorithm (Algorithm~\ref{alg:intro:static}) on this set system. We cannot say the same for $(\S, \E\cup D\cup \E')$.}
When Algorithm~\ref{alg:intro:batch_deletion} calls Algorithm~\ref{alg:intro:static} on some $(\S_{>x^*}, \E'_{>X^*})$, it will try to include in a greedy manner elements 
from $\E'$ in the uniform weight increment process and move them to $\E$. See Section~\ref{sec:dynamic} and Appendix~\ref{app:rebuild} for the details.

\danupon{To do: point to relevant sections. Mention algorithm names (e.g. Rebuild, fix levels)}
\danupon{I didn't get to say that how Abboud et al and our algorithms work on each level looks similar (but how we get levels are completely different), because I can't find a good place to discuss this and it's not clear that it's needed.}

\section{Minimum Set Cover in the Static Setting}
\label{sec:preprocessing}

In this section, we describe some basic concepts about the set cover problem {\em in the static setting}. We use the notations that were introduced in Section~\ref{sec:intro}. We start with a simple lemma that follows from LP-duality.

\begin{lemma}
\label{lm:duality}
Consider a valid set cover $\S' \subseteq \S$ and an assignment of nonnegative weights $\{w(e)\}$ to every element $e \in \E$ that satisfy~(\ref{eq:dual}), i.e.~$\{w(e)\}$ forms a valid fractional packing.  If $c(\S') \leq \alpha \cdot \sum_{e \in \E} w(e)$, then $\S'$ is an $\alpha$-approximate  minimum set cover.
\end{lemma}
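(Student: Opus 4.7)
The plan is to establish the lemma through weak LP-duality, which in this combinatorial setting reduces to a one-line counting argument. The core inequality I want to prove first is that any feasible fractional packing $\{w(e)\}$ lower-bounds the cost of every valid set cover, and in particular $\sum_{e \in \E} w(e) \le OPT$.

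First I would take an arbitrary optimal set cover $\cI^{*} \subseteq \S$ with $c(\cI^{*}) = OPT$. Since $\cI^{*}$ covers every element of $\E$, I can write
\[
\sum_{e \in \E} w(e) \;\le\; \sum_{s \in \cI^{*}} \sum_{e \in s} w(e) \;=\; \sum_{s \in \cI^{*}} W(s),
\]
where the inequality comes from the fact that each $e \in \E$ is counted at least once on the right-hand side (it lies in at least one $s \in \cI^{*}$), and each $w(e)$ is nonnegative. Then I apply the packing constraint~\eqref{eq:dual}, which says $W(s) \le c_s$ for every $s \in \S$, to deduce $\sum_{s \in \cI^{*}} W(s) \le \sum_{s \in \cI^{*}} c_s = c(\cI^{*}) = OPT$. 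Chaining these together gives $\sum_{e \in \E} w(e) \le OPT$.

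Second, combining this with the hypothesis $c(\S') \le \alpha \cdot \sum_{e \in \E} w(e)$ yields
\[
c(\S') \;\le\; \alpha \cdot \sum_{e \in \E} w(e) \;\le\; \alpha \cdot OPT,
\]
which is exactly the statement that $\S'$ is an $\alpha$-approximate minimum set cover (note that $\S'$ is already assumed to be a valid set cover).

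There is no real obstacle here: the only thing to be mindful of is using the packing constraint on each set in $\cI^{*}$ and the covering property (every element lies in some set of $\cI^{*}$) to flip the sum from being indexed over elements to being indexed over sets. Nonnegativity of the $w(e)$ is what lets the inequality survive when an element happens to belong to more than one set in $\cI^{*}$.
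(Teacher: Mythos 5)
Your proof is correct and is exactly the elementary weak-duality counting argument that the paper alludes to when it says the lemma ``follows from LP-duality'' (the paper does not spell out the details). The step of flipping the sum from elements to sets using the covering property and nonnegativity, then applying the packing constraint $W(s) \le c_s$, is the standard instantiation of weak duality for this LP pair, so there is nothing to flag.
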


\newcommand{\X}{\mathcal{X}}
\newcommand{\Y}{\mathcal{Y}}

 In Section~\ref{sec:intro}, we described  a simple static primal-dual algorithm that returns an $f$-approximate minimum set cover (see Algorithm~\ref{alg:intro:static}).  We  now consider a {\em discretized} variant of the above algorithm, which  increases the weights of elements in powers of $(1+\epsilon)$, instead of increasing these weights in a continuous manner.  This  results in a {\em hierarchical partition} of the set-system $(\S, \E)$, which assigns the sets and elements to different levels. In the Appendix, we explain  how the algorithm generates this hierarchical partition. Here, we only state some important properties of the partition and show how these properties imply a $(1+\epsilon)f$-approximation for the minimum set cover problem. For the rest of the paper, we fix two parameters  $\epsilon, L$.  
\begin{equation}
\label{eq:L}
0 < \epsilon < 1/2 \ \text{ and } \ L =  \lceil \log_{(1+\epsilon)} (C \cdot n) \rceil + 1.
\end{equation}


The  algorithm outputs a {\em hierarchical partition} of the set-system $(\S, \E)$, where each set $s \in \S$ is assigned to some {\em level} $\ell(s) \in \{0, \ldots, L\}$. The level of an element $e \in \E$ is defined as the maximum level among all the sets it belongs to, i.e., $\ell(e) = \max\{ \ell(s) : s \in \S, e \in s\}$. Note that if $e \in s$, then  $\ell(e) \geq \ell(s)$.

\medskip
\noindent 
{\bf Tight and slack sets:} Recall that $W(s) = \sum_{e \in s} w(e)$ denotes the total weight received by a set $s \in \S$ from all its elements. We say that a set $s \in \S$ is {\em tight} if $(1+\epsilon)^{-1} c_s \leq W(s) \leq c_s$ and {\em slack} if $0 \leq W(s) < (1+\epsilon)^{-1} c_s$.  The  hierarchical partition returned by the algorithm satisfies the following properties.

\begin{property}
\label{prop:element:weight:level}
For every element $e \in \E$, we have $w(e) = (1+\epsilon)^{-\ell(e)}$.
\end{property}

\begin{property}
\label{prop:set:weight:level}
Every set $s \in \S$ has  $0 \leq W(s) \leq  c_s$. Furthermore, every set $s \in \S$ that is slack has $\ell(s) = 0$. 
\end{property}

\begin{property}
\label{prop:element:level}
Every element $e \in \E$ is contained in at least one tight set. 
\end{property}

\begin{lemma}
\label{lm:static:approx}
Let $\S_{tight} = \{s \in \S : (1+\epsilon)^{-1} c_s \leq W(s) \leq c_s\}$ denote the collection of tight sets. They form a $(1+\epsilon)f$-approximate minimum set cover of the input $(\S, \E)$.
\end{lemma}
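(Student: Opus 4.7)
\medskip\noindent\textbf{Proof plan for Lemma~\ref{lm:static:approx}.} My plan is to invoke Lemma~\ref{lm:duality} with $\S' = \S_{tight}$ and with the weights $\{w(e)\}$ produced by the hierarchical partition. To do this I need to verify three things: (a) $\S_{tight}$ is a valid set cover, (b) $\{w(e)\}$ is a valid fractional packing, i.e.\ satisfies \eqref{eq:dual}, and (c) $c(\S_{tight}) \leq (1+\epsilon) f \cdot \sum_{e \in \E} w(e)$. Parts (a) and (b) are almost immediate from the stated properties of the partition; the only real content is part (c), and even that will reduce to one short double-counting calculation.

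For (a), Property~\ref{prop:element:level} says that every element $e \in \E$ lies in at least one tight set, so $\cup_{s \in \S_{tight}} s = \E$ and $\S_{tight}$ is a valid set cover. For (b), Property~\ref{prop:set:weight:level} states that $W(s) = \sum_{e \in s} w(e) \leq c_s$ for every $s \in \S$, which is exactly the packing constraint~\eqref{eq:dual}.

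For (c), I would first use the definition of ``tight'': every $s \in \S_{tight}$ satisfies $c_s \leq (1+\epsilon) W(s)$, so
\[
c(\S_{tight}) \;=\; \sum_{s \in \S_{tight}} c_s \;\leq\; (1+\epsilon) \sum_{s \in \S_{tight}} W(s) \;=\; (1+\epsilon) \sum_{s \in \S_{tight}} \sum_{e \in s} w(e).
\]
Now I would swap the order of summation. For each $e \in \E$, the number of sets $s \in \S_{tight}$ containing $e$ is at most the total number of sets containing $e$, which is at most $f$ by the frequency bound. Hence
\[
\sum_{s \in \S_{tight}} \sum_{e \in s} w(e) \;=\; \sum_{e \in \E} w(e)\cdot |\{s \in \S_{tight} : e \in s\}| \;\leq\; f \sum_{e \in \E} w(e).
\]
Combining the two displayed inequalities yields $c(\S_{tight}) \leq (1+\epsilon) f \sum_{e \in \E} w(e)$, and Lemma~\ref{lm:duality} then gives the claimed $(1+\epsilon)f$-approximation.

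I do not anticipate a real obstacle here: the stated Properties~\ref{prop:element:weight:level}--\ref{prop:element:level} were packaged precisely so that this lemma becomes a one-line consequence of LP duality plus the frequency bound. The one thing worth being careful about is that I never need the exact form $w(e) = (1+\epsilon)^{-\ell(e)}$ from Property~\ref{prop:element:weight:level} for this particular lemma; only the feasibility of $\{w(e)\}$ and the tight/slack dichotomy are used. That property will be needed elsewhere (e.g.\ to bound update time and to discretize levels), but for the approximation guarantee alone, the proof is exactly the three-step argument above.
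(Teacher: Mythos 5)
Your proof is correct and follows essentially the same route as the paper: verify that $\S_{tight}$ covers $\E$ (Property~\ref{prop:element:level}), that $\{w(e)\}$ is a feasible packing (Property~\ref{prop:set:weight:level}), then bound $c(\S_{tight}) \leq (1+\epsilon) f \sum_{e} w(e)$ by a double-counting/frequency argument and invoke Lemma~\ref{lm:duality}. The only cosmetic difference is that the paper writes the counting inequality over all $s \in \S$ before restricting to $\S_{tight}$, whereas you restrict first; the content is identical, and your observation that Property~\ref{prop:element:weight:level} is not needed here is accurate.
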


\begin{proof}
Since each element belongs to at most $f$ sets, a simple counting argument gives us:
\begin{eqnarray}
(1+\epsilon)f \cdot \sum_{e \in \E} w(e) & \geq & (1+\epsilon) \cdot \sum_{s \in \S} W(s) \nonumber \\
& \geq & \sum_{s \in \S_{tight}} (1+\epsilon) \cdot W(s) \nonumber \\
& \geq & \sum_{s \in \S_{tight}}  c_s \nonumber \\
& = & c\left(\S_{tight}\right) \label{eq:static:approx}
\end{eqnarray}
By Property~\ref{prop:element:level}, every element $e \in \E$ is covered by some set in $\S_{tight}$. In other words, the sets in $\S_{tight}$ form a valid set cover. Furthermore, by Property~\ref{prop:set:weight:level}, we have $0 \leq W(s) \leq c_s$ for all sets $s \in \S$. Accordingly, the weights $\{w(e)\}$ assigned to the elements form a valid fractional packing. From~(\ref{eq:static:approx}) and Lemma~\ref{lm:duality}, we now infer that the sets in $S_{tight}$ form a $(1+\epsilon)f$-approximate minimum set cover of the input $(\S, \E)$. 
\end{proof}

\section{Our Dynamic Algorithm}
\label{sec:dynamic}

Consider the minimum set cover problem in a dynamic setting,  where the input  $(\S,\E)$ keeps changing via a sequence of element insertions and deletions. Specifically, during each update, an element is either inserted into or deleted from the set system $(\S, \E)$. When an element $e$ is inserted, we get to know about the  sets in $\S$ that contain the element $e$. We assume that $f$ remains an upper bound on the maximum frequency of an element throughout this sequence of updates (although our dynamic algorithm does not need to know the value of $f$ in advance).  We will present a {\em deterministic} dynamic algorithm for maintaining a $(1+\epsilon)f$-approximate minimum set cover in this setting with $O(f \cdot \log (Cn)/\epsilon^2)$ amortized update time. 

\subsection{Classification of elements} 
\label{sub:sec:classification}

The main idea behind our dynamic algorithm is  simple. We maintain a  relaxed version of the hierarchical partition from Section~\ref{sec:preprocessing} in a {\em lazy manner}. To be more specific, in the preprocessing phase we start with a set-system $(\S, \E)$ where $\E = \emptyset$. At this point, every set $s \in \S$ is at  level $\ell(s) = 0$ and has a weight $W(s) = 0$, and Properties~\ref{prop:element:weight:level},~\ref{prop:set:weight:level},~\ref{prop:element:level} are vacuously true.  Subsequently, while handling the sequence of updates, whenever we observe that a significant fraction of  elements has been deleted from  levels $\leq i$ for some $i \in [0, L]$, we {\em rebuild} all the levels $\{0, \ldots, i\}$ in a certain natural manner. We refer to the subroutine which performs this rebuilding as {\sc Rebuild}($\leq i$).

\medskip
We will classify elements into three distinct types -- {\em active, passive} and {\em dead}. Let $A, P$ and $D$ respectively denote the set of active, passive and dead elements. Informally, every element is {\em active} in the hierarchical partition described in Section~\ref{sec:preprocessing}, where we considered the static setting. To get the main intuition in the dynamic setting, consider an update at some time-step $t$, and suppose that this update does {\em not} lead to a call to the subroutine   $\text{{\sc Rebuild}}(\leq i)$ for any $i \in [0, L]$. Recall that a set $s \in \S$ is called {\em tight} when its weight lies in the range $[(1+\epsilon)^{-1}c_s, c_s]$ and {\em slack}  when its weight lies in the range $[0, (1+\epsilon)^{-1}c_s)$.   As in Section~\ref{sec:preprocessing}, suppose that the tight sets  in the hierarchical partition form a valid set cover just before the update at time-step $t$ (see Lemma~\ref{lm:static:approx}). Now, consider three possible cases.

\medskip
\noindent {\em Case (a): The update at time-step $t$ deletes an element $e$.} In this case, we classify the element $e$ as {\em dead}. We continue to {\em pretend}, however, that the element $e$ still exists and do {\em not} change its weight $w(e)$. Thus, we take the value of $w(e)$ into account while calculating the weight of any set  in the fractional packing solution.  This ensures that  the collection of  tight sets  remains a valid set cover for the current input $(\S, \E)$. 

\medskip
\noindent {\em Case (b): The update at time-step $t$ inserts an element $e$ that belongs to at least one tight set.}  In this case, we assign the element $e$ to level $\ell(e) = \max\{ \ell(s) : s \in \S, e \in s\}$, classify it as {\em passive}, and assign it a weight $w(e) = 0$. This ensures that the  tight sets  continue to remain a  set cover in  $(\S, \E)$.

\medskip
\noindent {\em Case (c): The update at time-step $t$ inserts an element $e$ such that all  sets containing $e$ are slack.} In this case, Property~\ref{prop:set:weight:level} implies that every set containing the element $e$ lies at level $0$. Hence, we assign the element $e$ also to level $\ell(e) = \max\{ \ell(s) : s \in \S, e \in s\} = 0$. Unlike in Case (b), however, here we can no longer leave the hierarchical partition unchanged, since in that event the collection of  tight sets   will no longer form a valid set cover. We address this issue in the following manner. Let $\S_e$ denote the collection of sets containing $e$. Note that $\left| \S_e \right| \leq f$. Let $\lambda > 0$ be the {\em  minimum value} such that if we increase the weight of each set in $\S_e$ by an additive $\lambda$, then the weight of some set  $s \in \S_e$ becomes equal to $c_s$. We classify the element $e$ as {\em passive}, and assign it a weight $w(e) = \lambda$. This ensures that  now the collection of tight sets again forms a valid set cover. This also leads to a  very important consequence, which is stated below.

\begin{claim}
\label{cl:residual:weight}
A passive element $e$ receives a weight of $w(e) \leq (1+\epsilon)^{-\ell(e)}$ just after getting inserted.
\end{claim}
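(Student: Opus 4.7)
Passive elements arise from exactly two of the three cases in the classification of Section~\ref{sub:sec:classification}: Case (b) and Case (c). The plan is to handle each case separately, showing that in both cases $w(e) \leq (1+\epsilon)^{-\ell(e)}$ immediately after insertion.

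Case (b) is trivial: the insertion assigns $w(e) = 0$, and since $(1+\epsilon)^{-\ell(e)} \geq 0$ for every level $\ell(e)$, the inequality $w(e) \leq (1+\epsilon)^{-\ell(e)}$ holds automatically regardless of the specific level assigned to $e$.

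For Case (c), the key is to pin down $\ell(e)$ and then bound $\lambda$. By assumption, every set $s \in \S_e$ containing the newly inserted element $e$ is slack. Property~\ref{prop:set:weight:level} then forces $\ell(s) = 0$ for all $s \in \S_e$, and by the definition $\ell(e) = \max\{\ell(s) : s \in \S, e \in s\}$ we conclude $\ell(e) = 0$. Thus the target bound becomes $w(e) \leq (1+\epsilon)^0 = 1$. It now suffices to show $\lambda \leq 1$, where $\lambda$ is defined as the smallest additive increment so that some $s \in \S_e$ attains $W(s) = c_s$. Pick any $s^\star \in \S_e$; before the insertion we have $W(s^\star) \geq 0$, so raising the weight of $s^\star$ by $c_{s^\star} - W(s^\star) \leq c_{s^\star}$ already makes it tight. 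By minimality of $\lambda$ we obtain $\lambda \leq c_{s^\star} - W(s^\star) \leq c_{s^\star} \leq 1$, where the final inequality is the global cost bound~(\ref{eq:weight}). Hence $w(e) = \lambda \leq 1 = (1+\epsilon)^{-\ell(e)}$, as required.

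There is essentially no real obstacle here; the argument is a direct consequence of (i) the placement rule $\ell(s) = 0$ for slack sets (Property~\ref{prop:set:weight:level}), (ii) the definition of $\lambda$ as the \emph{minimum} increment producing tightness, and (iii) the scaling convention $c_s \leq 1$. The only point worth double-checking when writing the final proof is that $W(s^\star) \geq 0$ at the moment of insertion (so that the bound $\lambda \leq c_{s^\star}$ is valid), but this follows from Property~\ref{prop:set:weight:level} which is maintained as an invariant by the algorithm.
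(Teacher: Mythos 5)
Your proof is correct and takes essentially the same approach as the paper: Case~(b) is handled trivially via $w(e)=0$, and in Case~(c) both you and the paper use $\ell(e)=0$ together with the cost normalization~(\ref{eq:weight}) to bound $\lambda \leq c_s \leq 1 = (1+\epsilon)^{-\ell(e)}$. The only cosmetic difference is that you pick an arbitrary $s^\star \in \S_e$ and invoke minimality of $\lambda$, whereas the paper refers directly to the set $s'$ that becomes tight; both routes give the same bound.
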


\begin{proof}
In Case (b)  above, a passive element receives zero weight and  the claim trivially holds. For the rest of the proof, consider the scenario described  in Case (c) above. Recall that as per~(\ref{eq:weight}) we have $c_s \leq 1$ for every set $s \in \S$. Let $s' \in \S$ be a set containing $e$ whose weight becomes equal to $c_{s'}$ when we assign a weight of $\lambda$ to the element $e$ (see the description for Case (c) above). Thus, we must have $\lambda \leq c_{s'} \leq 1 = (1+\epsilon)^{0} = (1+\epsilon)^{\ell(e)}$.
\end{proof}


\subsection{Levels and Weights of elements}
\label{sub:sec:levels}
Throughout the duration of our algorithm, the level of an element $e$ (regardless of whether it is active, passive or dead) will be defined to be $\ell(e) = \max \{ \ell(s) : s \in \S, e \in s\}$. From the preceding discussion, we also conclude that the weights assigned to the elements satisfy the following conditions.

\smallskip
If an element $e$ is active, then   $w(e) = (1+\epsilon)^{-\ell(e)}$. In contrast, if an element $e$ is passive, then   $w(e) \leq  (1+\epsilon)^{-\ell(e)}$. Finally, if an element $e$ is dead, then  its weight  depends on its state at the time of its deletion. Specifically, if it was active at the time of its deletion, then $w(e) = (1+\epsilon)^{-\ell(e)}$. If it was passive at the time of its deletion, then $w(e) \leq (1+\epsilon)^{-\ell(e)}$. To summarize,  a dead element $e$ always has $w(e) \leq (1+\epsilon)^{-\ell(e)}$.

\subsection{The shadow input and the invariants} 
\label{sub:sec:shadow}

Recall that the set $\E$ is partitioned into two subsets, namely $A \subseteq \E$ and $P = \E \setminus A$. From the way we assign the weights to  elements, it follows  that our algorithm works by {\em pretending} as if  the dead elements {\em were} still present in the input. Accordingly, we consider an input $(\S, \E^*)$, where $\E^* = \E \cup D$. We refer to $(\S, \E^*)$   as the {\em shadow input} (as opposed to the actual input $(\S, \E)$). Indeed,  the hierarchical partition maintained by our dynamic algorithm will be similar to the one from Section~\ref{sec:preprocessing} on the shadow input $(\S, \E^*)$, barring the fact that the passive/dead elements  will have weights $w(e) \leq (1+\epsilon)^{-\ell(e)}$. To explain this more formally, we use the following notations. For every set $s \in \S$, we let $W(s) = \sum_{e \in s \cap \E} w(e)$ and $W^*(s) = \sum_{e \in s \cap \E^*} w(e)$ respectively denote the total weight of all the elements that belong to $s$ in  $(\S, \E)$ and in $(\S, \E^*)$. Our dynamic algorithm will satisfy the three invariants stated below. Invariant~\ref{inv:element:weight:level:shadow} follows from the discussion in Section~\ref{sub:sec:levels}. Invariant~\ref{inv:node:weight:shadow} is analogous to Property~\ref{prop:set:weight:level}, whereas  Invariant~\ref{inv:hyper-edge:shadow}  is analogous to Property~\ref{prop:element:level}.

\begin{invariant}
\label{inv:element:weight:level:shadow}
Consider any element $e \in \E^* = A \cup P \cup D$. The level of $e$  is defined as $\ell(e) = \max\{ \ell(s) : s \in \S, e \in s \}$.  If $e \in A$, then we have $w(e) = (1+\epsilon)^{-\ell(e)}$. Otherwise, if $e \in P \cup D$, then we have $0 \le w(e) \leq (1+\epsilon)^{-\ell(e)}$.
\end{invariant}

\begin{invariant}
\label{inv:set:weight:level:shadow}
\label{inv:node:weight:shadow}
Every set $s \in \S$ satisfies $0 \leq W^*(s) \leq c_s$. Furthermore, every set $s \in \S$ with weight $W^*(s) < (1+\epsilon)^{-1} c_s$ is at level $\ell(s) = 0$. 
\end{invariant}

\begin{invariant}
\label{inv:element:level:shadow}
\label{inv:hyper-edge:shadow}
Each element  $e \in \E \cup D$ is contained in at least one set $s \in \S$  with $(1+\epsilon)^{-1} c_s \leq W^*(s) \leq c_s$. 
\end{invariant}

Let $\S^*_{tight} \subseteq \S$ be the collection of sets with weights $(1+\epsilon)^{-1} c_s \leq W^*(s) \leq c_s$ in the hierarchical partition maintained by our algorithm. Replacing Properties~\ref{prop:set:weight:level},~\ref{prop:element:level} by Invariants~\ref{inv:node:weight:shadow},~\ref{inv:hyper-edge:shadow} in the proof of Lemma~\ref{lm:static:approx}, we conclude that  $\S^*_{tight}$ gives a $(1+\epsilon)f$-approximate minimum set cover in the shadow input $(\S, \E^*)$. Invariant~\ref{inv:hyper-edge:shadow} further implies that $\S^*_{tight}$ is a valid set cover in the actual input $(\S, \E)$. We will show later  that $\S^*_{tight}$ is in fact a $(1+O(\epsilon))f$-approximate minimum set cover in the actual input $(\S, \E)$ as well. This happens because, intuitively, our dynamic algorithm ensures that the actual input $(\S, \E)$ always remains  {\em close} to the shadow input $(\S, \E^*)$.

\subsection{The dynamic algorithm}
\label{sub:sec:algo}


 Recall that $A, P$, and $D$ respectively denote the set of active, passive and deleted elements.    We let $A_i, P_i$ and $D_i$ respectively denote the set of active, passive  and dead elements at level $i \in [0, L]$. Let $\E_i = \{ e \in \E : \ell(e) = i\}$ denote the set of all elements in the current input that are at level $i \in [0, L]$.
Thus, for each $i \in [0, L]$, the set $\E_i$ is partitioned into two subsets: $A_i$ and $P_i$. For each level $i \in [0, L]$, we also define:
\begin{eqnarray}
\E_{\leq i} = \cup_{j \leq j} \E_j, \ \  A_{\leq i} = \cup_{j \leq i} A_j, \  \ P_{\leq i} = \cup_{j \leq i} P_j, 
 \text{ and } \ D_{\leq i} = \cup_{j \leq i} D_j. \label{eq:hyper-edges}
 \end{eqnarray}
For every level  $i \in [0, L]$, we maintain a counter $\C_{\leq i}$. Each call to   $\text{{\sc Rebuild}}(\leq i)$ sets $D_{\leq i} = P_{\leq i} = \emptyset$   and $\C_{\leq j} = \epsilon \cdot \left| \E_{\leq j} \right|$ for all $j \leq i$. In contrast,  every time an element  gets deleted from some level $i$, for all $j \in [i,L]$  we decrease the counter $\C_{\leq j}$  by one. Finally,  to ensure that the shadow input $(\S, \E^*)$ remains {\em close} to the actual input $(\S, \E)$,  we call {\sc Rebuild}($\leq i$)  whenever  $\C_{\leq i}$ becomes equal to $0$. If the counters of multiple levels become 0 during the same update, we call {\sc Rebuild}($\leq i$) for the largest such level $i$.

\smallskip
\noindent {\bf Tight sets:}
As in Section~\ref{sub:sec:shadow}, we will let $\S^*_{tight} = \{ s \in \S : (1+\epsilon)^{-1} c_s \leq W^*(s) \leq c_s \}$ denote the collection of tight sets with respect to the shadow input $(\S, \E^*)$.

\smallskip
\noindent {\bf Preprocessing phase:} Initially, we have an input $(\S, \E)$ where $\E = \emptyset$. At this point,  $D = \emptyset$, every set $s \in \S$ is at level $\ell(s) = 0$ with weight $W^*(s) = 0$, and hence Invariants~\ref{inv:element:weight:level:shadow},~\ref{inv:node:weight:shadow},~\ref{inv:hyper-edge:shadow} are vacuously true.

\smallskip
\noindent {\bf Handling the  deletion of an element:} When an element $e$ gets deleted, we call the subroutine described in Figure~\ref{fig:deletion}. Steps 01 -- 02 in Figure~\ref{fig:deletion} were  explained under {\em Case (a)} in Section~\ref{sub:sec:classification}, whereas  steps 03 -- 07 in Figure~\ref{fig:deletion} were explained while defining the counters $\C_{\leq i}$.

  \begin{figure*}[htbp]
                                                \centerline{\framebox{
                                                                \begin{minipage}{5.5in}
                                                                        \begin{tabbing}                                                                            
                                                                                01. \ \ \=  Remove the element $e$ from  $\E_{\ell(e)}$, and from $A_{\ell(e)} \cup P_{\ell(e)}$. \\
                                                                                02. \> Insert the element $e$ into  $D_{\ell(e)}$.   \\
                                                                                03. \>  {\sc For} $k = L$ down to $\ell(e)$: \\
                                                                                04. \> \ \ \ \  \ \  \= $\C_{\leq k} \leftarrow \C_{\leq k} - 1$. \\
                                                                                05. \> \>  {\sc If} $\C_{\leq k} = 0$, {\sc Then} \\
                                                                                06. \> \> \ \ \ \ \ \ \ \ \= Call the subroutine {\sc Rebuild}($\leq k$). \\ 
                                                                                07. \> \> \>  RETURN.                                                                 
                                                                        \end{tabbing}
                                                                \end{minipage}
                                                        }}
                                                        \caption{\label{fig:deletion} Handling the deletion of an element $e$.}
                                                \end{figure*}

\smallskip
\noindent {\bf Handling the  insertion of an element:}
When an element $e$ gets inserted, we call the subroutine in Figure~\ref{fig:insertion}. Steps 02 -- 04 and  05 -- 09 in Figure~\ref{fig:insertion} were respectively discussed  under {\em Case (b)} and {\em Case (c)} in Section~\ref{sub:sec:classification}.

 \begin{figure*}[htbp]
                                                \centerline{\framebox{
                                                                \begin{minipage}{5.5in}
                                                                        \begin{tabbing}
                                                                                01.  \ \ \= Let $i = \max\{ \ell(s) : s \in \S, e \in s \}$. \\
                                                                                02. \> {\sc If} there is at least one set $s \in \S \cap \S^*_{tight}$ that contains $e$, {\sc Then} \\                                                                                
                                                                                03. \> \ \ \ \ \ \ \ \= $\ell(e) \leftarrow i$. \\
                                                                                04. \> \> Insert the element $e$ into $\E_i$ and into $P_{i}$, with weight $w(e) \leftarrow 0$. \\                                                                                                                                                              
                                                                                05. \> {\sc Else} \\ 
                                                                                06. \> \> Let $\S_e = \ \{  s \in \S, e \in s\}$ be the collection of all sets that contain $e$. \\
                                                                                \> \>  Let $\lambda = \min \{ x : W^*(s) + x = c_s \text{ for some } s \in \S_e  \}$. \\
                                                                                07. \> \>  $\ell(e) \leftarrow i$.    \\
                                                                                08. \> \> Insert the element $e$ into $\E_i$ and into  $P_i$, with weight $w(e) \leftarrow  \lambda$. \\
                                                                                09. \> \> {\sc For} all sets $s \in \S_e$:  $W^*(s) \leftarrow W^*(s) + w(e)$.                                                                                                                                                                                                                                                                                                                            
                                                                        \end{tabbing}
                                                                \end{minipage}
                                                        }}
                                                        \caption{\label{fig:insertion} Handling the insertion of an element $e$.}
                                                \end{figure*}

\smallskip
\noindent {\bf Output of our algorithm:}  We  maintain the collection of tight sets $\S^*_{tight} = \{ s \in \S : (1+\epsilon)^{-1} c_s \leq W^*(s) < c_s\}$. We show in Section~\ref{sec:analysis} that  $\S^*_{tight}$ is a $(1+\epsilon)f$-approximate minimum set cover in $(\S, \E)$.

\smallskip
\noindent {\bf Correctness of the invariants:} Suppose that Invariants~\ref{inv:element:weight:level:shadow},~\ref{inv:node:weight:shadow},~\ref{inv:hyper-edge:shadow} hold just before the deletion of an element $e$. This is handled by the subroutine in Figure~\ref{fig:deletion}. It is easy to check that steps 01 -- 02 in Figure~\ref{fig:deletion} do {\em not} lead to a violation of any invariant. This is because the element $e$ gets moved from  $A \cup P$ to   $D$, but its weight $w(e)$ remains the same, and it  still contributes to the weights $W^*(s)$ of all the sets $s$ containing $e$. 

\smallskip
Similarly, suppose that Invariants~\ref{inv:element:weight:level:shadow},~\ref{inv:node:weight:shadow} and~\ref{inv:hyper-edge:shadow} hold just before the insertion of an element $e$. We handle this insertion by calling the subroutine in Figure~\ref{fig:insertion}. Consider two possible cases.

\smallskip
\noindent {\em Case (1): Steps 02 -- 04 get executed in Figure~\ref{fig:insertion}.} In this case, the element $e$ becomes passive with weight $w(e) = 0$, and it belongs to at least one tight set. Thus, the weight $W^*(s)$ of every set $s \in \S$ remains unchanged, and the three invariants continue to remain satisfied. 

\smallskip
\noindent {\em Case (2): Steps 05 -- 09 get executed in Figure~\ref{fig:insertion}.} In this case, all the sets $s \in \S$ containing $e$ have weights $W^*(s) < (1+\epsilon)^{-1}c_s$ and are at level $0$ (see Invariant~\ref{inv:node:weight:shadow}) at the time  $e$ gets inserted. Let $\S'_e = \arg\min_{s \in \S_e} \{ c_s - W^*(s)\}$. After we assign weight $w(e) \leftarrow \lambda$ to the element $e$, every set $s \in \S'_e$ gets  weight $W^*(s) = c_s$, and every other set $s \in \S_e \setminus \S'_e$ continues to have   weight $W^*(s) <  c_s$ (even though its weight increased). The weights of the  sets $s \in \S \setminus \S_e$ do not change. This ensures that Invariants~\ref{inv:node:weight:shadow} and~\ref{inv:hyper-edge:shadow} continue to hold. Finally, revisiting the proof of Claim~\ref{cl:residual:weight}, we infer that Invariant~\ref{inv:element:weight:level:shadow} also continues to hold, since  $e$ becomes passive with weight $w(e) = \lambda \leq 1 = (1+\epsilon)^0 = (1+\epsilon)^{-\ell(e)}$.

\smallskip
To summarize, we conclude that if the subroutine {\sc Rebuild}($\leq j$) has the property that a call to this subroutine never leads to a violation of the invariants, then the invariants continue to hold all the time. 

\medskip
\noindent {\bf Data structures:}  We  use the following data structures.
For each level $i \in [1, L]$, we maintain the sets $\E_i, A_i, P_i$ and $D_i$ as doubly linked lists.   Each entry in each of these lists also maintains a bidirectional pointer to the corresponding element. Using these pointers, we can determine the state of a given element (e.g., whether it is active, passive or dead) and insert/delete it in a given list in $O(1)$ time. 

\smallskip
\noindent For every element $e \in \E \cup D$, we maintain its level $\ell(e)$ and weight $w(e)$. For every set $s \in \S$, we also maintain its level $\ell(s)$ and weight  $W^*(s)$ with respect to the shadow input $(\S, \E^*)$.  Finally, for every level $i \in [0, L]$, we maintain the counter $\C_{\leq i}$.

\subsection{The {\sc Rebuild($\leq k$)} subroutine}
\label{sub:sec:rebuild} 

A detailed description of the subroutine  appears in Section~\ref{app:rebuild}. Here, we summarize a few key  properties of this subroutine that will be heavily used in the analysis of our algorithm. Property~\ref{prop:rebuild:invariant} ensures that Invariants~\ref{inv:element:weight:level:shadow},~\ref{inv:node:weight:shadow},~\ref{inv:hyper-edge:shadow} do not get violated. Property~\ref{prop:rebuild:runtime} specifies the time taken to implement a call to the subroutine, and how the counters $\{\C_{\leq i}\}$ get updated as a result of this call. Property~\ref{prop:rebuild:elements}, on the other hand, explains how the subroutine changes the states and levels of different elements in the hierarchical partition. 

\begin{property}
\label{prop:rebuild:invariant}
If Invariants~\ref{inv:element:weight:level:shadow},~\ref{inv:node:weight:shadow},~\ref{inv:hyper-edge:shadow} were satisfied just before a call to the subroutine {\sc Rebuild}($\leq k$) for any $k \in [0, L]$, then these invariants continue to remain satisfied at the end of that call.
\end{property}

\begin{property}
\label{prop:rebuild:runtime}
Consider any level $k \in [0, L]$. The time taken to implement a  call to  {\sc Rebuild($\leq k$)} is proportional to $f$ times the number of elements in $\E_{\leq k} \cup D_{\leq k}$ in the beginning of the call,  plus $O(\log (Cn)/\epsilon)$. Furthermore, at the end of this call, we have $\C_{\leq j} = \epsilon \cdot \left| \E_{\leq j} \right|$ for all levels $j \in [0, k]$.
\end{property}

\begin{property}
\label{prop:rebuild:elements}
Consider any level $k \in [0, L]$ and any call to the subroutine {\sc Rebuild($\leq k$)}.

\smallskip
\noindent {\bf (1)} The call to {\sc Rebuild($\leq k$)} {\em cleans up} all the dead elements at level $\leq k$. Specifically, this means the following. Consider any element $e$ that belongs to $D_{\leq k}$ just before the call to {\sc Rebuild}($\leq k$). Then that element $e$ does {\em not} appear in any of the sets $A, P$, or $D$ at the end of the call.

\smallskip
\noindent {\bf (2)} The call to {\sc Rebuild($\leq k$)} converts some of the passive elements at level $\leq k$ to passive elements at level $k+1$, and the remaining passive elements at level $\leq k$ get converted into active elements at level $\leq k+1$. Specifically, let $Z$ denote the set of elements in $P_{\leq k}$ just before the call to {\sc Rebuild}($\leq k$). Then during the call to {\sc Rebuild}($\leq k$), a subset $Z' \subseteq Z$ of these elements  gets added to $P_{k+1}$, and the remaining elements  $e \in Z' \setminus Z$ get added to $A_{\leq k+1}$. 

\smallskip
\noindent
{\bf (3)} The call to {\sc Rebuild($\leq k$)} moves up some of the active elements at level $\leq k$ to level $k+1$, and the remaining active elements at level $\leq k$ continue to be active at level $\leq k$. In other words, the elements in $A_{\leq k}$ never go out of the set $A_{\leq k+1}$ during the call to {\sc Rebuild}($\leq k$).

\smallskip
\noindent {\bf (4)}  The call to {\sc Rebuild($\leq k$)} does not touch the elements at level $\geq k+1$. In other words, for any $i \geq k+1$, if an element $e$ belonged to $A_i$, $P_i$  or $D_i$ just before the call to {\sc Rebuild($\leq k$)}, then it continues to belong to the same set $A_i, P_i$ or $D_i$ at the end of the call to {\sc Rebuild($\leq k$)}.

\end{property}

\begin{corollary}
\label{cor:prop:rebuild:elements}
At the end of any call to {\sc Rebuild}($\leq k$), we have $D_j = P_j =  \emptyset$  for all  $j \in [0, k]$.
\end{corollary}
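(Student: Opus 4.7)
The plan is to derive the corollary directly from the four parts of Property~\ref{prop:rebuild:elements}, which is why the authors stated it as a corollary rather than a separate lemma. The content of the claim is that after the {\sc Rebuild}$(\leq k)$ call finishes, every level $j \leq k$ has been ``emptied'' of both passive and dead elements; it does not assert anything about levels $j > k$.

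First I would handle $D_j = \emptyset$ for $j \in [0,k]$. I would split the set of elements that might sit in $D_j$ at the end of the call into two groups according to their status just before the call: (a) those already dead at level $\leq k$ before the call, and (b) everything else. Group (a) is exactly $D_{\leq k}$ at the start, and part (1) of Property~\ref{prop:rebuild:elements} says every such element is cleaned up, i.e.\ removed from $A\cup P\cup D$ entirely. Elements in group (b) that lived at level $\geq k+1$ before the call remain at the same level and in the same status class by part (4), so they cannot appear in $D_{\leq k}$ at the end. Elements in group (b) that lived in $A_{\leq k}$ or $P_{\leq k}$ before the call are handled by parts (2) and (3): they become active or passive (never dead) at some level $\leq k+1$, so they also do not contribute to $D_{\leq k}$. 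Finally, the {\sc Rebuild} routine itself performs no deletions from $\E$, so no fresh dead element can be created during the call. Combining these observations gives $D_j = \emptyset$ for every $j \leq k$.

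Next I would handle $P_j = \emptyset$ for $j \in [0,k]$ by the same case split. Elements of $P_{\leq k}$ at the start of the call fall under part (2): each of them is reclassified either into $P_{k+1}$ (level strictly greater than $k$) or into $A_{\leq k+1}$ (now active, not passive), so none of them ends up in $P_{\leq k}$ at the end. Elements that were in $A_{\leq k}$ stay active by part (3), so they do not enter $P_{\leq k}$ either. Elements at level $\geq k+1$ before the call are untouched by part (4) and so cannot drop into $P_{\leq k}$. And there is no mechanism inside {\sc Rebuild} for creating a brand-new passive element at a level $\leq k$, since new passive elements arise only from the insertion routine in Figure~\ref{fig:insertion}. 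Hence $P_j = \emptyset$ for every $j \leq k$ at the end of the call, which completes the proof.

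The only subtle point, and essentially the only thing to be careful about, is ruling out ``inflow'' from levels $> k$ into $D_{\leq k}$ or $P_{\leq k}$ during {\sc Rebuild}; this is exactly what part (4) of Property~\ref{prop:rebuild:elements} is designed to preclude, so there is no genuine obstacle beyond a careful bookkeeping of the four cases above. No calculation or potential-function argument is needed.
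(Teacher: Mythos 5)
Your proof is correct and takes the same approach as the paper, which simply states that the corollary follows from parts (1) and (2) of Property~\ref{prop:rebuild:elements}. Your version is more thorough---also invoking parts (3) and (4) and the fact that {\sc Rebuild} creates no new dead or passive elements, to rule out inflow from higher levels or status changes of active elements---but the underlying idea (read the conclusion off the four structural guarantees of Property~\ref{prop:rebuild:elements}) is identical to the paper's.
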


\begin{proof}
Follows from parts (1), (2) of Property~\ref{prop:rebuild:elements}.
\end{proof}

\section{Analysis of our dynamic algorithm}
\label{sec:analysis}

We start by proving some simple properties of our algorithm that will be useful in the subsequent analysis.  These properties formalize the intuition that the fractional packing solution maintained by the algorithm does not change significantly in between two successive calls to  {\sc Rebuild}($\leq j$)  at any level $j \in [0, L]$. This happens because of three main reasons (see Figure~\ref{fig:deletion}). First, we set $\C_{\leq k} = \epsilon \cdot \left| \E_{\leq k} \right|$ for all $k \in [0, j]$ at the end of each call to {\sc Rebuild}($\leq j$). Second, we decrement the counter $\C_{\leq k}$ for all $k \in [j, L]$ each time some element gets deleted from level $j$. Third, we call  {\sc Rebuild}($\leq k$) whenever $\C_{\leq k}$ becomes equal to $0$.

\smallskip
\noindent {\bf Notation:} Throughout the rest of this section, we use the superscript $(t)$ to denote the status of some set/counter at time-step $t$. For instance, the symbol $D_{\leq j}^{(t)}$ will denote the set of dead elements at level $\leq j$ at time-step $t$, and the symbol $\C_{\leq j}^{(t)}$ will denote the value of the counter $\C_{\leq j}$ at time-step $t$.

\begin{lemma}
\label{lem:stability}
Fix any level $j \in [0, L]$ and consider any two time-steps $t' < t$ that satisfy the following properties: (1) A call was made to the subroutine {\sc Rebuild}($\leq k$) for some $k \in [j, L]$ just before time-step $t'$. (2) No call was made to {\sc Rebuild}($\leq k$) for any $k \in [j, L]$ during the time-interval $[t', t]$. Let $M_{\leq j}^{(t' \rightarrow t)}$ denote the set of elements that got deleted from level $\leq j$ during the time-interval $[t', t]$. Then we have:
$$\left| M_{\leq j}^{(t' \rightarrow t)} \right| +  \C_{\leq j}^{(t)}  = \epsilon \cdot \left| A_{\leq j}^{(t')} \right|.$$
\end{lemma}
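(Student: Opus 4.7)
The plan is to track the value of the counter $\mathcal{C}_{\leq j}$ across the interval $[t',t]$ and show that (i) it starts at $\epsilon \cdot |A_{\leq j}^{(t')}|$, and (ii) it decreases by exactly one per deletion at level $\leq j$, so at time $t$ it equals $\epsilon \cdot |A_{\leq j}^{(t')}| - |M_{\leq j}^{(t'\rightarrow t)}|$.  Rearranging yields the identity in the lemma statement.

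For step (i), I would use the fact that just before time-step $t'$ a call to {\sc Rebuild}($\leq k$) for some $k \geq j$ was executed.  By Property~\ref{prop:rebuild:runtime}, this call sets $\mathcal{C}_{\leq j'} = \epsilon \cdot |\E_{\leq j'}|$ for every $j' \leq k$, and in particular for $j' = j$.  Thus $\mathcal{C}_{\leq j}^{(t')} = \epsilon \cdot |\E_{\leq j}^{(t')}|$.  By Corollary~\ref{cor:prop:rebuild:elements} applied to the same call, $P_i^{(t')} = D_i^{(t')} = \emptyset$ for every $i \leq k$, and hence $P_{\leq j}^{(t')} = \emptyset$.  Since $\E_{\leq j}$ is partitioned as $A_{\leq j} \cup P_{\leq j}$, this gives $\E_{\leq j}^{(t')} = A_{\leq j}^{(t')}$, and therefore $\mathcal{C}_{\leq j}^{(t')} = \epsilon \cdot |A_{\leq j}^{(t')}|$.

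For step (ii), I would go through the possible operations in the interval $[t',t]$ and track their effect on $\mathcal{C}_{\leq j}$.  Insertions (Figure~\ref{fig:insertion}) never touch any of the counters, so they can be ignored.  A deletion of an element $e$ (Figure~\ref{fig:deletion}) decrements $\mathcal{C}_{\leq k}$ for every $k \in [\ell(e), L]$; in particular, $\mathcal{C}_{\leq j}$ is decremented by exactly one if $\ell(e) \leq j$, and is left unchanged otherwise.  So deletions contribute a total decrement of $|M_{\leq j}^{(t'\rightarrow t)}|$ to $\mathcal{C}_{\leq j}$.  The only other way $\mathcal{C}_{\leq j}$ can change is through a call to {\sc Rebuild}($\leq k'$) that resets it; this happens only when $k' \geq j$, which is excluded during $[t',t]$ by hypothesis~(2) of the lemma.  (Calls to {\sc Rebuild}($\leq k'$) with $k' < j$ only reset counters $\mathcal{C}_{\leq j'}$ for $j' \leq k' < j$, by Property~\ref{prop:rebuild:runtime}, so they leave $\mathcal{C}_{\leq j}$ untouched.)  Combining these observations, $\mathcal{C}_{\leq j}^{(t)} = \mathcal{C}_{\leq j}^{(t')} - |M_{\leq j}^{(t'\rightarrow t)}|$, and the identity follows.

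There is no real obstacle here; the main point to be careful about is the bookkeeping that isolates the three possible ways the counter can move (reset by a {\sc Rebuild}, decremented by a deletion, untouched by an insertion), and verifying that the hypotheses of the lemma rule out the only reset that could occur at level $j$.  The rest is a one-line arithmetic rearrangement of $\mathcal{C}_{\leq j}^{(t')} - \mathcal{C}_{\leq j}^{(t)} = |M_{\leq j}^{(t'\rightarrow t)}|$.
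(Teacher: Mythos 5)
Your proof is correct and follows essentially the same approach as the paper's: establish $\C_{\leq j}^{(t')} = \epsilon|A_{\leq j}^{(t')}|$ from the rebuild just before $t'$, then argue that during $[t',t]$ the counter is decremented exactly once per deletion at level $\leq j$ and otherwise untouched. You are slightly more careful than the paper in one spot: the paper jumps directly from Property~\ref{prop:rebuild:runtime} (which gives $\C_{\leq j}^{(t')} = \epsilon|\E_{\leq j}^{(t')}|$) to the conclusion $\C_{\leq j}^{(t')} = \epsilon|A_{\leq j}^{(t')}|$, whereas you explicitly invoke Corollary~\ref{cor:prop:rebuild:elements} to justify that $P_{\leq j}^{(t')}=\emptyset$ and hence $\E_{\leq j}^{(t')}=A_{\leq j}^{(t')}$ — a small but real gap in the paper's exposition that you filled correctly.
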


\begin{proof}
As the subroutine {\sc Rebuild}($\leq k$) was called for some  $k \in [j, L]$ just before time-step $t'$,   Property~\ref{prop:rebuild:runtime} implies that $\C_{\leq j}^{(t')} = \epsilon \cdot \left|A_{\leq j}^{(t')}\right|$. Next, note that during the time-interval $[t', t]$, no call is made to the subroutine  {\sc Rebuild}($\leq k$) for any $k \in [j, L]$. Hence, during this time-interval, the counter $\C_{\leq j}$ gets decremented by one iff  an element gets deleted from level $\leq j$ (see Figure~\ref{fig:deletion}), and the set  $M_{\leq j}^{(t' \rightarrow t)}$ consists precisely of these elements. Thus, we infer that: $\left| M_{\leq j}^{(t' \rightarrow t)} \right| + \C_{\leq j}^{(t)} = \C_{\leq j}^{(t')} = \epsilon \cdot \left| A_{\leq j}^{(t')} \right|$. 
\end{proof}

\begin{corollary}
\label{cor:stability:extra}
Consider any level $j \in [1, L]$ and  time-steps $t' < t$ as defined in Lemma~\ref{lem:stability}. Then we have:
$$\left| M_{\leq j}^{(t' \rightarrow t)} \right| \leq  \epsilon \cdot \left| A_{\leq j}^{(t')} \right|.$$
\end{corollary}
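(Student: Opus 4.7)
The plan is to derive the corollary directly from Lemma~\ref{lem:stability} by establishing the nonnegativity of $\C_{\leq j}^{(t)}$. Lemma~\ref{lem:stability} already gives us the exact identity
$$\left| M_{\leq j}^{(t' \rightarrow t)} \right| + \C_{\leq j}^{(t)} = \epsilon \cdot \left| A_{\leq j}^{(t')} \right|,$$
so the only thing left to show is $\C_{\leq j}^{(t)} \geq 0$. Once this is in hand, the stated inequality follows by simply dropping the nonnegative term $\C_{\leq j}^{(t)}$ from the left-hand side.

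To show $\C_{\leq j}^{(t)} \geq 0$, I would argue by contradiction. Suppose $\C_{\leq j}$ reaches $0$ at some moment in $[t', t]$. Inspecting the deletion routine in Figure~\ref{fig:deletion}, the counter $\C_{\leq j}$ is decremented only inside the loop of line~03 when an element $e$ with $\ell(e) \leq j$ is deleted (since the loop index $k$ ranges from $L$ down to $\ell(e)$, and must include $k = j$ only when $\ell(e) \leq j$). Right after such a decrement, line~05 checks whether the counter has hit $0$, and if so line~06 invokes {\sc Rebuild}($\leq k$). So if $\C_{\leq j}$ ever hit $0$ during $[t', t]$, either the triggering index $k$ equals $j$, yielding a call to {\sc Rebuild}($\leq j$), or some higher counter $\C_{\leq k}$ with $k > j$ reached $0$ earlier in the same for-loop (the loop goes top-down), yielding a call to {\sc Rebuild}($\leq k$) with $k \in (j, L]$. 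In either case we would have invoked {\sc Rebuild}($\leq k$) for some $k \in [j, L]$ during $[t', t]$, contradicting the hypothesis of the corollary. Hence $\C_{\leq j}^{(t)} \geq 0$, and combining with the identity from Lemma~\ref{lem:stability} completes the proof.

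I do not anticipate any substantial obstacle here: the corollary is essentially the ``inequality form'' of Lemma~\ref{lem:stability}, and the only content beyond the lemma is the bookkeeping observation that the algorithm's triggering rule for {\sc Rebuild} guarantees the counter stays nonnegative in any interval where no such call occurs at a relevant level.
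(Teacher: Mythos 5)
Your proof is correct and takes essentially the same approach as the paper: both establish $\C_{\leq j}^{(t)} \geq 0$ by appealing to the {\sc Rebuild} triggering rule and then combine with the identity in Lemma~\ref{lem:stability}. The paper states this slightly more succinctly as a global invariant (the counter $\C_{\leq j}$ always remains nonnegative because it is reset to $\epsilon\lvert\E_{\leq j}\rvert$ whenever it reaches zero), whereas you phrase it as a local contradiction within $[t',t]$; the second branch of your case analysis (a higher counter hitting zero first and aborting the loop) is not actually a way for $\C_{\leq j}$ to reach zero and is therefore redundant, but it does no harm.
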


\begin{proof}
Note that the subroutine {\sc Rebuild}($\leq j$) gets called whenever $\C_{\leq j} = 0$ (see Figure~\ref{fig:deletion}), and before finishing its execution the subroutine resets $\C_{\leq j} = \epsilon \cdot \left| \E_{\leq j} \right|$ (see Property~\ref{prop:rebuild:runtime}). Thus,  the counter $\C_{\leq j}$ always remains nonnegative, and in particular we have $\C_{\leq j}^{(t)} \geq 0$. The corollary now follows from Lemma~\ref{lem:stability}.
\end{proof}

\begin{corollary}
\label{cor:stability}
Consider any level $j \in [1, L]$ and  time-steps $t' < t$ as defined in Lemma~\ref{lem:stability}. Then we have:
$$\left| D_{\leq j}^{(t)} \right| \leq \epsilon \cdot \left| A_{\leq j}^{(t')} \right|.$$
\end{corollary}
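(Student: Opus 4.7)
The plan is to show that $D_{\leq j}^{(t)} \subseteq M_{\leq j}^{(t' \rightarrow t)}$, at which point the corollary follows immediately from Corollary~\ref{cor:stability:extra}.

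First, I would observe that just after the call to {\sc Rebuild}($\leq k$) that precedes time $t'$, we have $D_{\leq j}^{(t')} = \emptyset$. This is because $j \leq k$, and by Corollary~\ref{cor:prop:rebuild:elements} the call leaves $D_i = \emptyset$ for every $i \in [0,k]$, hence in particular for every $i \in [0,j]$.

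Second, I would argue that between time $t'$ and time $t$, the set $D_{\leq j}$ can only grow via genuine deletions of elements currently at level $\leq j$. By hypothesis, no {\sc Rebuild}($\leq k'$) with $k' \geq j$ is invoked during the interval $[t',t]$, so the only relevant events are (a) element deletions, which by the code in Figure~\ref{fig:deletion} insert the deleted element $e$ into $D_{\ell(e)}$, and (b) calls to {\sc Rebuild}($\leq k'$) with $k' < j$, which by Property~\ref{prop:rebuild:elements}(1) only \emph{remove} elements from $D_{\leq k'} \subseteq D_{\leq j}$ and never add any. Moreover, by Property~\ref{prop:rebuild:elements}(4) these lower-level rebuilds do not touch elements at level $\geq k'+1$, so they cannot displace a dead element from a level $> j$ down into $D_{\leq j}$. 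Consequently every element of $D_{\leq j}^{(t)}$ must have entered $D_{\leq j}$ through a deletion at some level $\leq j$ during $[t', t]$, i.e.\ it must lie in $M_{\leq j}^{(t'\rightarrow t)}$.

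The main thing to be careful about is making sure there is no ``hidden'' mechanism by which a dead element can migrate into $D_{\leq j}$ without being counted in $M_{\leq j}^{(t'\rightarrow t)}$---in particular, that dead elements do not get relabeled with a smaller level during an intermediate lower-level rebuild. Property~\ref{prop:rebuild:elements}(1) and~(4) together rule this out. Given the inclusion $D_{\leq j}^{(t)} \subseteq M_{\leq j}^{(t'\rightarrow t)}$, we conclude
\[
\bigl|D_{\leq j}^{(t)}\bigr| \;\leq\; \bigl|M_{\leq j}^{(t'\rightarrow t)}\bigr| \;\leq\; \epsilon \cdot \bigl|A_{\leq j}^{(t')}\bigr|,
\]
where the final inequality is Corollary~\ref{cor:stability:extra}.
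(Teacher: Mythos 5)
Your proof is correct and follows essentially the same route as the paper's: you establish $D_{\leq j}^{(t')} = \emptyset$ via Corollary~\ref{cor:prop:rebuild:elements}, argue that during $[t',t]$ only genuine deletions at level $\leq j$ can add to $D_{\leq j}$ (while intermediate {\sc Rebuild}($\leq k'$) calls with $k'<j$ only remove), conclude $D_{\leq j}^{(t)} \subseteq M_{\leq j}^{(t'\rightarrow t)}$, and finish with Corollary~\ref{cor:stability:extra}. The only small difference is that you additionally invoke Property~\ref{prop:rebuild:elements}(4) to explicitly rule out a dead element at some level $> j$ being relabeled down into $D_{\leq j}$ by a lower-level rebuild---a corner case the paper's proof leaves implicit---which is a reasonable and correct extra precaution.
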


\begin{proof}
Since the subroutine {\sc Rebuild}($\leq k$) was called for some $k \in [j, L]$ just before time-step $t'$, Corollary~\ref{cor:prop:rebuild:elements} implies that $D_{\leq j}^{(t')} = \emptyset$. We now track how the set $D_{\leq j}$ changes during the time-interval $(t', t)$. 

Whenever an element $e$ gets deleted from level $\leq j$ during this time-interval, the element $e$ gets added to both the sets $D_{\leq j}$ and $M_{\leq j}^{(t' \rightarrow t)}$. On the other hand, whenever the subroutine {\sc Rebuild}($\leq k$) gets called for some $k \in [1, j-1]$, all the dead elements at level $\leq k$ get removed from the hierarchical partition (see part (1) of Property~\ref{prop:rebuild:elements}). Since $k < j$, such a call to {\sc Rebuild}($\leq k$) can potentially remove some elements from the set $D_{\leq j}$, but no element from $M_{\leq j}^{(t' \rightarrow t)}$ gets removed due to the call.

Since no call is made to the subroutine {\sc Rebuild}($\leq k$) for any $k \in [j, L]$ during the time-interval $[t', t]$, and since $D_{\leq j} = \emptyset$ at time-step $t'$, the preceding discussion implies that $D_{\leq j}^{(t)} \subseteq M_{\leq j}^{(t' \rightarrow t)}$. Thus, from Corollary~\ref{cor:stability:extra}, we get $\left| D_{\leq j}^{(t)} \right|  \leq \left| M_{\leq j}^{(t' \rightarrow t)} \right| \leq \epsilon \cdot \left| A_{\leq j}^{(t')}\right|$. 
\end{proof}

\begin{lemma}
\label{lem:stability:active}
Consider any level $j \in [1, L]$ and  time-steps $t' < t$ as defined in Lemma~\ref{lem:stability}. Then we have:
$$\left|A_{\leq j}^{(t)}\right| \geq (1- \epsilon) \cdot \left| A_{\leq j}^{(t')} \right|.$$
\end{lemma}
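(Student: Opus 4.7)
The plan is to track how the set $A_{\leq j}$ evolves during the interval $[t',t]$ and show that the only mechanism that can shrink $A_{\leq j}$ is a deletion of an element at level $\leq j$ (and each such deletion removes at most one element from $A_{\leq j}$). Once this monotonicity-style claim is established, the bound follows immediately from Corollary~\ref{cor:stability:extra}.

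First, I would enumerate the possible events that modify $A_{\leq j}$ during $[t',t]$. By Figure~\ref{fig:insertion}, every inserted element is added to $P$ rather than to $A$, so insertions neither add to nor remove from $A_{\leq j}$. By Figure~\ref{fig:deletion}, when an element at level $\leq j$ is deleted it leaves $A \cup P$ and enters $D$; hence $|A_{\leq j}|$ drops by one precisely when the deleted element was active, and in particular the total drop due to deletions is at most $|M_{\leq j}^{(t' \rightarrow t)}|$. Finally, by the hypothesis of Lemma~\ref{lem:stability}, no call to $\text{{\sc Rebuild}}(\leq k)$ with $k \geq j$ occurs in $[t',t]$, so the only {\sc Rebuild} calls to consider are those with $k < j$.

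Next I would argue that a call to $\text{{\sc Rebuild}}(\leq k)$ with $k < j$ can only grow $A_{\leq j}$. Part~(4) of Property~\ref{prop:rebuild:elements} says the call does not touch elements at level $\geq k+1$, so active elements at levels $k+1,\ldots,j$ remain in $A_{\leq j}$. Part~(3) says that the active elements in $A_{\leq k}$ stay inside $A_{\leq k+1}$, and since $k+1 \leq j$ this keeps them inside $A_{\leq j}$. Part~(2) says that some passive elements at level $\leq k$ are converted into active elements at level $\leq k+1 \leq j$, which only increases $|A_{\leq j}|$. Part~(1) merely removes dead elements and does not affect $A$.

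Combining these two observations, the net effect on $|A_{\leq j}|$ during $[t',t]$ is
\[
|A_{\leq j}^{(t)}| \;\geq\; |A_{\leq j}^{(t')}| - |M_{\leq j}^{(t' \rightarrow t)}|.
\]
Applying Corollary~\ref{cor:stability:extra} to bound $|M_{\leq j}^{(t' \rightarrow t)}| \leq \epsilon \cdot |A_{\leq j}^{(t')}|$ yields the desired $|A_{\leq j}^{(t)}| \geq (1-\epsilon) \cdot |A_{\leq j}^{(t')}|$. The only mildly delicate part is verifying case-by-case that no {\sc Rebuild} or insertion event can silently shrink $A_{\leq j}$; once Property~\ref{prop:rebuild:elements} is invoked carefully (especially parts~(3) and~(4) to handle elements already in $A_{\leq j}$ at higher sublevels), the rest is direct bookkeeping.
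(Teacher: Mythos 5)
Your proposal is correct and follows essentially the same argument as the paper: identify deletions as the only operation that shrinks $A_{\leq j}$ (using Property~\ref{prop:rebuild:elements} to dispose of {\sc Rebuild} calls at levels below $j$), conclude $A_{\leq j}^{(t')} \setminus A_{\leq j}^{(t)} \subseteq M_{\leq j}^{(t'\to t)}$, and then invoke Corollary~\ref{cor:stability:extra}. You are in fact slightly more careful than the paper, which cites only part~(3) of Property~\ref{prop:rebuild:elements}, whereas you correctly note that part~(4) is also needed to guarantee that active elements already at levels $k+1,\ldots,j$ are left untouched.
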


\begin{proof}
According to part (3) of Property~\ref{prop:rebuild:elements}, a call to  {\sc Rebuild}($\leq k$) for some $k \in [1, j-1]$ can never decrease the size of the set $A_{\leq j}$. Since no call was made to {\sc Rebuild}($\leq k$) for any $k \in [j, L]$ during the time-interval $[t', t]$, we conclude that: 
 During the time-interval $[t', t]$, the set $A_{\leq j}$ can decrease in size only via deletion of elements from level $\leq j$. Moreover, the set $M_{\leq j}^{(t' \rightarrow t)}$   contains all these deleted elements. Thus, we infer that: $A_{\leq j}^{(t')} \setminus A_{\leq j}^{(t)} \subseteq M_{\leq j}^{(t' \rightarrow t)}$. Applying Corollary~\ref{cor:stability:extra}, we now get:
$\left| A_{\leq j}^{(t')} \setminus   A_{\leq j}^{(t)} \right| \leq \left| M_{\leq j}^{(t' \rightarrow t)} \right| \leq  \epsilon \cdot \left|A_{\leq j}^{(t')}\right|$. In words, at most an $\epsilon$ fraction of the elements get deleted from the set $A_{\leq j}$ during the time-interval $[t', t]$. Hence, it follows that $\left| A_{\leq j}^{(t)} \right| \geq (1-\epsilon) \cdot \left| A_{\leq j}^{(t')} \right|$.
\end{proof}

\begin{corollary}
\label{cor:number:dirty:nodes}
At any time-step $t$  and any level $j \in [0, L]$ we have
$\left| D_{\leq j}^{(t)} \right| \leq 2\epsilon \cdot \left| A_{\leq j}^{(t)} \right|$.
\end{corollary}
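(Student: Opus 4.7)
The plan is to reduce the statement to the two stability lemmas by identifying, for the given time-step $t$ and level $j$, an earlier ``reference'' time-step $t'$ at which the hypotheses of Lemma~\ref{lem:stability} are in force. Concretely, I would let $t'$ be the latest time-step in $[0,t]$ at which a call to {\sc Rebuild}$(\leq k)$ for some $k \in [j,L]$ has just completed. By the choice of $t'$, no such call is made during the open interval $(t', t]$, so the pair $(t', t)$ satisfies both conditions of Lemma~\ref{lem:stability} at level $j$, and the derived statements in Corollary~\ref{cor:stability} and Lemma~\ref{lem:stability:active} are available.

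I would then combine those two bounds directly. Corollary~\ref{cor:stability} gives
\[
\left| D_{\leq j}^{(t)} \right| \leq \epsilon \cdot \left| A_{\leq j}^{(t')} \right|,
\]
while Lemma~\ref{lem:stability:active} gives $\left| A_{\leq j}^{(t)} \right| \geq (1-\epsilon) \cdot \left| A_{\leq j}^{(t')} \right|$, i.e.\ $\left| A_{\leq j}^{(t')} \right| \leq \left| A_{\leq j}^{(t)} \right|/(1-\epsilon)$. Chaining these and using $\epsilon < 1/2$ from~\eqref{eq:L}, which yields $1/(1-\epsilon) \leq 2$, I obtain
\[
\left| D_{\leq j}^{(t)} \right| \leq \frac{\epsilon}{1-\epsilon} \cdot \left| A_{\leq j}^{(t)} \right| \leq 2\epsilon \cdot \left| A_{\leq j}^{(t)} \right|,
\]
which is exactly the claim.

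The one subtlety I expect is justifying that the reference time-step $t'$ always exists (and handling the degenerate case $j=0$, for which the cited lemmas are stated at $j \in [1,L]$). For the existence issue, I would argue by contradiction: if no call to {\sc Rebuild}$(\leq k)$ with $k \geq j$ has ever been triggered by time $t$, then every deletion of an element at level $\leq j$ during $[0,t]$ decrements $\C_{\leq j}$ by one (see Figure~\ref{fig:deletion}), and since $\C_{\leq j}$ is never allowed to go below $0$ without triggering {\sc Rebuild}$(\leq j)$, the set $D_{\leq j}^{(t)}$ must in fact be empty, and the corollary holds trivially. For the $j=0$ boundary case, the same derivation applies verbatim (the proofs of Corollary~\ref{cor:stability} and Lemma~\ref{lem:stability:active} never use $j \geq 1$ in any essential way; the restriction in their statements is cosmetic), so no separate argument is needed beyond noting this. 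With $t'$ in hand, the proof is a one-line computation, and the only real content is invoking the two previously established stability bounds in tandem.
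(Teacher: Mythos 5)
Your proof is correct and essentially identical to the paper's: the same choice of reference time $t'$ (the last time a {\sc Rebuild}($\leq k$) with $k \geq j$ completed), followed by chaining Corollary~\ref{cor:stability} with Lemma~\ref{lem:stability:active} and using $\epsilon/(1-\epsilon) \le 2\epsilon$ for $\epsilon \le 1/2$. The additional remarks you make about existence of $t'$ and the $j=0$ boundary address points the paper leaves implicit; the cleanest resolution of the former is simply to regard the preprocessing step as an initial call to {\sc Rebuild}($\leq L$), which makes a valid $t'$ always available and avoids the slightly shaky contradiction argument about the counter never going negative (the equality test ``$\C_{\leq k}=0$'' in Figure~\ref{fig:deletion} does not literally forbid that).
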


\begin{proof}
Fix any level $j \in [0, L]$ and  time-step $t$. We will show that the lemma holds for level $j$ at time-step $t$. Let $t' < t$ be the last time-step before $t$ with the following property:  a call was made to  {\sc Rebuild}($\leq k$) for some $k \in [j, L]$ just before time-step $t'$. Thus,  during the time-interval $[t', t]$ no call was made to {\sc Rebuild}($\leq k$) for any $k \in [j, L]$. Hence, Corollary~\ref{cor:stability} and Lemma~\ref{lem:stability:active} imply that:
\begin{eqnarray*}
\left|D_{\leq j}^{(t)}\right| & \leq & \epsilon \cdot \left|A_{\leq j}^{(t')}\right|  \\
& = & \left(\epsilon/(1-\epsilon)\right) \cdot (1-\epsilon) \left|A_{\leq j}^{(t')}\right| \\
& \leq & \left(\epsilon/(1-\epsilon)\right) \cdot \left|A_{\leq j}^{(t)}\right| \\
& \leq & 2 \epsilon \cdot \left|A_{\leq j}^{(t)}\right|.
\end{eqnarray*}
 The last inequality holds as long as $\epsilon \leq 1/2$. Thus, we infer that $|D_{\leq j}^{(t)}| \leq 2\epsilon \cdot |A_{\leq j}^{(t)}|$ at time-step $t$.
\end{proof}

\subsection{Bounding the update time of our dynamic algorithm}
\label{sec:update:time}

\newcommand{\I}{\mathcal{I}}

\begin{theorem}
\label{th:update:time}
Our dynamic algorithm has an amortized update time of $O(f \log (Cn)/\epsilon^2)$.
\end{theorem}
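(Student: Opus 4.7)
The plan is an amortized analysis via the potential function
\[
\Phi \;=\; \frac{cf}{\epsilon}\sum_{k=0}^{L}\bigl(\epsilon\,|\E_{\leq k}|\;-\;\C_{\leq k}\bigr),
\]
for a sufficiently large absolute constant $c$. The first step is to verify that each summand $\Phi_k := (cf/\epsilon)(\epsilon|\E_{\leq k}|-\C_{\leq k})$ is nonnegative throughout the execution. Immediately after any call to {\sc Rebuild}($\leq i$) with $i\geq k$, Property~\ref{prop:rebuild:runtime} sets $\C_{\leq k} = \epsilon|\E_{\leq k}|$, so $\Phi_k=0$. Between two such calls, no {\sc Rebuild}($\leq i$) with $i\geq k$ fires; the intervening {\sc Rebuild}($\leq i$) calls with $i<k$ alter neither $|\E_{\leq k}|$ nor $\C_{\leq k}$ by Property~\ref{prop:rebuild:elements}(4); each insertion at level $\leq k$ increases $\epsilon|\E_{\leq k}|-\C_{\leq k}$ by $\epsilon$; each deletion at level $\leq k$ increases it by $1-\epsilon$; and operations restricted to levels $>k$ do nothing to it. Hence $\Phi_k$ is nondecreasing between resets and so $\Phi_k\geq 0$ at all times.

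Next I would charge each type of operation against the potential. An insertion of an element at level $\ell(e)$ does $O(f)$ actual work (scanning the $\leq f$ sets containing $e$ and possibly updating their $W^*$ values in Case~(c) of Figure~\ref{fig:insertion}), and it raises $|\E_{\leq k}|$ by one for each of the at most $L+1$ levels $k\geq \ell(e)$, contributing $O(fL)$ to $\Delta\Phi$ and giving amortized cost $O(fL)$. A deletion that does \emph{not} trigger {\sc Rebuild} does $O(L)$ counter-decrement work (Figure~\ref{fig:deletion}); it decrements both $|\E_{\leq k}|$ and $\C_{\leq k}$ by one for each affected level, contributing $(cf/\epsilon)(1-\epsilon)$ per level, for a total $\Delta\Phi=O(fL/\epsilon)$ and amortized cost $O(fL/\epsilon)$.

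The crux of the argument is a deletion that triggers {\sc Rebuild}($\leq k$). Its Rebuild cost is $O\bigl(f(|\E_{\leq k}|+|D_{\leq k}|)\bigr)+O(\log(Cn)/\epsilon)$ by Property~\ref{prop:rebuild:runtime}; Corollary~\ref{cor:number:dirty:nodes} gives $|D_{\leq k}|\leq 2\epsilon|A_{\leq k}|\leq 2\epsilon|\E_{\leq k}|$, so this simplifies to $O(f|\E_{\leq k}|)+O(L/\epsilon)$. Because the call is triggered exactly when $\C_{\leq k}=0$, just before the call $\Phi_k=cf|\E_{\leq k}|$. Right after the call, Property~\ref{prop:rebuild:runtime} resets $\C_{\leq k'}=\epsilon|\E_{\leq k'}|$ for every $k'\leq k$, so $\Phi_{k'}=0$ for all those $k'$; and for $k'>k$, Property~\ref{prop:rebuild:elements}(4) together with the fact that elements moved out of level $\leq k$ travel only to level $k+1$ (which still lies inside $\E_{\leq k'}$) implies that $|\E_{\leq k'}|$ and $\C_{\leq k'}$ are both unchanged, so $\Phi_{k'}$ is unchanged. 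Consequently $\Delta\Phi\leq -cf|\E_{\leq k}|$ across the Rebuild, which cancels the $O(f|\E_{\leq k}|)$ actual Rebuild cost once $c$ is chosen large enough, leaving an amortized Rebuild cost of $O(L/\epsilon)=O(\log(Cn)/\epsilon^2)$.

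Combining the three cases yields amortized cost $O(fL/\epsilon)=O(f\log(Cn)/\epsilon^2)$ per update, which is the desired bound in Theorem~\ref{th:update:time}. The main obstacle I foresee is the reset bookkeeping: one has to verify carefully that nested {\sc Rebuild} calls at lower levels neither destroy the invariant $\Phi_k\geq 0$ for higher $k$ nor prematurely release potential that is earmarked for a future higher-level Rebuild. Both points ultimately reduce to Property~\ref{prop:rebuild:elements}(4), which guarantees that lower-level Rebuilds leave the higher-level quantities $|\E_{\leq k}|$ and $\C_{\leq k}$ intact.
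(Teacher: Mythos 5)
Your proof is correct, and it takes a genuinely different route from the paper's: you package the amortization into an explicit potential function
\[
\Phi = \frac{cf}{\epsilon}\sum_{k=0}^{L}\bigl(\epsilon\,|\E_{\leq k}| - \C_{\leq k}\bigr),
\]
whereas the paper uses a direct credit-charging argument that, for each call to {\sc Rebuild}($\leq k$), distributes the cost $T_1 = O(f|A_{\leq k}^{(t')}|)$ evenly over the $\epsilon|A_{\leq k}^{(t')}|$ deletions since the last Rebuild at level $\geq k$ and $T_2 = O(f|\I_{\leq k}^{(t'\to t)}|)$ over the insertions, then observes that each element is charged at most once per level. The two approaches encode the same structural facts — Property~\ref{prop:rebuild:runtime}'s counter resets, Lemma~\ref{lem:stability}, and Corollary~\ref{cor:number:dirty:nodes} — but the potential-function view is arguably more mechanical to verify: once you establish $\Phi_k \geq 0$ and that the Rebuild-triggering condition $\C_{\leq k}=0$ makes $\Phi_k = cf|\E_{\leq k}|$ available, all bookkeeping about "which element pays and how often" disappears. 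The paper's charging argument, conversely, is more explicit about where the $O(f/\epsilon)$ and $O(f)$ per-level charges land, which it needs in order to argue (via the once-per-level observation) the total $O(fL/\epsilon)$ bound.

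Two minor points you should make explicit if you write this up in full. First, when you apply Corollary~\ref{cor:number:dirty:nodes} "at the moment just before the Rebuild," note that this moment falls strictly inside the deletion update; the corollary as stated speaks of time-steps, so you should either observe that Lemma~\ref{lem:stability}, Corollary~\ref{cor:stability}, and Lemma~\ref{lem:stability:active} all continue to hold up to the instant before {\sc Rebuild} fires (which they do, since no Rebuild at level $\geq k$ has yet occurred in $[t',t]$), or instead mirror the paper's cleaner route of bounding $|D^{(t)}_{\leq k}|$ via $D^{(t)}_{\leq k}\subseteq M^{(t'\to t)}_{\leq k}$ and $|M^{(t'\to t)}_{\leq k}| = \epsilon|A^{(t')}_{\leq k}|$. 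Second, in the non-negativity step, a deletion that triggers {\sc Rebuild}($\leq k''$) with $k'' < k$ decrements $\C_{\leq k}$ (the loop reaches level $k$ before level $k''$) while removing the element from $\E_{\leq k}$, and the subsequent {\sc Rebuild}($\leq k''$) does not touch $\C_{\leq k}$ or $\E_{\leq k}$; spelling this out closes the "nested Rebuild" concern you flag at the end.
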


\begin{proof}
Recall that every element belongs to at most $f$ sets. Hence, ignoring the  potential call to  {\sc Rebuild}($\leq k$),  it takes $O(f + L) = O(f + \log_{(1+\epsilon)} (Cn)) = O(f + \log (Cn)/\epsilon)$ time to implement all the steps  in Figure~\ref{fig:deletion} and Figure~\ref{fig:insertion}. In other words, the update time of our dynamic algorithm is dominated by the time spent on the calls to {\sc Rebuild}$(\leq k)$. Henceforth, we focus on bounding the time spent on these calls.

Fix any  $k \in [0, L]$ and consider any  call to  {\sc Rebuild}($\leq k$) that is made just after some time-step (say) $t$. Let $t' < t$ be the last time-step before $t$ with the following property:  a call was made to  {\sc Rebuild}($\leq j$) for some $j \in [k, L]$ just before time-step $t'$. Since $\E_{\leq k} = A_{\leq k} \cup P_{\leq k}$, Corollary~\ref{cor:prop:rebuild:elements} states that:
\begin{equation}
\label{eq:verynew:1}
D_{\leq k}^{(t')} = P_{\leq k}^{(t')}  = \emptyset, \text{ and hence } \E_{\leq k}^{(t')} = A_{\leq k}^{(t')}.
\end{equation}
Let $\mathcal{I}_{\leq k}^{(t' \rightarrow t)}$ denote the set of elements  that get inserted into the set-system $(\S, \E)$ at some level $\leq k$ during the time-interval $[t', t]$. Furthermore, as in Lemma~\ref{lem:stability}, let $M_{\leq k}^{(t' \rightarrow t)}$ denote the set of elements that get deleted from some level $\leq k$ during the time-interval $[t', t]$. During the same time-interval, no call was made to  {\sc Rebuild}($\leq j$) for any $j \in [k, L]$. Moreover,  Property~\ref{prop:rebuild:elements} implies that a call to the subroutine {\sc Rebuild}($\leq j$) for some $j \in [1, k-1]$ does not change the set of elements in $\E_{\leq k}$. Thus, during the time-interval $[t', t]$, the only way the set $\E_{\leq k}$ can increase in size is via insertions of elements  at levels $\leq k$. It follows that:  $\E_{\leq k}^{(t)} \subseteq \E_{\leq k}^{(t')}  \cup \I_{\leq k}^{(t' \rightarrow t)}$. Since $\E_{\leq k}^{(t')} = A_{\leq k}^{(t')}$ according to~(\ref{eq:verynew:1}), we get:
\begin{equation}
\label{eq:verynew:2}
 \E_{\leq k}^{(t)} \subseteq A_{\leq k}^{(t')}  \cup \I_{\leq k}^{(t' \rightarrow t)}
\end{equation}
We will show next that $|D_{\leq k}^{(t)}| \leq  3 \epsilon |A_{\leq k}^{(t')}|$.
Note that $D_{\leq k}^{(t)}  \subseteq D_{\leq k}^{(t')} \cup  M_{\leq k}^{(t' \rightarrow t)}$.
From Corollary~\ref{cor:number:dirty:nodes} it follows that $|D_{\leq k}^{(t')}| \leq 2 \epsilon |A_{\leq k}^{(t')}|$.
Furthermore, since a call was made to the subroutine {\sc Rebuild}($\leq k$) just after time-step $t$, it must be the case that $\C_{\leq k}^{(t)} = 0$. Thus, from Lemma~\ref{lem:stability} we infer that
\begin{equation}
\label{eq:verynew:3}
\left| M_{\leq k}^{(t' \rightarrow t)} \right| = \epsilon \cdot \left| A_{\leq k}^{(t')}\right|
\end{equation}

It follows that:
\begin{equation}
\label{eq:verynew:3b}
|D_{\leq k}^{(t)}| \leq 3 \epsilon |A_{\leq k}^{(t')}|
\end{equation}
Let $T$ denote the total ``{\em cost}" (update time) we pay for calling the subroutine {\sc Rebuild}($\leq k$) at time-step $t$. From~(\ref{eq:verynew:2}),~(\ref{eq:verynew:3b}) and Property~\ref{prop:rebuild:runtime}, we get:
\begin{eqnarray}
T & = & O\left( f \cdot \left| \E_{\leq k}^{(t)} \right| + f \cdot \left| D_{\leq k}^{(t)} \right| \right) + O\left(\frac{\log (Cn)}{\epsilon}\right) \nonumber \\
& = & O\left( f \cdot \left| A_{\leq k}^{(t')} \right| + f \cdot \left| \I_{\leq k}^{(t' \rightarrow t)} \right| + \frac{\log (Cn)}{\epsilon}\right) \label{eq:verynew:4}
\end{eqnarray}
After each update, our dynamic algorithm (see Figures~\ref{fig:deletion} and~\ref{fig:insertion}) makes at most one call  to the {\sc Rebuild}$(\leq i)$ subroutine, over all $i \in [0, L]$. Hence, we can safely ignore the term $O(\log (Cn)/\epsilon)$ in $T$ above, as this term gets subsumed within our desired update time bound of $O(f \log (Cn)/\epsilon^2)$. We {\em split-up} the remaining chunk of $T$ into two parts: $T_1 = O\left(f \cdot \left| A_{\leq k}^{(t')} \right|  \right)$ and $T_2 = O\left(f \cdot \left| \I_{\leq k}^{(t' \rightarrow t)} \right|  \right)$. We {\em charge} the cost $T_1$ (resp. $T_2$) by distributing it evenly among  the elements that get deleted from (resp. inserted into) level $\leq k$ during the time-interval $[t', t]$. We now bound the total charge  accumulated by an element in this fashion. 

First, note that as per~(\ref{eq:verynew:3}),  $\epsilon \cdot \left| A_{\leq k}^{(t')}\right|$ elements get deleted from level $\leq k$ during the time-interval $[t', t]$. When we distribute the cost $T_1$ evenly among them, each of these elements accumulate a charge of $O(f/\epsilon)$. Now, consider any element $e$ that accumulates some charge in this fashion due to the call to {\sc Rebuild}($\leq k$) just after time-step $t$. By definition, this element $e$ gets deleted during the time-interval $[t', t]$. Accordingly, it is not possible for the same element $e$ to accumulate a similar charge from the same level $k$ at some future time-step $t'' > t$.\footnote{If $e$ is inserted and deleted again later, we consider this to be a different (instance of the) element.} To summarize, an element $e$ gets charged at most once from a given level in this manner.

Next, note that when we distribute the cost $T_2$ evenly among the elements in $\I_{\leq k}^{(t' \rightarrow t)}$, each such element accumulates a charge of $O(f)$. Consider any element $e$ that accumulates some charge from level $k$ just after some time-step $t$ in this manner. By definition, this element got inserted during the time-interval $[t', t]$, and thus it will never get charged due to a call to {\sc Rebuild}($\leq k$) at some future time-step $t'' > t$. To summarize, here again we derive that an element $e$ gets charged at most once from a given level in this fashion.

From the discussion in the preceding two paragraphs, we conclude that any element $e$ accumulates a charge of at most $O(f/\epsilon + f) = O(f/\epsilon)$ from each level. Thus, the total charge accumulated by any element is at most $O((f/\epsilon) L) = O((f/\epsilon) \log_{(1+\epsilon)} (Cn)) = O(f \log (Cn)/\epsilon^2)$. This means that the amortized update time of our dynamic algorithm is also $O(f \log (Cn)/\epsilon^2)$.
\end{proof}

\subsection{Bounding the approximation ratio}
\label{sec:approx}

 Our main result is summarized in the theorem below.

\begin{theorem}
\label{th:approx}
In the hierarchical partition maintained by our dynamic algorithm, the  tight sets $\S^*_{tight} = \{ s \in \S : (1+\epsilon)^{-1} c_s \leq W^*(s) \leq c_s\}$ form a $(1+5\epsilon)f$-approximate minimum set cover in $(\S, \E)$.
\end{theorem}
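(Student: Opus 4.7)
The plan is to start from exactly the same primal--dual calculation used in the static setting (Lemma~\ref{lm:static:approx}) but applied to the shadow input. By Invariants~\ref{inv:node:weight:shadow} and~\ref{inv:hyper-edge:shadow}, $\S^*_{tight}$ is a valid set cover of $(\S,\E)$ and
\[
c(\S^*_{tight})\ \le\ (1+\epsilon)\sum_{s\in\S^*_{tight}} W^*(s)\ \le\ (1+\epsilon)\sum_{s\in\S} W^*(s)\ \le\ (1+\epsilon)\,f\cdot w(\E^*),
\]
where the last step counts each element at most $f$ times. So it suffices to bound $w(\E^*)$ by $(1+O(\epsilon))\cdot OPT(\S,\E)$.

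Decompose $w(\E^*)=w(\E)+w(D)$. Since $W^*(s)\le c_s$ for every $s$ and $\E\subseteq\E^*$, the restriction of $\{w(e)\}$ to $\E$ is a valid fractional packing for $(\S,\E)$; therefore, by LP duality (Lemma~\ref{lm:duality}) $w(\E)\le OPT(\S,\E)$. The real work is to show
\[
w(D)\ \le\ 2\epsilon\cdot w(A)\ \le\ 2\epsilon\cdot w(\E),
\]
which would give $w(\E^*)\le(1+2\epsilon)\,w(\E)\le(1+2\epsilon)\,OPT(\S,\E)$, and combined with the display above yields
$c(\S^*_{tight})\le(1+\epsilon)(1+2\epsilon)\,f\cdot OPT(\S,\E)\le(1+5\epsilon)\,f\cdot OPT(\S,\E)$ for $\epsilon\le 1/2$.

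\textbf{The core inequality $w(D)\le 2\epsilon\,w(A)$.} This is exactly the spirit of Lemma~\ref{lem:intro:new_invariant}, but adapted to the discretized level structure. By Invariant~\ref{inv:element:weight:level:shadow}, every dead element satisfies $w(e)\le(1+\epsilon)^{-\ell(e)}$ and every active element satisfies $w(e)=(1+\epsilon)^{-\ell(e)}$, so
\[
w(D)\ \le\ \sum_{i=0}^{L}|D_i|\,(1+\epsilon)^{-i},\qquad w(A)\ =\ \sum_{i=0}^{L}|A_i|\,(1+\epsilon)^{-i}.
\]
I would use summation by parts with $a_i:=(1+\epsilon)^{-i}-(1+\epsilon)^{-(i+1)}\ge 0$ (and $a_L:=(1+\epsilon)^{-L}$) to rewrite each sum as $\sum_{i}|X_{\le i}|\,a_i$ for $X\in\{D,A\}$. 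Corollary~\ref{cor:number:dirty:nodes} gives $|D_{\le i}|\le 2\epsilon\,|A_{\le i}|$ termwise, hence
\[
w(D)\ \le\ \sum_{i=0}^{L}|D_{\le i}|\,a_i\ \le\ 2\epsilon\sum_{i=0}^{L}|A_{\le i}|\,a_i\ =\ 2\epsilon\cdot w(A),
\]
which is what we wanted.

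\textbf{Main obstacle and wrap-up.} The nontrivial step is this conversion of the per-prefix bound $|D_{\le j}|\le 2\epsilon|A_{\le j}|$ into the weighted bound on $w(D)$; the summation-by-parts trick works precisely because weights decrease monotonically with the level. Everything else is bookkeeping: check that passive elements, whose weights are only upper-bounded by $(1+\epsilon)^{-\ell(e)}$, cause no issue because $w(P)$ appears only through $w(\E)\ge w(A)$ (so passive weight helps rather than hurts the denominator), and confirm that $(1+\epsilon)(1+2\epsilon)\le 1+5\epsilon$ for $\epsilon\le 1/2$ to finish the approximation guarantee stated in Theorem~\ref{th:approx}.
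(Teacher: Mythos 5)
Your proposal is correct and follows essentially the same route as the paper: the chain $c(\S^*_{tight})\le(1+\epsilon)f\cdot w(\E^*)$ is the paper's Lemma~\ref{lem:compare:shadow}, your summation-by-parts argument converting $|D_{\le i}|\le 2\epsilon|A_{\le i}|$ (Corollary~\ref{cor:number:dirty:nodes}) into $w(D)\le 2\epsilon\,w(A)$ is exactly Lemma~\ref{lem:weight:dirty:nodes}, and the final assembly via LP duality matches Corollary~\ref{cor:compare:shadow} and the paper's proof of Theorem~\ref{th:approx}. The only cosmetic difference is that you invoke Lemma~\ref{lm:duality} earlier to bound $w(\E)\le OPT(\S,\E)$ and then multiply, whereas the paper first derives $c(\S^*_{tight})\le(1+\epsilon)(1+2\epsilon)f\cdot w(\E)$ and applies duality once at the end; the two are algebraically identical.
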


We now give a high-level overview of the proof of the above theorem. First, recall that  $\E^* = \E  \cup D$. Hence, the element-weights $\{w(e)\}$ define a valid fractional packing in the shadow-input $(\S, \E^*)$ as per Invariant~\ref{inv:node:weight:shadow}, and the  sets in $\S^*_{tight}$ form a valid set cover in $(\S, \E^*)$ as per Invariant~\ref{inv:hyper-edge:shadow}. As per Invariant~\ref{inv:hyper-edge:shadow}, every element $e \in \E \cup D$ is contained in at least one tight set. In other words, the fractional packing $\{w(e)\}$ is {\em approximately maximal}, in the sense that every element belongs to at least one set whose weight cannot be increased by more than $(1+\epsilon)$-factor. Armed with this observation, it is not too difficult to show that the total cost of the dual set cover (defined by the tight sets) is within a multiplicative factor  $(1+\epsilon)f$ of the total weight of the fractional packing $\{w(e)\}$ in $(\S, \E^*)$. This already implies that the collection of tight sets $\S^*_{tight}$ forms a $(1+\epsilon)f$-approximate minimum set cover in the shadow input (see Lemma~\ref{lem:compare:shadow}). The key challenge now is to show that the sets in $\S^*_{tight}$ also constitute an approximately minimum  set cover in the actual input $(\S, \E)$.

To address this challenge,  we exploit the fact that the number of elements in $D$ is relatively small compared to the number of  elements in $A$ (see Corollary~\ref{cor:number:dirty:nodes}). This implies that the total weight of the elements in $D$ is also small compared to the total weight of the elements in $A$ (see Lemma~\ref{lem:weight:dirty:nodes}). Hence, even if we delete all the elements in $D$ from the fractional packing $\{w(e)\}$, the objective value of the resulting solution will remain close to the objective value of the original fractional packing, which in turn was within a factor $(1+\epsilon)f$ of the total cost of the sets in $\S^*_{tight}$. 
So the total weight of the new fractional packing (after deleting the elements in $D$) will be very close to  $(1+\epsilon)f \cdot c\left(\S^*_{tight}\right)$, where $c\left(\S^*_{tight}\right)$ is the total cost of the sets in $\S^*_{tight}$ (see  Corollary~\ref{cor:compare:shadow}). Now, this also happens to be  a valid fractional packing in the actual input $(\S, \E)$, because we already had $W^*(s) \leq c_s$ for all sets $s \in \S$ and removing the elements in $D$ will {\em not} increase the weights of the sets any further. On the other hand, the sets in $\S^*_{tight}$ form a {\em valid} set cover in the actual input $(\S, \E)$ as well, as every $e \in \E$ belongs
to a set in $\S^*_{tight}$, see  Invariant~\ref{inv:hyper-edge:shadow}. In other words, we have identified a valid fractional packing and a valid set cover in $(\S, \E)$ whose objective values are  within a $(1+O(\epsilon))f$-factor of each other. So the corresponding set cover  must be an approximately minimum set cover in  $(\S, \E)$.

\begin{lemma}
\label{lem:weight:dirty:nodes}
We always have $\sum_{e \in D} w(e) \leq 2\epsilon \cdot \sum_{e \in A} w(e)$.
\end{lemma}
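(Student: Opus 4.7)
The plan is to deduce this weight inequality from the per-level counting inequality of Corollary~\ref{cor:number:dirty:nodes} via a summation-by-parts (Abel) transformation. By Invariant~\ref{inv:element:weight:level:shadow}, every active element has weight $w(e) = (1+\epsilon)^{-\ell(e)}$ exactly, while every dead element has weight $w(e) \leq (1+\epsilon)^{-\ell(e)}$. Partitioning by level and writing $\beta_i := (1+\epsilon)^{-i}$, this already reduces the goal to showing
\[
\sum_{i=0}^{L} \beta_i \, |D_i| \;\leq\; 2\epsilon \, \sum_{i=0}^{L} \beta_i \, |A_i|.
\]

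The difficulty is that Corollary~\ref{cor:number:dirty:nodes} controls only the \emph{prefixes} $|D_{\leq j}| \leq 2\epsilon\,|A_{\leq j}|$, not the individual terms $|D_j|,|A_j|$, whereas the target inequality involves the individual terms weighted by $\beta_i$. I would bridge this by Abel summation:
\[
\sum_{i=0}^{L} \beta_i \, |D_i| \;=\; \beta_L \, |D_{\leq L}| \;+\; \sum_{i=0}^{L-1} (\beta_i - \beta_{i+1}) \, |D_{\leq i}|.
\]
Because $\beta_0 > \beta_1 > \cdots > \beta_L > 0$, every coefficient appearing on the right is \emph{nonnegative}. Hence substituting the prefix bound $|D_{\leq i}| \leq 2\epsilon\,|A_{\leq i}|$ term-by-term and reversing the same Abel transformation on the $A$-side recovers exactly $2\epsilon \sum_i \beta_i |A_i|$, which is the desired inequality.

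I do not expect any substantive obstacle: the heavy lifting -- that dead elements cannot cluster in any prefix $[0,j]$ of levels without a proportional mass of active elements -- has already been done in Corollary~\ref{cor:number:dirty:nodes}, and the remaining content is a short discrete-calculus step that uses only the monotone-in-level structure of weights furnished by Invariant~\ref{inv:element:weight:level:shadow}. An equivalent presentation would mimic the continuous layer-cake argument from the proof sketch of Lemma~\ref{lem:intro:new_invariant}, writing $w(e) = \int_{0}^{\infty}\mathbbm{1}[w(e) > x]\,dx$ and integrating; on the discrete level hierarchy used here, however, the Abel summation formulation is the cleanest route.
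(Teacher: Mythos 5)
Your proof is correct and takes essentially the same route as the paper's: the paper writes $w(e)=\sum_{k\geq \ell(e)}\Delta_k$ with $\Delta_k=(1+\epsilon)^{-k}-(1+\epsilon)^{-(k+1)}$ (and $\Delta_L=(1+\epsilon)^{-L}$), then interchanges the order of summation to land on $\sum_k\Delta_k|D_{\leq k}|$, which is precisely your Abel summation identity with $\Delta_k=\beta_k-\beta_{k+1}$. Both then invoke Corollary~\ref{cor:number:dirty:nodes} on the prefixes and reverse the transformation, so the two proofs coincide up to notation.
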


\begin{proof}
We first express the weight of an element $e$ as a sum of {\em increments}, where each increment corresponds to a specific level $k \geq \ell(e)$. To be more precise, we define:
\begin{equation*}
\Delta_k = \begin{cases}
(1+\epsilon)^{- k} & \text{ for } k = L; \\
(1+\epsilon)^{-k} - (1+\epsilon)^{-(k+1)} & \text{ for } 0 \leq k < L.
\end{cases}
\end{equation*}
From Invariant~\ref{inv:element:weight:level:shadow}, we  conclude that:
\begin{eqnarray}
\label{eq:100}
w(e) & = & (1+\epsilon)^{-\ell(e)} = \sum_{k = \ell(e)}^{L} \Delta_k \text{ for all } e \in A. \\
w(e) & \leq & (1+\epsilon)^{-\ell(e)} = \sum_{k = \ell(e)}^{L} \Delta_k \text{ for all } e \in D. \label{eq:111}
\end{eqnarray}
Now, we derive that:
\begin{eqnarray*}
\sum_{e \in D} w(e) & \leq & \sum_{e \in D} \sum_{k=\ell(e)}^{L} \Delta_k   \\
& = &  \sum_{k = 0}^{L} \sum_{e \in D \, : \, \ell(e) \leq k} \Delta_k  \\
& = &  \sum_{k=0}^{L} \Delta_k \cdot \left| D_{\leq k} \right|   \\
&  \leq & \sum_{k=0}^{L} \Delta_k \cdot 2\epsilon \left| A_{\leq k} \right|    \\
& = & 2\epsilon \cdot  \sum_{k=0}^{L} \sum_{e \in A \, : \, \ell(e) \leq k} \Delta_k \\
& = &  2\epsilon \cdot \sum_{e \in A} \sum_{k=\ell(e)}^{L} \Delta_k \\
& = &  2\epsilon \cdot \sum_{e \in A} w(e).  \nonumber
\end{eqnarray*}
In the above derivation, the first inequality follows from~(\ref{eq:111}). The second inequality follows from Corollary~\ref{cor:number:dirty:nodes}.  Finally, the last equality follows from~(\ref{eq:100}), This concludes the proof of the lemma.
\end{proof}

\begin{lemma}
\label{lem:compare:shadow}
We have $\sum_{e \in \E^*} w(e) \geq   \left((1+\epsilon)f\right)^{-1} \cdot c\left(\S^*_{tight}\right)$.
\end{lemma}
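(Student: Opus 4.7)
The plan is to mimic the argument used in the static setting for Lemma~\ref{lm:static:approx}, now applied to the shadow input $(\S, \E^*)$. The three key ingredients are already in place: (i) each element $e \in \E^*$ lies in at most $f$ sets, since $f$ upper-bounds the maximum frequency throughout the update sequence; (ii) by Invariant~\ref{inv:node:weight:shadow}, the weights $\{w(e)\}_{e\in\E^*}$ form a valid fractional packing in $(\S, \E^*)$, so in particular $W^*(s) \leq c_s$ for every $s \in \S$; and (iii) by definition of $\S^*_{tight}$, every $s \in \S^*_{tight}$ satisfies $W^*(s) \geq (1+\epsilon)^{-1} c_s$.

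First I would expand $\sum_{s \in \S} W^*(s)$ by double counting as $\sum_{e \in \E^*} |\{ s \in \S : e \in s \}| \cdot w(e)$, and bound this by $f \cdot \sum_{e \in \E^*} w(e)$ using the frequency bound. Restricting the sum on the left to the subcollection $\S^*_{tight} \subseteq \S$ (which only decreases it, as all terms $W^*(s)$ are nonnegative by Invariant~\ref{inv:node:weight:shadow}) yields
\begin{equation*}
\sum_{s \in \S^*_{tight}} W^*(s) \leq f \cdot \sum_{e \in \E^*} w(e).
\end{equation*}
Next I would apply the tight-set lower bound $W^*(s) \geq (1+\epsilon)^{-1} c_s$ to each summand on the left, obtaining
\begin{equation*}
(1+\epsilon)^{-1} \cdot c(\S^*_{tight}) = \sum_{s \in \S^*_{tight}} (1+\epsilon)^{-1} c_s \leq \sum_{s \in \S^*_{tight}} W^*(s) \leq f \cdot \sum_{e \in \E^*} w(e),
\end{equation*}
which rearranges to the desired inequality $\sum_{e \in \E^*} w(e) \geq ((1+\epsilon)f)^{-1} \cdot c(\S^*_{tight})$.

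There is no real obstacle here: the lemma is a direct analogue of the static Lemma~\ref{lm:static:approx}, and the proof is essentially a reproduction of the chain of inequalities in~(\ref{eq:static:approx}), with Properties~\ref{prop:set:weight:level} and~\ref{prop:element:level} replaced by their dynamic counterparts, Invariants~\ref{inv:node:weight:shadow} and~\ref{inv:hyper-edge:shadow}. Note that Invariant~\ref{inv:hyper-edge:shadow} is not used in this particular lemma (it becomes relevant only when arguing that $\S^*_{tight}$ is actually a valid set cover, which is needed to deduce an approximation ratio via Lemma~\ref{lm:duality}, but is not required for the inequality being stated here).
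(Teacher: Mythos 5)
Your proof is correct and follows essentially the same chain of inequalities as the paper's own argument for Lemma~\ref{lem:compare:shadow}: double-count $\sum_{s\in\S} W^*(s)$ against the frequency bound, drop to the subcollection $\S^*_{tight}$, and apply the tightness lower bound. Your closing remark is also accurate -- the paper opens its proof by invoking Invariant~\ref{inv:hyper-edge:shadow}, but the ensuing inequalities never use it; that invariant is only needed later to establish that $\S^*_{tight}$ is a valid cover, not for the bound proved here.
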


\begin{proof}
By Invariant~\ref{inv:hyper-edge:shadow}, every element $e \in \E  \cup D = \E^*$ belongs to at least one set in $\S^*_{tight}$.  We  sum over the weights of  these elements. Since $f$ is an upper bound on the maximum frequency of an element,   we get:
\begin{eqnarray*}
\sum_{e \in \E^*} w(e)   & \geq & f^{-1}  \sum_{s \in \S} W^*(s) \geq f^{-1}  \sum_{s \in \S^*_{tight}} W^*(s) \\
& \geq & f^{-1}  \sum_{s \in \S^*_{tight}} (1+\epsilon)^{-1}  c_s \\
& = &  ((1+\epsilon)f)^{-1} \cdot c\left(\S^*_{tight}\right). 
\end{eqnarray*}
\end{proof}

\begin{corollary}
\label{cor:compare:shadow}
We have $\sum_{e \in \E} w(e) \geq \left((1+\epsilon)(1+2\epsilon)f\right)^{-1}  \cdot c\left(\S^*_{tight}\right)$. 
\end{corollary}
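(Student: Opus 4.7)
The plan is to combine Lemma~\ref{lem:weight:dirty:nodes} with Lemma~\ref{lem:compare:shadow} by splitting the weight of the shadow input $\E^* = \E \cup D$ into its contributions from $\E$ and from $D$, and then bounding the $D$-contribution in terms of the $\E$-contribution.

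First I would observe that since $\E^* = \E \cup D$ with the union being disjoint (because dead elements are tracked separately from current active/passive elements), we have
\begin{equation*}
\sum_{e \in \E^*} w(e) \;=\; \sum_{e \in \E} w(e) \;+\; \sum_{e \in D} w(e).
\end{equation*}
Next, since $A \subseteq \E$, Lemma~\ref{lem:weight:dirty:nodes} gives $\sum_{e \in D} w(e) \leq 2\epsilon \sum_{e \in A} w(e) \leq 2\epsilon \sum_{e \in \E} w(e)$. Combining these yields $\sum_{e \in \E^*} w(e) \leq (1+2\epsilon) \sum_{e \in \E} w(e)$.

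Finally, I would plug in Lemma~\ref{lem:compare:shadow}, which asserts $\sum_{e \in \E^*} w(e) \geq ((1+\epsilon)f)^{-1} \cdot c(\S^*_{tight})$. Substituting gives
\begin{equation*}
(1+2\epsilon) \sum_{e \in \E} w(e) \;\geq\; \sum_{e \in \E^*} w(e) \;\geq\; \frac{c(\S^*_{tight})}{(1+\epsilon)f},
\end{equation*}
and dividing by $(1+2\epsilon)$ yields the claim. There is no real obstacle here; the proof is essentially a one-line deduction from the two preceding results, and the only subtlety is to make sure one uses $A \subseteq \E$ when converting the bound on $\sum_{e \in D} w(e)$ from being in terms of $A$ (as stated in Lemma~\ref{lem:weight:dirty:nodes}) to being in terms of $\E$.
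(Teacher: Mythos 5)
Your proof is correct and takes essentially the same approach as the paper: it combines Lemma~\ref{lem:weight:dirty:nodes} with Lemma~\ref{lem:compare:shadow} to convert the bound on the shadow input $\E^*$ into a bound on $\E$. The only cosmetic difference is that you pass through $A \subseteq \E$ directly to obtain $\sum_{e \in \E^*} w(e) \leq (1+2\epsilon)\sum_{e \in \E} w(e)$ in one step, whereas the paper first derives $\sum_{e \in A} w(e) \geq (1+2\epsilon)^{-1}\sum_{e \in A \cup D} w(e)$ and then adds the passive-element weights to both sides.
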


\begin{proof}
Using Lemma~\ref{lem:weight:dirty:nodes}, we derive that:
\begin{eqnarray}
\left(1+(2 \epsilon)^{-1} \right) \cdot \sum_{e \in A} w(e) & \geq & \frac{1}{(2\epsilon)} \cdot \left( \sum_{e \in D} w(e) + \sum_{e \in A} w(e) \right) \nonumber
\end{eqnarray}
Multiplying both sides in the above inequality by $2\epsilon (1+2\epsilon)^{-1} = (1+(2\epsilon)^{-1})^{-1}$, we get:
\begin{equation}
\label{eq:new:200}
\sum_{e \in A} w(e) \geq (1+2\epsilon)^{-1} \cdot \sum_{e \in A \cup D} w(e).
\end{equation}
Now, adding the weights of the passive elements -- given by $\sum_{e \in P} w(e)$ -- on both sides of~(\ref{eq:new:200}), we get:
\begin{equation}
\label{eq:new:201}
\sum_{e \in A \cup P} w(e)  \geq (1+2\epsilon)^{-1} \cdot \sum_{e \in A \cup P \cup D} w(e).
\end{equation}
Since $\E = A \cup P$ and $\E^* = A \cup P \cup D$, the corollary follows from~(\ref{eq:new:201}) and Lemma~\ref{lem:compare:shadow}.
\end{proof}

\medskip
\noindent {\em Proof of Theorem~\ref{th:approx}.} By Invariant~\ref{inv:hyper-edge:shadow}, every element $e \in  \E$ belongs to at least one set  in $\S^*_{tight}$. In other words, the collection of sets $\S^*_{tight}$ forms a valid set cover in the input $(\S, \E)$. Next, from Invariant~\ref{inv:node:weight:shadow} we infer the following bound on the weight $W(s)$ of any set $s \in \S$:
$$W(s) \leq W^*(s) \leq c_s \text{ for all sets } s \in \S.$$
The first inequality holds  since we consider the weights of the elements  $e \in \E^* = \E \cup D$ while calculating $W^*(s)$, whereas we  only consider the weights of the elements $e \in \E$ while calculating $W(s)$. Thus,   the  element-weights $\{ w(e) \}_{e \in \E}$ form a valid {\em fractional packing} in $(\S, \E)$.  Finally, Corollary~\ref{cor:compare:shadow}  implies that the size of this fractional packing  is at least  $\alpha$ times of the total cost of the set cover $\S^*_{tight}$ in $(\S, \E)$, where $\alpha = ((1+\epsilon)(1+2\epsilon) f)^{-1} \geq ((1+5\epsilon)f)^{-1}$ when $0 < \epsilon < 1/2$. 
Theorem~\ref{th:approx} now follows  from Lemma~\ref{lm:duality}.

\section{Describing the {\sc Rebuild($\leq k$)} subroutine}
\label{app:rebuild} 

\newcommand{\D}{\mathcal{D}^*_{\leq k}}

\renewcommand{\P}{\mathcal{P}^*_{\leq k}}
\newcommand{\A}{\mathcal{A}^*_{\leq k}}

The subroutine works in 8 steps.   Throughout this section, we use the symbols $\E^*_{\leq k}, \A, \P$ and $\D$ respectively to denote the status of the sets $\E_{\leq k}, A_{\leq k}, P_{\leq k}$ and $D_{\leq k}$ just before the call to the subroutine.

\medskip
\noindent {\bf A note on the invariants:} While going through the description of the subroutine below, it will be helpful to remember that Invariant~\ref{inv:element:weight:level:shadow} will continue to hold all the time. In contrast, Invariants~\ref{inv:node:weight:shadow} and~\ref{inv:hyper-edge:shadow} will continue to hold only for those elements and sets that remain {\em unaffected} (do not change their levels/weights) during the call to REBUILD($\leq k$). These two invariants will be satisfied by the affected elements and sets only at the end of the subroutine. 

\medskip
\noindent {\bf Step 1:} Scan through all the elements in $\E^*_{\leq k} \cup \D$ and identify the collection of sets $\S' = \{ s \in \S : \ell(s) \leq k, s \cap (\E^*_{\leq k} \cup \D) \neq \emptyset\}$. A set belongs to $\S'$ iff it is at level $\leq k$ at this point in time and contains at least one element from $\E^*_{\leq k} \cup \D$. These are the sets whose levels and weights will be affected due to the call to {\sc Rebuild}($\leq k$). 

\medskip
\noindent {\bf Remark:} Consider any element $e \in \E^*_{\leq k} \cup \D$. These are the elements that get affected due to the call to {\sc Rebuild}($\leq k$).  Since the level of an element is defined to be the maximum level among all the sets it belongs to, we make the following important observation that will be used throughout the rest of this section.

\begin{observation}
\label{ob:imp:101}
Every set $s \in \S$ that contains some element $e \in \E^*_{\leq k} \cup \D$ belongs to the collection $\S'$. Furthermore, for every element $e' \in \E \cup D$ such that every set containing $e'$ belongs to $\S'$, we must have $e' \in \E^*_{\leq k} \cup \D$. 
\end{observation}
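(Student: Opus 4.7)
The plan is to unpack both implications directly from the definition $\ell(e) = \max\{\ell(s) : s \in \S, e \in s\}$ together with the fact that $\E^*_{\leq k} \cup \D$ is exactly the set of elements at level $\leq k$ (among those in $\E \cup D$) at the moment {\sc Rebuild}($\leq k$) is called.

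For the first claim, I would fix a set $s \in \S$ and an element $e \in s$ with $e \in \E^*_{\leq k} \cup \D = \E_{\leq k} \cup D_{\leq k}$. To show $s \in \S'$ I need to verify the two membership conditions. The condition $s \cap (\E^*_{\leq k} \cup \D) \neq \emptyset$ is immediate because $e$ itself witnesses this. For the condition $\ell(s) \leq k$, I would invoke the definition of $\ell(e)$ as the maximum level of any set containing $e$, which gives $\ell(s) \leq \ell(e)$; since $e$ lies in $\E_{\leq k} \cup D_{\leq k}$ we have $\ell(e) \leq k$, so $\ell(s) \leq k$ as required.

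For the second claim, I would argue the contrapositive. Take $e' \in \E \cup D$ with $e' \notin \E^*_{\leq k} \cup \D = \E_{\leq k} \cup D_{\leq k}$; then by definition $\ell(e') > k$. Using $\ell(e') = \max\{\ell(s) : e' \in s\}$, some set $s^* \ni e'$ achieves $\ell(s^*) = \ell(e') > k$. Such an $s^*$ fails the defining condition $\ell(s) \leq k$ of $\S'$, so $s^* \notin \S'$. Hence not every set containing $e'$ lies in $\S'$, which is the desired contrapositive.

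There is no real obstacle here; the observation is essentially a definitional bookkeeping statement that records how ``the sets affected by {\sc Rebuild}($\leq k$)'' and ``the elements affected by {\sc Rebuild}($\leq k$)'' are in one-to-one correspondence via the max-level convention. The only thing to be mindful of is to use the snapshot interpretation of $\E^*_{\leq k}, \D$ (i.e., the values of $\E_{\leq k}, D_{\leq k}$ just before the call), so that Invariant~\ref{inv:element:weight:level:shadow} and the definition of element levels may be applied without worrying about later changes during the subroutine.
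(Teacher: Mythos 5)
Your proof is correct and takes essentially the same route as the paper, which justifies this observation only by the one-line remark that element levels are defined as the max over containing-set levels; you have simply spelled out the two directions (membership in $\S'$ via $\ell(s)\leq\ell(e)\leq k$ plus the nonemptiness witness, and the contrapositive via a set $s^*$ achieving $\ell(s^*)=\ell(e')>k$).
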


\medskip
\noindent {\bf Step 2:} Remove all the elements from $D_{\leq k}$, and accordingly modify the weights $W^*(s)$ of the  sets in $\S'$. Thus, we get $D_{\leq k} = D_{\leq k} \setminus \D = \emptyset$. (See part (1) of Property~\ref{prop:rebuild:elements}.)

\medskip
\noindent {\bf Step 3:} For every  element $e \in \P$, set $w(e) \leftarrow 0$  and accordingly modify the weight $W^*(s)$ of each set $s \in \S'$ that contains $e$. Now, move every set $s \in \S'$ to level $(k+1)$, by setting $\ell(e) \leftarrow (k+1)$ for all $s \in \S'$. Since the level of an element is defined to be the maximum level among all the sets it belongs to, this implies that all the elements $e \in \P \cup \A$ also move up to level $k+1$ (see Observation~\ref{ob:imp:101}). Each element $e \in \P$  becomes part of $P_{k+1}$, whereas each element $e \in \A$  becomes part of $A_{k+1}$.  Thus, at this point in time we get $P_{\leq k} = A_{\leq k} = D_{\leq k} = \emptyset$. We continue to have $w(e) = 0$ for all $e \in \P$ (this does not violate Invariant~\ref{inv:element:weight:level:shadow} since we now have $P^*_{\leq k} \subseteq P_{k+1}$). However,  to ensure that Invariant~\ref{inv:element:weight:level:shadow} holds for the active elements, we now set $w(e) = (1+\epsilon)^{-(k+1)}$ for all $e \in \A$, and we accordingly modify the weights $W^*(s)$ of the  sets in $\S'$.

\medskip
\noindent {\bf Remark:}  Steps 2 and 3  can only decrease the weights $W^*(s)$ of the sets $s \in \S'$. This is because just before step 2 we had $w(e) \geq (1+\epsilon)^{-k}$ for all elements $e \in \A$, and $w(e) \geq 0$ for all elements $e \in \P$. We also had $W^*(s) \leq c_s$ for all sets $s \in \S'$, as per Invariant~\ref{inv:node:weight:shadow}. Now, Step $2$ removed the elements from $D_{\leq k}$ and Step $3$ decreased the weights of the elements in $\A \cup \P$. Thus, we continue to have $W^*(s) \leq c_s$ for all sets $s \in \S'$ even after  Step 3. 

\medskip
\noindent {\bf Step 4:} {\sc For} every element $e \in \P$:
\begin{itemize}
\item Let $\S'_e \subseteq \S'$ denote the collection of all sets that contain $e$. Note that $\left| \S'_e \right| \leq f$.
\item (a) {\sc If} $W^*(s) \leq c_s - (1+\epsilon)^{-(k+1)}$ for all sets $s \in \S'_e$, {\sc Then}
\begin{itemize}
\item Increase the  weight of element $e$ from $0$ (see Step 3 above) to $w(e) \leftarrow (1+\epsilon)^{-(k+1)}$,  move the element $e$ from   $P_{k+1}$ to  $A_{k+1}$, and accordingly modify the weight $W^*(s)$ of every set $s \in \S'_e$. 
\end{itemize}
\item (b) {\sc Else}
\begin{itemize}
\item Let $\lambda = \min\{ x : W^*(s) + x = c_s \text{ for some } s \in \S'_e\}$.  Note that in this case  $\lambda < (1+\epsilon)^{-(k+1)}$.
\item Increase the weight of element $e$ from $0$ (see Step 3 above) to $w(e) \leftarrow \lambda$,
and accordingly modify the weight $W^*(s)$ of every set $s \in \S'_e$.
\end{itemize}
\end{itemize}

\medskip
\noindent {\bf Remark:} In Step 4 above, we set a weight $w(e) = (1+\epsilon)^{-(k+1)}$ to an element $e \in \P$ (thereby making it active) only if $W^*(s) \leq c_s - (1+\epsilon)^{-(k+1)}$ for all sets $s \in \S'$ containing $e$. Thus, even after Step $4$, we continue to have $W^*(s) \leq c_s$ for all sets $s \in \S'$. 

\medskip
\noindent {\bf Defining the collection of tight sets $\S'' \subseteq \S'$:} At this point in time, let $\S'' = \{ s \in \S' : (1+\epsilon)^{-1} c_s \leq  W^*(s) \leq c_s\}$  denote the collection of sets in $\S'$ whose weights lie between $(1+\epsilon)^{-1} c_s$ and $c_s$. The remark above implies that every remaining set $s \in \S' \setminus \S''$ has weight $W^*(s) < (1+\epsilon)^{-1} c_s$ at this point in time. We now prove two important claims.

\begin{claim}
\label{cl:new:201}
Just after the end of Step 4, each element $e \in \P \cap P_{k+1}$ belongs to at least one set $s \in \S''$ and has weight $w(e) \leq (1+\epsilon)^{-(k+1)}$ (so that it continues to satisfy Invariant~\ref{inv:element:weight:level:shadow}). 
\end{claim}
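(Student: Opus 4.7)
\medskip
\noindent\textbf{Proof proposal.} The plan is to identify precisely which elements of $\P$ remain in $P_{k+1}$ at the end of Step 4, and then verify both conclusions of the claim for each such element. After Step 3, every element of $\P$ lies in $P_{k+1}$ with weight $0$. In Step 4, each $e\in\P$ is processed by exactly one of branches (a) or (b). Branch (a) moves $e$ from $P_{k+1}$ into $A_{k+1}$, so $e \notin \P\cap P_{k+1}$ thereafter. Branch (b) leaves $e$ in $P_{k+1}$. Since no element is ever added to $P_{k+1}$ during Step 4, the set $\P\cap P_{k+1}$ at the end of Step 4 consists exactly of those elements of $\P$ handled by branch (b).

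Fix such an element $e$, and consider the moment branch (b) is applied to it. The triggering condition for branch (b) is that some set $s\in\S'_e$ satisfies $W^*(s) > c_s - (1+\epsilon)^{-(k+1)}$, equivalently $c_s - W^*(s) < (1+\epsilon)^{-(k+1)}$. Hence $\lambda = \min_{s\in\S'_e}\{c_s - W^*(s)\} < (1+\epsilon)^{-(k+1)}$. The algorithm sets $w(e)\leftarrow\lambda$ and never touches $w(e)$ again during Step 4 (each element is processed only once, and later iterations only modify weights of other passive elements and their containing sets). Thus $w(e)\leq(1+\epsilon)^{-(k+1)}$ at the end of Step 4, giving the weight bound.

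For the tightness part, let $s^*\in\S'_e$ be a set achieving the minimum $\lambda$. Immediately after the update $w(e)\leftarrow\lambda$, we have $W^*(s^*)=c_{s^*}$, so $s^*$ is tight at that moment. The main obstacle is arguing that $s^*$ stays tight through the remainder of Step 4, since later iterations can only \emph{increase} the weight of $s^*$ (the only weight modifications in Step 4 are raising passive-element weights from $0$, which can only add to $W^*(\cdot)$). Here I will invoke the remark that follows Step 4: branch (a) is only triggered for an element $e'$ when every $s\in\S'_{e'}$ has slack $c_s - W^*(s) \geq (1+\epsilon)^{-(k+1)}$, which guarantees $W^*(s)\leq c_s$ is maintained throughout Step 4. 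Applied to $s^*$, this forces $W^*(s^*)\leq c_{s^*}$ to persist, while monotonicity forces $W^*(s^*)\geq c_{s^*}$; so $W^*(s^*) = c_{s^*}$ holds at the end of Step 4. (Operationally: any later element $e'$ with $s^*\in\S'_{e'}$ cannot be processed by branch (a) because $W^*(s^*)=c_{s^*}>c_{s^*}-(1+\epsilon)^{-(k+1)}$; in branch (b) its $\lambda'$ equals $c_{s^*}-W^*(s^*)=0$, so $W^*(s^*)$ is unaffected.) Consequently $(1+\epsilon)^{-1}c_{s^*}\leq W^*(s^*)\leq c_{s^*}$, so $s^*\in\S''$, and $e\in s^*$ is contained in a set of $\S''$, completing the proof.
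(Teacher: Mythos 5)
Your proof is correct and follows essentially the same approach as the paper's: you identify $\P \cap P_{k+1}$ at the end of Step 4 as exactly those elements processed under branch (b), extract the weight bound from the branch condition that forces $\lambda < (1+\epsilon)^{-(k+1)}$, and exhibit a minimizing set $s^*$ whose weight is raised to $c_{s^*}$. Where you go beyond the paper's write-up is in explicitly verifying that $s^*$ remains tight through the remaining iterations of Step 4 (the paper simply asserts ``that set $s'$ becomes part of $\S''$'' even though $\S''$ is a snapshot taken at the end of Step 4); your observation that later iterations cannot invoke branch (a) on an element of $s^*$, and would assign $\lambda' = 0$ under branch (b), closes this small gap cleanly.
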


\begin{proof}
Consider any element $e \in \P \cap P_{k+1}$.  This means that the element $e$ was processed under case (b) in Step 4, because if it were processed under case (a) then it would no longer be part of $P_{k+1}$ at the end of Step 4.  
Accordingly, recall what we do with such an element $e$ under case (b) in Step 4. When we assign a weight $\lambda$ to the element $e$, at least one set $s' \in \S'_e$ gets a weight $W^*(s') = c_{s'}$, and that set $s'$ becomes part of $\S''$.

Next, note that if an element $e \in \P$ was processed under case (b) in Step 4, then it receives weight $w(e) \leq \lambda \leq (1+\epsilon)^{-(k+1)}$. Since such an element gets added to the set $P_{k+1}$ in Step 5, it continues to satisfy Invariant~\ref{inv:element:weight:level:shadow}.
\end{proof}

\begin{claim}
\label{cl:new:202}
Just after the end of Step 4, each element $e \in \P \setminus P_{k+1}$ belongs to $A_{k+1}$ and has weight $(1+\epsilon)^{-(k+1)}$ (so that it continues to satisfy Invariant~\ref{inv:element:weight:level:shadow}). 
\end{claim}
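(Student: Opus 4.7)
The plan is to read the claim off directly from the construction of Step 4, with a short preliminary check that the level of $e$ really is $k+1$ at that point.

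First I would observe that by Observation~\ref{ob:imp:101}, every set containing an element $e \in \P \subseteq \E^*_{\leq k} \cup \D$ lies in the collection $\S'$. Step 3 then moves every set in $\S'$ to level $k+1$; since the level of an element is the maximum level of any set containing it, we conclude that $\ell(e) = k+1$ for every $e \in \P$ as soon as Step 3 finishes. At that moment, each such $e$ is also placed into $P_{k+1}$ with weight $0$. No subsequent step other than Step 4 modifies the set membership or the weight of elements originating in $\P$, so the only way for $e$ to leave $P_{k+1}$ by the end of Step 4 is through its processing in the loop.

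Next I would trace what Step 4 does to a given $e \in \P$. Exactly one of cases (a) and (b) applies. Case (b) explicitly keeps $e$ in $P_{k+1}$ (only its weight is changed to some $\lambda < (1+\epsilon)^{-(k+1)}$). Hence any $e \in \P$ with $e \notin P_{k+1}$ at the end of Step 4 must have been processed under case (a). Case (a) is the unique branch that (i) sets $w(e) \leftarrow (1+\epsilon)^{-(k+1)}$ and (ii) moves $e$ from $P_{k+1}$ to $A_{k+1}$. This yields both conclusions of the claim.

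Finally, Invariant~\ref{inv:element:weight:level:shadow} for active elements requires $w(e) = (1+\epsilon)^{-\ell(e)}$. Combining the preceding two paragraphs, $\ell(e) = k+1$ and $w(e) = (1+\epsilon)^{-(k+1)}$, so the invariant is satisfied. There is really no obstacle here: the claim is essentially a bookkeeping statement about the definition of Step 4, and the only thing worth double-checking is that $\ell(e) = k+1$ at the time the weight is assigned, which is guaranteed by Observation~\ref{ob:imp:101} applied to $\P$.
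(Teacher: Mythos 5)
Your proposal is correct and takes essentially the same approach as the paper: observe that an element of $\P$ that is no longer in $P_{k+1}$ after Step~4 must have been processed under case~(a), whose description directly gives the stated weight and membership in $A_{k+1}$. Your write-up is more careful than the paper's in explicitly verifying via Observation~\ref{ob:imp:101} that $\ell(e)=k+1$ after Step~3 (and in noting that the paper's proof opens with ``$e\in\P\cap P_{k+1}$'', which is clearly a typo for ``$e\in\P\setminus P_{k+1}$'').
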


\begin{proof}
Consider any element $e \in \P \cap P_{k+1}$.  Such an element $e$ was processed under case (a) in Step 4. The claim follows from the description of that case.
\end{proof}

\medskip
\noindent {\bf Step 5:} Move each set $s \in \S' \setminus \S''$ down to level $k$, by setting $\ell(s) \leftarrow k$ for all $s \in \S' \setminus \S''$. Since the level of an element is defined to be the maximum level among all the sets it belongs to, some  elements  also move down to level $k$ along with the sets in $\S' \setminus \S''$. Let $X \subseteq E \cup D$ denote the subset of precisely those elements that move down to level $k$. Observation~\ref{ob:imp:101}, Claim~\ref{cl:new:201} and Claim~\ref{cl:new:202} imply that $X \subseteq \A \cup (\P \setminus P_{k+1}) = X \subseteq \A \cup (\P \cap A_{k+1})$. In other words, if an element $e$ moves down to level $k$ during this step, then it must be the case that: (a) $e$ is active right now, and (b) $e \in \A \cup \P$. Each element $e \in X$  becomes part of $A_k$. To ensure that Invariant~\ref{inv:element:weight:level:shadow} holds, we  set $w(e) \leftarrow (1+\epsilon)^{-k}$ for all  $e \in X$, and accordingly modify the weights $W^*(s)$ of the concerned sets in $\S' \setminus \S''$.

\medskip
\noindent {\bf Remark:} Consider a set $s \in \S' \setminus \S''$ that moved down to level $k$ in Step 5. It follows that at the end of Step 4, we had $W^*(s) < (1+\epsilon)^{-1}c_s$. During Step 5, we change the weights of some elements contained in $s$ from $(1+\epsilon)^{-(k+1)}$ to $(1+\epsilon)^{-k}$. This increases its weight $W^*(s)$ by at most a multiplicative factor of $(1+\epsilon)$. Thus, we continue to have $0 \leq W^*(s) \leq c_s$ at the end of Step 5 for all sets $s \in \S' \setminus \S''$.

\medskip
\noindent {\bf Summary of the situation after Step 5:}
To summarize, at the end of Step 5 we end up with the following situation. (a) Invariants~\ref{inv:element:weight:level:shadow},~\ref{inv:set:weight:level:shadow},~\ref{inv:element:level:shadow} hold for every element and set at level $\geq k+1$. (b) All the remaining, relevant sets $s \in \S' \setminus \S''$ are at level $k$ with weight $0 \leq W^*(s) \leq c_s$.  All the elements $e \in \E_{\leq k}$ are  at level $k$, with weight $w(e) = (1+\epsilon)^{-k}$. (c) There is no passive or dead element at level $\leq k$, that is, we have $P_{\leq k} = D_{\leq k} =  \emptyset$. (d) Finally, until this point the total time spent in the call to  {\sc Rebuild}($\leq k$) is proportional to $f$ times the number of elements in $\E_{\leq k} \cup D_{\leq k}$ in the beginning of the call (see Property~\ref{prop:rebuild:runtime}).  It now remains to fix the hierarchical partition at levels $\leq k$, by calling a subroutine that is very similar to the static algorithm from Appendix~\ref{app:sec:static}. 

\medskip
\noindent {\bf Step 6:} Call the subroutine FIX-LEVEL($k, \S'  \setminus \S'', \E_{\leq k}$). See Section~\ref{sec:efficient:preprocessing} for the details.


\medskip
\noindent {\bf Step 7:} For all levels $j \leq k$, set $\C_{\leq j} = \epsilon \cdot |\E_{\leq j}|$  (see Property~\ref{prop:rebuild:runtime}). 

\medskip
\noindent {\bf Step 8:} RETURN.

\subsection{Justifying Properties~\ref{prop:rebuild:invariant},~\ref{prop:rebuild:runtime} and~\ref{prop:rebuild:elements}}

\medskip
\noindent {\em Proof sketch for Property~\ref{prop:rebuild:invariant}:} Claim~\ref{cl:new:201}, Claim~\ref{cl:new:202} and the descriptions of Step 3 and Step 5 demonstrate that Invariant~\ref{inv:element:weight:level:shadow} holds at the end of Step 5. Further, it is easy to check that Invariant~\ref{inv:node:weight:shadow} and Invariant~\ref{inv:hyper-edge:shadow} hold for all the elements and sets at level $\geq k+1$ at the end of Step 5. Now, as explained in Section~\ref{sec:efficient:preprocessing}, the subroutine called in Step 6 simply gives a fast implementation of the static algorithm (see the Appendix) from level $k$ downward. This static algorithm moves all the slack sets  down to level $0$ (see Claim~\ref{app:prop:node:weight} in the Appendix).  This ensures that at the end of Step 6 all three invariants are satisfied.

\medskip
\noindent {\em Proof sketch for Property~\ref{prop:rebuild:runtime}:} The time taken to implement Steps 1 -- 5 is clearly proportional to $f$ times the size of the set $\E_{\leq k} \cup D_{\leq k}$ in the beginning of the call to REBUILD($\leq k$). The additive $O(\log (Cn)/\epsilon)$ term comes from the runtime analysis of the subroutine called in Step 6 (see Section~\ref{sec:efficient:preprocessing}). Finally, Step 7 ensures that $\C_{\leq j} = \epsilon \cdot |\E_{\leq j}|$ for all $j \in [0, k]$ at the end of the call to REBUILD($\leq k$).

\medskip
\noindent {\em Proof sketch for Property~\ref{prop:rebuild:elements}-(1):}  Follows from Step 2.

\medskip
\noindent {\em Proof sketch for Property~\ref{prop:rebuild:elements}-(2):} Claim~\ref{cl:new:201} and Claim~\ref{cl:new:202} imply the following: Consider any element $e$ that belonged to $P_{\leq k}$ just before the call to REBUILD($\leq k$). At the end of Step 4, either (a) the element $e$ is active or (b) the element $e$ is passive and it belongs to at least one set in $\S''$. Under case (a), it is easy to check that the element $e$ continues to remain active throughout the remainder of the call to REBUILD($\leq k$). Under case (b), Step 5 ensures that the element $e$ (along with the set in $\S''$ it belongs to) remains at level $(k+1)$ throughout the remainder of the call to REBUILD($\leq k$).

\medskip
\noindent {\em Proof sketch for Property~\ref{prop:rebuild:elements}-(3):} Note that every element $e \in \A$ continues to remain in $A_{\leq k+1}$ at the end of Step 5. This is because Steps 1 -- 5 do {\em not} ever change the state of an element from active to passive, Step 3 only moves some elements from level $\leq k$ to level $k+1$, and Step 5 again moves some of these elements back to level $k$. Finally, the subroutine in Step 6 never changes the state of an element from active to passive and never changes the level of any element at level $\geq k+1$ (see Section~\ref{sec:efficient:preprocessing}).

\medskip
\noindent {\em Proof sketch for Property~\ref{prop:rebuild:elements}-(4):} None of the Steps 1 -- 5 affects any element that was at level $\geq k+1$ just before the call to REBUILD($\leq k$). The same holds true for the subroutine in Step 6 (see Section~\ref{sec:efficient:preprocessing}).

\subsection{The subroutine FIX-LEVEL($k, \S', \E'$)}
\label{sec:efficient:preprocessing}

The most natural way to think about this subroutine is as follows. Suppose that we are executing the algorithm described in Appendix~\ref{app:sec:static}, and {\em so far we have have fixed everything above level $k$}. Let $\S'$ and $\E'$ be the remaining sets and elements whose levels are still undecided, meaning that every set $s \in \S \setminus \S'$ and every element $e \in \E \setminus \E'$ have already been assigned to some levels $> k$. {\bf Throughout the rest of this section, we assume that $|\S'| = m'$ and $|\E'| = n'$.} At the present moment, all the elements in $\E'$ have weight $(1+\epsilon)^{-k}$, because we know for sure that these elements will eventually get assigned to some level $\leq k$. Furthermore, at the present moment every set $s \in \S'$ has weight $W^*(s) \leq c_s$.\footnote{Note that in Appendix~\ref{app:sec:static} we denoted the weight of a set $s$ by $W(s)$, but here we are denoting its weight by $W^*(s)$. This is because here a part of the weight $W^*(s)$ might be coming from the weight of dead elements at level $> k$.} In order to ensure Property~\ref{prop:rebuild:runtime}, our task now is to construct the levels $\leq k$ of the hierarchical partition in $O(f n' + \log (Cn)/\epsilon)$ time.  If we follow the exact procedure in Appendix~\ref{app:sec:static}, then we will need $O(k \cdot f n')$ time for performing this task, because in that procedure we need to pay $O(f  n')$ time per level and we have to construct $k$ levels overall. Unfortunately, the resulting running time can  be as large as $\Omega(f n' \cdot \log (Cn)/\epsilon)$ because $k$ can be as large as $\Omega(L) = \Omega(\log (Cn)/\epsilon)$. The main challenge, therefore, is to come up with a procedure for our task that is much faster than the one described in Appendix~\ref{app:sec:static}.

Our new algorithm will maintain a partition of the collection of elements $\E'$  into two subsets: $\E'_{frozen} \subseteq \E'$  and $\E'_{alive} = \E' \setminus \E'_{frozen}$. Similarly, it will  maintain a partition of the collection of sets $\S'$ into two subsets: $\S'_{frozen} \subseteq \S'$ and $\S'_{alive} = \S' \setminus \S'_{frozen}$. In the very beginning, it will start by setting $\E' = \E'_{alive}$,  $\S' = \S'_{alive}$ and  $\E'_{frozen} = \S'_{frozen} = \emptyset$.  Intuitively, the levels of all the sets in $\S'$ and all the elements in $\E'$ are undecided in the beginning. Whenever the algorithm decides the final level of an alive set (resp. element), it will make the set (resp. element) frozen from that point onward.  

We next introduce the crucial notion of the {\em target level} $\ell_T(s)$ of a set $s \in \S'$. It is defined as follows.



If  $s \cap \E'_{alive} = \emptyset$, then  $\ell_T(s) = 0$. Otherwise, we define $\ell_T(s)$ to be the maximum level   $i \in [1, k]$ such that 
$$W^*(s) + \left((1+\epsilon)^{-i} - (1+\epsilon)^{-k}\right) \cdot \left| s \cap \E'_{alive} \right| \geq  \frac{c_s}{(1+\epsilon)}.$$

We now explain the intuition behind this definition. Fix any set $s \in \S'_{alive}$. Suppose that there is no other alive set  that shares a common alive element with $s$, i.e., $s \cap s' \cap \E'_{alive} = \emptyset$ for all $s' \in \S'_{alive} \setminus \{s\}$. If this is the case, then a moment's thought will reveal that the set $s$ will get assigned to the level $\ell_T(s)$   at the end of the algorithm in Appendix~\ref{app:sec:static}. This holds for the following reason: the set $s$ will keep decreasing its level and the alive elements in $s$ will keep increasing their weights in powers of $(1+\epsilon)$ until the weight of $s$ (given by $W^*(s)$ here) becomes larger than or equal to $(1+\epsilon)^{-1} c_s$. Also, note that if we move an alive element down from level $k$ to some level $i \leq k$, then its weight increases by $(1+\epsilon)^{-i} - (1+\epsilon)^{-k}$. Hence, by definition, $\ell_T(s)$ is the maximum level $i$ with the following property: If we move down the set $s$ (along with all the alive elements contained in $s$) to level $i$, then $W^*(s)$ becomes $\geq (1+\epsilon)^{-1} c_s$. So the set $s$ will get assigned to level $\ell_T(s)$ at the end of the algorithm in Appendix~\ref{app:sec:static}.

Unfortunately, the above argument does not hold if the set $s$ has some alive element $e$ in common with some other alive set $s'$. This is because in the algorithm described in Appendix~\ref{app:sec:static}, the set $s'$ can get assigned to a level $i' > \ell_T(s)$. This creates some problem in the argument above, because there we assumed that we can move  {\em all} the alive elements in $s$ down to level $\ell_T(s)$. But now, when the set $s'$ gets stuck at level $i' > \ell_T(s)$, it {\em enforces} that the element $e \in s'$ also gets stuck at level $i'$. In other words, we cannot move the element $e$ down all the way to level $\ell_T(s)$. So even after the set $s$ moves down to level $\ell_T(s)$, its weight $W^*(s)$ might still remain below the threshold $(1+\epsilon)^{-1} c_s$.

Nevertheless, we can still salvage the situation because of the following fact. Suppose that $s$ is an alive set with the maximum possible value of $\ell_T(s)$. Then the objection in the preceding paragraph does not apply to the set $s$.\footnote{Because in the paragraph we crucially relied on the fact that $i' > \ell_T(s)$, which will not be the case if $\ell_T(s) \geq \ell_T(s')$.} Accordingly, we pick the alive set $s$ with maximum possible value of $\ell_T(s)$, and then we move that set $s$ down to level $\ell_T(s)$, along with all the alive elements contained in $s$. Then we move the set $s$ from $\S'_{alive}$ to $\S'_{frozen}$ and also move all the ``relevant" elements $e \in s \cap \E'_{alive}$ from $\E'_{alive}$ to $\E'_{frozen}$. For every relevant element $e$, we next  visit all the alive sets $s''$ that contain $e$, and accordingly update their target levels $\ell_T(s'')$ in light of the facts that (a) the weight $W^*(s'')$ has increased by $(1+\epsilon)^{-i} - (1+\epsilon)^{-k}$ as the element $e$ has moved down from level $k$ to level $i$, and (b) the element $e$ is no longer alive.

Now comes a very crucial observation. Suppose that we pick an alive set $s$ that maximizes $\ell_T(s)$ and assign the set $s$ (and all the alive elements contained in $s$) to level $\ell_T(s)$, as described in the previous paragraph. {\em This can only decrease the value of $\ell_T(s'')$ for the other alive sets $s''$.} To see why this is the case, fix any other alive element $s'' \neq s'$ and suppose that we had $\ell_T(s'') = i''$ just before we decided to move the set $s$. At that point, we had $i'' \leq \ell_T(s)$ by definition of $s$. Hence, only the following situation can occur as the set $s$ moves down to level $\ell_T(s)$. Prior to this event, the set $s''$ felt that its weight $W^*(s'')$ will exceed the threshold $(1+\epsilon)^{-1}c_{s''}$ if it can move down to level $i''$ (along with all the alive elements in $s''$). But now, after the set $s$ settles at level $\ell_T(s)$, the set $s''$ realizes that it can no longer take some element $e'' \in s''$ (that was alive just before we moved $s$) all the way down to level $i''$, because $e''$ also belonged to $s$ and now it has become frozen at a higher level $\ell_T(s) > i''$ with a smaller weight $w(e'') < (1+\epsilon)^{-i''}$. Thus, at this point (after we have moved $s$), if the set $s''$ decides to go down to level $i''$, then its weight $W^*(s)$ can only be less than what  it would have been prior to the instant we moved $s$. In other words, while moving $s$ down to level $\ell_T(s)$, we can only decrease the value of $\ell_T(s'')$, for the set $s''$ will now need to go down to an even lower level in order to ensure that $W^*(s)$ exceeds the threshold $(1+\epsilon)^{-1}c_{s''}$.

This leads us to the following natural algorithm. {\bf  Consider an array  $\Gamma[0, \ldots, k]$, where $\Gamma[i] = \{ s \in \S'_{alive} : \ell_T(s) = i\}$ for each $i \in [0, k]$.} As far as data structures are concerned, we can store each entry $\Gamma[i]$ of this array as a doubly linked list. Using appropriate pointers, we can insert/delete a given set $s'$ in such a doubly linked list in $O(1)$ time. Now, the algorithm proceeds in rounds $i = k, \ldots, 0$. In the beginning of round $i$, we have $\Gamma[j] = \emptyset$ for all $j > i$. During round $i$, we repeatedly keep pulling out a set $s$ from $\Gamma[i]$ (this is a set which currently maximizes $\ell_T(s)$),  move the set $s$ -- along with all the elements in $\E'_{alive} \cap s$ -- down to level  $\ell_T(s)$, transfer the set $s$ from $\S'_{alive}$ to $\S'_{frozen}$, and also transfer all the elements $e \in s \cap \E'_{alive}$ from $\E'_{alive}$ to $\E'_{frozen}$. By the discussion above, these steps can only decrease the values of $\ell_T(s'')$ of the other alive sets $s''$. Accordingly, we continue to satisfy the invariant that $\Gamma[j] = \emptyset$ for all $j > i$. The current round ends when we have $\Gamma[i] = \emptyset$. At that point, we proceed with round $(i-1)$.

From the above discussion, it becomes clear that this algorithm produces exactly the same output as the algorithm described in Appendix~\ref{app:sec:static}. It now remains to bound the total runtime of this algorithm. The pseudocode of the algorithm appears in Figure~\ref{fig:new:preprocessing}.


   \begin{figure*}[htbp]
                        \centerline{\framebox{
                        \begin{minipage}{5.5in}
                        \begin{tabbing}                                                                                
                                                  01.  \ \=  {\sc Initialize:} $\S'_{alive} \leftarrow \S'$, $\S'_{frozen} \leftarrow \emptyset$, $\E'_{alive} \leftarrow \E'$, $\E'_{frozen} \leftarrow \emptyset$. \\
                                                  02. \>  Compute $\ell_T(s)$ for every set $s \in \S'_{alive}$, and  set up the array $\Gamma[0, \ldots, k]$ accordingly. \\
                                                  05. \> {\sc For} $i = k$ down to $0$: \\      
                                                  06. \> \ \ \ \ \ \= {\sc While} $\Gamma[i] \neq \emptyset$:  \\
                                                  07. \> \> \ \ \ \ \ \= Pick any  $s \in \Gamma[i]$. \\
                                                  08. \> \> \> $\Gamma[i] \leftarrow \Gamma[i] \setminus \{s\}$.         \\
                                                  09. \> \> \> $\ell(s) \leftarrow \ell_T(s)$. \\
                                                  10. \> \> \> Move the set $s$ from $\S'_{alive}$ to $\S'_{frozen}$. \\
                                                  11. \> \> \> {\sc For} every element $e \in \E'_{alive} \cap s$: \\
                                                  12. \> \> \> \ \ \ \  \=  $\ell(e) \leftarrow \ell_T(s)$. \\
                                                  13. \> \> \> \> $w(e) \leftarrow w(e) + (1+\epsilon)^{-\ell_T(s)} - (1+\epsilon)^{-k}$.   \ \ \  // We had $w(e) = (1+\epsilon)^{-k}$ before this step. \\
                                                  14.\> \> \> \> $W^*(s) \leftarrow W^*(s) + (1+\epsilon)^{-\ell_T(s)} - (1+\epsilon)^{-k}$. \\
                                                  15. \> \> \> \> Move the element $e$ from $\E'_{alive}$ to $\E'_{frozen}$. \\
                                                  16. \> \> \> \> {\sc For} every set $s' \in \S'_{alive}$ that contains $e$:  \\
                                                  17. \> \> \> \> \ \  \ \ \ \ \ \=$W^*(s') \leftarrow W^*(s')  + (1+\epsilon)^{-\ell_T(s)} - (1+\epsilon)^{-k}$. \\
                                                  18. \> \> \> \> \> Update the value of $\ell_T(s')$ and the array $\Gamma[0, \ldots, k]$ accordingly.                                            
                         \end{tabbing}
                         \end{minipage}
                         }}
                         \caption{\label{fig:new:preprocessing} The subroutine {\sc FIX-LEVEL}$(k, \S', \E')$.}
            \end{figure*}

\medskip
\noindent {\bf Runtime analysis:} Since each element is contained in at most $f$ sets, steps 12 -- 18 in Figure~\ref{fig:new:preprocessing} can be implemented in $O(f)$ time (note that step 18 can be implemented in $O(1)$ time). Furthermore, while executing steps 12 -- 18 on a given element $e$, we move the element from $\E'_{alive}$ to $\E'_{frozen}$. Hence, steps 12 -- 18 get executed on a given element $e$ at most once. Summing over all the $n'$ elements in $\E'$, the subroutine in Figure~\ref{fig:new:preprocessing} spends at most $O(f \cdot n')$ time on steps 11 -- 18.

Next, note that while executing steps 07 -- 10 on a given set $s$, the subroutine in Figure~\ref{fig:new:preprocessing} moves the set from $\S'_{alive}$ to $\S'_{frozen}$. Accordingly, steps 07 -- 10 get executed on a given set $s$ at most once. Since every such execution of steps 07 -- 10 takes $O(1)$ time and since there are $m'$ sets in $\S'$, we conclude that the subroutine overall spends $O(m')$ time on steps 06 -- 10.

Finally, the subroutine spends $O(k) \leq O(L) = O(\log (Cn)/\epsilon)$ time on step 05. Since $m' \leq f n'$, the total time required by the subroutine in Figure~\ref{fig:new:preprocessing} is at most $O(f n' + m' + \log (Cn)/\epsilon) = O(f n' + \log (Cn)/\epsilon)$.

\section*{Acknowledgement}  
We thank Xiaowei Wu for spotting an error in an earlier version of the paper. 

The research leading to these 
results has received funding from the European Research Council under 
the European Union's Seventh Framework Programme (FP/2007-2013) / ERC 
Grant Agreement no. 340506.   

This project has received funding from the European Research Council (ERC) under the European Union's Horizon 2020 research and innovation programme under grant agreement No 715672. Nanongkai was also supported by the Swedish Research Council (Reg. No. 2015-04659).


\bibliographystyle{plain}
\bibliography{IEEEabrv,bibliography}

\appendix

\section{A Static Primal-Dual Algorithm for Minimum Set Cover}
\label{app:sec:static}

\begin{figure*}[h]
	\centerline{\framebox{
			\begin{minipage}{5.5in}
				\begin{tabbing}                                                                                
					1. \  \= {\sc Initialize}: \= $L = \lceil \log_{(1+\epsilon)} (C \cdot n) \rceil + 1$, \\
					\> \>  $w(e) \leftarrow (1+\epsilon)^{-L}$ for all elements $e \in \E$. \\
					\> \>  $\ell(s) \leftarrow L$ for every set $s \in \S$.     \\
					\> \> $\ell(e) \leftarrow L$ for every element $e \in \E$.   \\
					2. \> {\sc For} rounds $t = L$ to $1$:   \\
					3. \> \ \ \  \ \ \ \= Let $\S^{(t)}_{slack} = \{ s \in \S : W(s) < (1+\epsilon)^{-1} c_s\}$ be the collection of sets that are {\em slack}    \\
					\> \> in the   beginning of round $t$, and let $\E^{(t)}_{slack} = \{ e \in \E : e \notin s \text{ for all } s \in \S \setminus \S^{(t)}_{slack}\}$ be the \\
					\> \>  collection of elements that are {\em exclusively} covered by the sets in $\S^{(t)}_{slack}$.    \\
					4. \> \>  {\sc For all} sets $s \in \S^{(t)}_{slack}$:          \\
					5. \> \> \ \ \ \ \ \  \ \ \ \= $\ell(s) = \ell(s) - 1$.      \\
					6. \> \> {\sc For all} elements $e \in \E^{(t)}_{slack}$: \\
					7. \> \> \> $w(e) = (1+\epsilon) \cdot w(e)$.    \\
					8. \> \> \> $\ell(e) = \ell(e) -1$. \qquad \qquad // This ensures that $\ell(e) = \max\{ \ell(s) : s \in \S \text{ and } e \in s\}$.   
				\end{tabbing}
			\end{minipage}
	}}
	\caption{\label{app:fig:static} A static $(1+\epsilon)f$-approximation algorithm for minimum set cover.} 
\end{figure*}

We use the same notations as in Section~\ref{sec:preprocessing}.  The discretized primal-dual algorithm proceeds in $L$ rounds (see Figure~\ref{app:fig:static}). In the beginning, we start by assigning a weight $w(e) = (1+\epsilon)^{-L} \leq (Cn)^{-1}$ to every element $e \in \E$, and a {\em level} $\ell(s) = \ell(e) = L$ to every set $s \in \S$ and every element $e \in \E$. Since each set in $\S$ contains at most $n$ elements, at this point in time we have $W(s) = \sum_{e \in s} w(e) \leq n \cdot (Cn)^{-1} = 1/C$ for all sets $s \in \S$. Hence, from~(\ref{eq:weight}) we infer that $0 \leq W(s) \leq c_s$ for all sets $s\in \S$.  In other words, we have a valid fractional packing at this point in time. Throughout the rest of this section, we say that a set $s \in \S$ is {\em tight} if $(1+\epsilon)^{-1} c_s \leq W(s) \leq c_s$, and {\em slack} if $0 \leq W(s) < (1+\epsilon)^{-1}c_s$. 

In the beginning,  we start with a counter $t = L$. The value of $t$  keeps decreasing by one until we reach $t = 0$. Each value of $t$ corresponds to a distinct {\em round} in the algorithm.   In each round $t$, we increase (by a $(1+\epsilon)$ factor)  the weights of the elements that are {\em exclusively} covered by the slack sets, and we decrease by one the levels of the concerned sets (that were slack in the beginning of the current round) and  elements (whose weights got increased in the current round). Thus, there are $L$ rounds overall, one for each  $t \in \{L, \ldots, 1\}$. At the end of round $1$, we  have  a {\em hierarchical partition} of  $\S$ into $L+1$ {\em levels} $\{L, \ldots, 0\}$.  We now describe  a few key properties that are satisfied at the end of the  algorithm in Figure~\ref{app:fig:static}.

\begin{claim}
	\label{app:prop:hyper-edge:weight}
	For every element $e \in \E$, we have  $\ell(e) = \max \{ \ell(s) : s \in \S, e \in s\}$ and $w(e) = (1+\epsilon)^{-\ell(e)}$.
\end{claim}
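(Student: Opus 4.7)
My plan is to establish both halves of the claim as a joint invariant that is preserved by the main loop of Figure~\ref{app:fig:static}, so that it continues to hold after the final round $t=1$ and thus at termination. The base case, immediately after the initialization in Step~1, is trivial: every element and every set sits at level $L$, and $w(e)=(1+\epsilon)^{-L}$, so both identities hold (using $\cup_{s\in\S}s=\E$ to guarantee that every element lies in at least one set).

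For the inductive step through round $t$, the weight half $w(e)=(1+\epsilon)^{-\ell(e)}$ requires essentially no work: Steps~7--8 multiply $w(e)$ by $(1+\epsilon)$ and decrement $\ell(e)$ in lockstep precisely for the elements of $\E^{(t)}_{slack}$, and leave all other elements untouched, so the identity is carried forward by one-step bookkeeping. All the real work is in showing that the level identity $\ell(e)=\max\{\ell(s):s\ni e\}$ is preserved. The case $e\in\E^{(t)}_{slack}$ is easy, since by definition every set containing such an $e$ lies in $\S^{(t)}_{slack}$ and hence has its level decremented along with $\ell(e)$; both sides drop by exactly~$1$.

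The delicate case is $e\notin\E^{(t)}_{slack}$, where $\ell(e)$ does not change but some sets containing $e$ can still be slack (and therefore decremented in the round). To handle this I intend to establish an auxiliary lemma: at the start of round $t$, every slack set lies at level exactly $t$, while every tight set lies at level $\geq t$. This is forced by the monotonicity of element weights: since weights only ever increase, a tight set stays tight forever after, and a set that is still slack at round $t$ must have been slack in every earlier round as well, so its level has been decremented from $L$ down to $t$ through exactly $L-t$ applications of Step~5. With this lemma in hand, pick by the inductive hypothesis some $s_0\ni e$ with $\ell(s_0)=\ell(e)$ evaluated at the start of the round: if $s_0$ is tight it is not decremented, and if $s_0$ is slack then $\ell(e)=\ell(s_0)=t$ by the lemma, in which case the tight set that must exist in $\{s:s\ni e\}$ (by $e\notin\E^{(t)}_{slack}$) has level $\geq t$ by the lemma and $\leq\ell(e)=t$ by the inductive hypothesis, hence exactly $t$, preserving the maximum on both sides under the updates of the round.

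The main obstacle is precisely this last subcase: one must rule out the pathological situation in which the maximizer at the start of round $t$ is slack while no tight set lies at the same level to ``take its place'' after the decrement. The auxiliary lemma pinning slack sets at level $t$ is exactly what excludes this, and its short monotonicity-based proof is the one conceptual step that makes the whole induction go through; everything else is routine.
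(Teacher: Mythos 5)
Your proof is correct and uses the same round-by-round induction that the paper's (explicitly sketched) proof uses. Where you add genuine content is the case $e \notin \E^{(t)}_{slack}$: the paper's sketch only checks the easy case $e \in \E^{(t)}_{slack}$, where every set containing $e$ is slack and everything drops in lockstep, and is silent on why the maximum survives when some slack set containing $e$ is decremented while $\ell(e)$ is not. Your auxiliary lemma --- at the start of round $t$ every slack set sits at level exactly $t$ while every tight set sits at level $\geq t$, both forced by weight monotonicity --- is precisely what rules out that pathology: the only way a slack set $s_0 \ni e$ can be a maximizer is if $\ell(e) = t$, in which case the tight set witnessing $e \notin \E^{(t)}_{slack}$ is pinned to level exactly $t$ and continues to hold up the maximum after the decrement. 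So the overall structure is the same as the paper's, but your version makes explicit the monotonicity fact that the paper's sketch silently leans on, and is complete where the paper's is not.
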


\begin{proof}(Sketch)
	Fix any element $e \in \E$. Initially,  every set and every element is assigned to level $L$. See step 1 in Figure~\ref{app:fig:static}. At this point, we clearly have $\ell(e) = \max\{ \ell(s) : s \in \S, e \in s\}$. Subsequently, during a given round $t$, we decrement the level of $e$ only if $e \in \E_{slack}^{(t)}$. See step 8 in Figure~\ref{app:fig:static}. Now, note that if  $e \in \E_{slack}^{(t)}$, then by definition every set  $s \in \S$ containing $e$ belongs to $\S_{slack}^{(t)}$. Furthermore, as is evident from step 5 in Figure~\ref{app:fig:static}, during round $t$ we also decrement the level of every set $s \in \S^{(t)}_{slack}$. Thus, we always have: $\ell(e) = \max\{ \ell(s) : s \in \S, e \in s\}$. 
	
	Next, note that as per  step 1 in Figure~\ref{app:fig:static}, we initially have $\ell(e) = L$ and $w(e) = (1+\epsilon)^{-L} = (1+\epsilon)^{-\ell(e)}$. Subsequently, whenever we increase  $w(e)$ by a factor of $(1+\epsilon)$, we also decrement the level $\ell(e)$ by one (see steps 7, 8 in Figure~\ref{app:fig:static}). Hence, it follows that $w(e) = (1+\epsilon)^{-\ell(e)}$ at the end of the algorithm.
\end{proof}

\begin{claim}
	\label{app:prop:hyper-edge}
	For every element $e \in \E$, at least one of the sets containing $e$ lies at level $\geq 1$ (i.e.,  $\ell(e) \geq 1$).
\end{claim}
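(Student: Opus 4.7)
The plan is to leverage Claim~\ref{app:prop:hyper-edge:weight}, which already tells us $\ell(e) = \max\{\ell(s) : s \in \S, e \in s\}$, so it suffices to prove $\ell(e) \geq 1$ for every $e \in \E$ at the end of the algorithm. Throughout the run, $\ell(e)$ can only decrease by $1$ per round (in step 8), so I would first make the easy observation that at the \emph{start} of round $t = 1$ we have $\ell(e) \geq L - (L-1) = 1$. Thus the only way $\ell(e)$ can drop to $0$ is by being decremented during round $1$, i.e., by having $e \in \E^{(1)}_{slack}$.

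My plan is to derive a contradiction from $e \in \E^{(1)}_{slack}$. By definition this means every set $s \in \S$ containing $e$ lies in $\S^{(1)}_{slack}$, i.e.\ satisfies $W(s) < (1+\epsilon)^{-1} c_s$ at the beginning of round $1$. Fix any such $s$. Since $e \in s$, the definition of $W(s)$ gives
\[
W(s) \;=\; \sum_{e' \in s} w(e') \;\geq\; w(e).
\]
At the beginning of round $1$ we have $\ell(e) = 1$ (otherwise $\ell(e) \geq 2 > 1$ and we are already done), so by Claim~\ref{app:prop:hyper-edge:weight} the weight is $w(e) = (1+\epsilon)^{-\ell(e)} = (1+\epsilon)^{-1}$. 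Combining this with $c_s \leq 1$ from (\ref{eq:weight}) yields
\[
W(s) \;\geq\; (1+\epsilon)^{-1} \;\geq\; (1+\epsilon)^{-1} c_s,
\]
contradicting $s \in \S^{(1)}_{slack}$. Hence $e \notin \E^{(1)}_{slack}$, step 8 is not applied to $e$ in round $1$, and $\ell(e)$ remains at least $1$ when the algorithm terminates.

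The key idea is the interaction between the weight normalization $c_s \leq 1$ from (\ref{eq:weight}) and the invariant $w(e) = (1+\epsilon)^{-\ell(e)}$ from Claim~\ref{app:prop:hyper-edge:weight}: at level $1$ a single element already contributes enough weight to make any containing set tight. The main (very mild) obstacle is merely bookkeeping to verify that $\ell(e) \geq 1$ at the start of round $1$; this follows by induction on rounds from the initialization $\ell(e) = L$ together with the fact that each round decrements $\ell(e)$ by at most one.
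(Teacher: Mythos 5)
Your proof is correct and follows essentially the same route as the paper's: assume for contradiction that $e$ drops to level $0$, observe this means $e \in \E^{(1)}_{slack}$, then use $w(e) = (1+\epsilon)^{-1}$ (from the weight/level invariant of Claim~\ref{app:prop:hyper-edge:weight}) together with $c_s \leq 1$ to show that every set containing $e$ already satisfies $W(s) \geq (1+\epsilon)^{-1} c_s$ at the start of round $1$, contradicting $e \in \E^{(1)}_{slack}$. The only cosmetic difference is that you make the bookkeeping ($\ell(e) \geq 1$ at the start of round $1$) explicit via an induction over rounds, whereas the paper states this more tersely as ``only elements in $\E^{(1)}_{slack}$ get moved to level $0$.''
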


\begin{proof}(Sketch)
	Fix any element $e \in \E$. For the sake of contradiction, suppose that $\ell(e) = 0$ at the end of the algorithm. This implies that $e \in \E^{(1)}_{slack}$ in the beginning of round $1$. This is because only the elements in $\E^{(1)}_{slack}$ get moved down to level $0$. Hence, as per the proof of Claim~\ref{app:prop:hyper-edge:weight}, we have $w(e) = (1+\epsilon)^{-1}$ in the beginning of round $1$. This means that every set $s \in \S$ containing the element $e$ has  $W(s) \geq w(e) \geq (1+\epsilon)^{-1} \geq (1+\epsilon)^{-1} c_s$ in the beginning of round $1$. The last inequality holds since  $c_s \leq 1$ for all sets $s \in \S$, as per~(\ref{eq:weight}). Accordingly, no  set   containing $e$ can be part of  $\S^{(1)}_{slack}$. This leads to a contradiction, for we  assumed that $e \in \E^{(1)}_{slack}$, which in turn means that  $e$ is exclusively covered by the sets from $\S^{(1)}_{slack}$.
\end{proof}

\begin{corollary}
	\label{app:cor:prop:hyper-edge}
	We have $\E^{(1)}_{slack} = \emptyset$.
\end{corollary}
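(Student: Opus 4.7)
\textbf{Proof proposal for Corollary~\ref{app:cor:prop:hyper-edge}.} The plan is to observe that the argument already carried out inside the proof of Claim~\ref{app:prop:hyper-edge} never actually uses the assumption ``$\ell(e) = 0$ at the end of the algorithm''; what it really establishes is the stronger statement that no element can belong to $\E^{(1)}_{slack}$ in the first place. So I would simply recast that argument as a direct proof by contradiction.

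Concretely, suppose for contradiction that some element $e \in \E^{(1)}_{slack}$. By definition, every set $s \in \S$ containing $e$ satisfies $W(s) < (1+\epsilon)^{-1} c_s$ at the beginning of round $1$. A key monotonicity observation is that $W(s)$ is non-decreasing across rounds (step 7 only ever multiplies element weights by $(1+\epsilon) \geq 1$, and no weight ever decreases). Hence any set that is slack in round $1$ must have been slack in every earlier round $t \in \{2,\dots,L\}$ as well. Thus for each such round $t$, every set containing $e$ lies in $\S^{(t)}_{slack}$, which means $e \in \E^{(t)}_{slack}$ in every round $t \geq 1$.

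Consequently, step~7 of Figure~\ref{app:fig:static} multiplies $w(e)$ by $(1+\epsilon)$ in each of rounds $L,L-1,\dots,2$. Since $w(e)$ starts at $(1+\epsilon)^{-L}$, we obtain $w(e) = (1+\epsilon)^{-L} \cdot (1+\epsilon)^{L-1} = (1+\epsilon)^{-1}$ at the beginning of round $1$. But then for any set $s$ with $e \in s$, we have $W(s) \geq w(e) = (1+\epsilon)^{-1} \geq (1+\epsilon)^{-1} c_s$, where the last inequality uses $c_s \leq 1$ from~\eqref{eq:weight}. This means $s$ is tight, contradicting our assumption that every set containing $e$ is slack in round~$1$. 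Therefore no such $e$ exists, and $\E^{(1)}_{slack} = \emptyset$.

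The only mild subtlety is the monotonicity of $W(s)$ across rounds, which I would state explicitly as a one-line justification before invoking it; everything else is a transcription of the contradiction already derived in Claim~\ref{app:prop:hyper-edge}. No new machinery is required.
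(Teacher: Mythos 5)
Your proposal is correct and follows essentially the same approach as the paper: the paper's proof of the corollary is literally ``Follows from the proof of Claim~\ref{app:prop:hyper-edge},'' and you correctly identify that the contradiction in that proof is really derived from the assumption $e \in \E^{(1)}_{slack}$ alone. The only cosmetic difference is that you justify $w(e) = (1+\epsilon)^{-1}$ at the start of round~$1$ by directly counting weight increments using monotonicity of $W(s)$, whereas the paper obtains the same fact by citing the invariant $w(e) = (1+\epsilon)^{-\ell(e)}$ from Claim~\ref{app:prop:hyper-edge:weight}; both are sound.
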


\begin{proof} Follows from the proof of Claim~\ref{app:prop:hyper-edge}.
\end{proof}

\begin{claim}
	\label{app:prop:node:weight}
	For every set $s \in \S$ we have:
	\begin{equation*}
	W(s) \in  \begin{cases}
	[(1+\epsilon)^{-1} c_s, c_s] & \text{ if } \ell(s) > 0; \\
	[0,(1+\epsilon)^{-1} c_s) & \text{ else if } \ell(s) = 0.
	\end{cases}
	\end{equation*}
\end{claim}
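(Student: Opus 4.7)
The plan is to analyze the level of a set $s$ by tracking in which rounds it was slack, and to exploit the key monotonicity fact that once a set becomes tight, its weight $W(s)$ stays frozen for the rest of the algorithm. First, I would establish the invariant that $W(s) \leq c_s$ holds throughout the execution. Initially $W(s) = n \cdot (1+\epsilon)^{-L} \leq 1/C \leq c_s$ by~\eqref{eq:weight}. Inductively, during round $t$, the weight $W(s)$ can only increase if some element $e \in s$ belongs to $\E^{(t)}_{slack}$, which by definition requires that every set containing $e$ (including $s$) lies in $\S^{(t)}_{slack}$, so $W(s) < (1+\epsilon)^{-1} c_s$ at the start of round $t$. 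Since the weight of each such element is multiplied by $(1+\epsilon)$, the new value is bounded by $(1+\epsilon) \cdot W(s) < c_s$.

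The next key observation is that once $s$ is tight (i.e., $W(s) \geq (1+\epsilon)^{-1} c_s$) at the beginning of some round $t$, it remains tight forever. Indeed, for any $e \in s$, the set $s$ containing $e$ is not slack, so $e \notin \E^{(t')}_{slack}$ for any $t' \leq t$, and hence $w(e)$ never gets increased from round $t$ onward. Thus $W(s)$ is frozen. Combined with the upper-bound invariant, this gives $(1+\epsilon)^{-1}c_s \leq W(s) \leq c_s$ at termination.

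Now I case on the final level. If $\ell(s) = k > 0$, then since $\ell(s)$ starts at $L$ and is decremented by $1$ precisely in each round where $s$ is slack (step 5), $s$ must have been non-slack in round $k$ (otherwise the level would drop below $k$). Therefore at the start of round $k$ we have $W(s) \geq (1+\epsilon)^{-1} c_s$, and by the freezing observation this is preserved until termination, yielding $W(s) \in [(1+\epsilon)^{-1} c_s, c_s]$. If instead $\ell(s) = 0$, then $s$ was slack in every round including round $1$, so $W(s) < (1+\epsilon)^{-1} c_s$ at the start of round $1$. By Corollary~\ref{app:cor:prop:hyper-edge}, $\E^{(1)}_{slack} = \emptyset$, so no element weights change during round $1$ and thus $W(s) < (1+\epsilon)^{-1} c_s$ at the end; the lower bound $W(s) \geq 0$ is immediate.

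The main subtlety, which would be the only nontrivial step, is the freezing argument: verifying that the contrapositive relating slackness of a set to exclusive-slack-coverage of its elements correctly implies no weight changes once $s$ is tight. Everything else is a routine bookkeeping of the round-based decrements, and the proof essentially reduces to combining this freezing fact with Corollary~\ref{app:cor:prop:hyper-edge} to handle the boundary level $0$.
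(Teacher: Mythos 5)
Your proof is correct and follows essentially the same approach as the paper's proof sketch: establish $W(s)\leq c_s$ by round-by-round induction (using that a weight increase of an element in $s$ forces $s\in\S^{(t)}_{slack}$, hence $(1+\epsilon)W(s)<c_s$), use the ``once tight, always tight'' monotonicity observation to get the lower bound when $\ell(s)>0$, and invoke Corollary~\ref{app:cor:prop:hyper-edge} to close the $\ell(s)=0$ case. Your writeup is somewhat more explicit than the paper's about why tightness is preserved and why $\ell(s)=k>0$ forces $s$ to be non-slack at round $k$; the paper leaves these points implicit, but the underlying argument is the same.
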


\begin{proof}(Sketch)
	Fix any set $s \in \S$. Initially, in the beginning of round $L$, we have $W(s) = \sum_{e \in s} w(e) \leq |\E| \cdot (Cn)^{-1} = 1/C$. From~(\ref{eq:weight}), we conclude that $0 \leq W(s) \leq c_s$ at this point in time. Subsequently, in each round $t \in \{L, \ldots, 1\}$, we keep moving down the set $s$ to level $t-1$ iff we have $W(s) < (1+\epsilon)^{-1}c_s$ in the beginning of the current round. Consider such a round $t$ where the set $s$ gets moved down to level $t-1$. In the beginning of round $t$, we had $W(s) < (1+\epsilon)^{-1}c_s$. During round $t$, we only increase (by a $(1+\epsilon)$ factor) the weights $w(e)$ of some of the elements $e$ contained in $s$. Thus, even at the end of round $t$, we have $W(s) \leq c_s$. This is sufficient for us to conclude that at the end of the algorithm, we have:  
	$$\text{(a) } W_s \leq c_s \text{ and  (b) } W_s \in [(1+\epsilon)^{-1} c_s, c_s] \text{ if } \ell(s) > 0.$$ 
	
	It remains now to consider the case where $\ell(s) = 0$ at the end of the algorithm. If this is the case, then we must have had $s \in \S^{(1)}_{slack}$, for only the sets in $\S^{(1)}_{slack}$ gets demoted to level $0$ during round $1$. Accordingly, we infer that $W_s < (1+\epsilon)^{-1} c_s$ in the beginning of round $1$. Now, by Corollary~\ref{app:cor:prop:hyper-edge}, we have $\E^{(1)}_{slack} = \emptyset$. In other words, no element changes its weight during round $1$. So the weight $W(s)$ of the set $s$ also remains unchanged during round $1$. We accordingly infer that $W(s) < (1+\epsilon)^{-1} c_s$ at the end of the algorithm.
\end{proof}

Property~\ref{prop:element:weight:level} follows from Claim~\ref{app:prop:hyper-edge:weight}. Property~\ref{prop:set:weight:level} follows from Claim~\ref{app:prop:node:weight}. Finally, Property~\ref{prop:element:level} follows from Claim~\ref{app:prop:hyper-edge} and Claim~\ref{app:prop:node:weight}.

\end{document}